\definecolor{gray}{rgb}{0.3, 0.3, 0.8}
\definecolor{light-gray}{gray}{0.95}
\definecolor{darkgreen}{rgb}{0.0,0.6,0.0}
\newcommand{\Ex}[2]{{\mathbb{E}}_{#1}\left[#2\right]}
\newcommand{\BP}{\mathbb{P}}
\newcommand{\BE}{\mathbb{E}}
\newcommand{\BR}{\mathbb{R}}
\newcommand{\bias}{\mathsf{bias}}
\newcommand{\BH}{\mathbf H}
\newcommand{\BHc}[1]{\mathbf{H}_{#1}}
\newcommand{\eps}{\varepsilon}
\newcommand{\comp}[1]{\overline{#1}}
\newcommand{\CA}{\mathcal A}
\newcommand{\CE}{\mathcal E}
\newcommand{\CZ}{\mathcal Z}
\newcommand{\CD}{\mathcal D}
\newcommand{\CB}{\mathcal B}
\newcommand{\CG}{\mathcal G}
\newcommand{\CS}{\mathcal S}
\newcommand{\CF}{\mathcal F}
\newcommand{\CJ}{\mathcal J}
\newcommand{\CT}{\mathcal T}
\newcommand{\FF}{\mathfrak F}
\newcommand{\CX}{\mathcal X}
\newcommand{\CY}{\mathcal Y}
\newcommand{\supp}{\mathsf{supp}}
\newcommand{\cp}{p}
\newcommand{\sh}{\mathsf h}
\newcommand{\bits}{\{0,1\}}
\newcommand{\Bias}{\mathsf{Bias}}
\newcommand{\str}{\{0,1\}}
\newcommand{\nrank}{\mathsf{rk}_+}
\newcommand{\Eps}{\mathcal E}
\newcommand{\sg}{\mathsf g}
\newcommand{\diag}{\mathsf{diag}}
\newcommand{\ind}{\mathbbm{1}}
\newcommand{\arr}{ - }
\providecommand{\Inf}[2]{\mathbf{I}\left(#1:#2\right)} 
\providecommand{\Infc}[4][]{\mathbf{I}_{#1}\left(#2:#3|#4\right)}
\title{Lower Bounds for Approximating the Matching Polytope}
\author{
Makrand Sinha\thanks{Supported by the National Science Foundation under agreements CCF-1149637, CCF-1420268 and CCF-1524251.}\\
\\\small{Paul G. Allen School of Computer Science \& Engineering,}\\
\small{University of Washington, Seattle}\\
\\{\small \tt makrand@cs.washington.edu}
}
\date{}
\begin{document}

\newtheorem{thm}{Theorem}[section]
\newtheorem{open}{Open Problem}[section]
\newtheorem{theorem}{Theorem}[section]
\newtheorem{claim}[thm]{Claim}
\newtheorem{subclaim}[thm]{Sub-claim}
\newtheorem{corollary}[thm]{Corollary}
\newtheorem{lemma}[thm]{Lemma}
\newtheorem{conj}[thm]{Conjecture}
\newtheorem{prop}[thm]{Proposition}
\newtheorem{proposition}[thm]{Proposition}
\newtheorem{defn}[thm]{Definition}
\numberwithin{equation}{section}
\numberwithin{figure}{section}

\captionsetup[subfigure]{subrefformat=simple,labelformat=simple,listformat=subsimple}
\renewcommand\thesubfigure{(\alph{subfigure})}

\maketitle

\begin{abstract}
	We prove that any extended formulation that approximates the matching polytope on $n$-vertex graphs up to a factor of $(1+\eps)$ for any $\frac2n \le \eps \le 1$ must have at least $\binom{n}{{\alpha}/{\eps}}$ defining inequalities where $0<\alpha<1$ is an absolute constant. This is tight as exhibited by the $(1+\eps)$ approximating linear program obtained by dropping the odd set constraints of size larger than $({1+\eps})/{\eps}$ from the description of the matching polytope. Previously, a tight lower bound of $2^{\Omega(n)}$ was only known for $\eps = O\left(\frac{1}{n}\right)$ \cite{R14, BP15} whereas for $\frac2n \le \eps \le 1$, the best lower bound was $2^{\Omega\left({1}/{\eps}\right)}$ \cite{R14}. The key new ingredient in our proof is a close connection to the non-negative rank of a lopsided version of the unique disjointness matrix.  
	
\end{abstract}

\section{Introduction}

In recent years, there has been a significant development in understanding whether well-known combinatorial optimization problems or polytopes can be expressed as a linear program with a small number of inequalities. The object of interest in this context has been the notion of \emph{extended formulation} of polytopes. A polytope $Q \subseteq \BR^{d'}$ is called an extended formulation of a given polytope $P \subseteq \BR^d$, if there is a linear map $\pi:\BR^{d'} \to \BR^d$ such that  $\pi(Q) = P$. The \emph{extension complexity} of $P$, $\mathsf{xc}(P)$ is defined to be the minimal number of facets in any extended formulation for $P$. If a polytope $P$ with an exponential number of facets has an extended formulation $Q$ of polynomial size, then one can solve an optimization problem over $P$ by writing a small linear program over $Q$. There are known examples of polytopes, such as the spanning tree polytope and the permutahedron (see Section 6 in \cite{CCGZ10}), which have an exponential number of facets but polynomial size extended formulations.

Yannakakis \cite{Y91} initiated the study of extended formulations motivated by refuting a purported $\mathsf{P} = \mathsf{NP}$ proof which encoded the Traveling Salesman problem as a polynomial size linear program. He proved that the extension complexity of polytopes is equal to the non-negative rank of the \emph{slack matrix} associated with the polytope. Using this connection he was able to show that any \emph{symmetric} extended formulation (one whose formulation is invariant under permutation of variables) that projects to the Traveling Salesman and matching polytopes in $\BR^{\binom{[n]}{2}}$ must have size $2^{\Omega(n)}$.

Here, we will be concerned with the matching polytope. The matching polytope $P_{MAT}(n) \subseteq \BR^{\binom{[n]}{2}}$ is defined as the convex hull of all matchings in the complete graph on the vertex set $[n]$ where $[n] := \{1,\dots,n\}$. We will just write $P_{MAT}$ when $n$ is clear from the context. Denoting the indicator vector for a matching $M$ by $\ind_M$, the facets of $P_{MAT}$ are completely described by the following degree and non-negativity constraints, and the \emph{odd set} inequalities, as shown in \cite{E65}:
	\begin{align*}
		\ P_{MAT}(n) &= \mathsf{conv}\left\{\ind_M \in \BR^{\binom{[n]}{2}} ~\bigg|~ M \subseteq \binom{[n]}{2} \text{ is a matching}\right\} \\
		\     &= \left\{x \in \BR^{\binom{[n]}{2}} ~\Bigg|~ \sum_{\substack{e \text{ incident}\\ \text{ on } i}} x_e \le1\; \forall i \in [n]; \sum_{\substack{e \text{ inside }\\ U}} x_e \le \frac{|U|-1}{2}\; \forall U \subseteq [n], |U| \text{ odd} ; x \ge 0\right\}.
	\end{align*}
	Note that the description of the matching polytope has $2^{\Omega(n)}$ facet defining inequalities. However, one can still optimize any linear function in polynomial time over the matching polytope using the algorithm of Edmonds \cite{E65}. The question of whether one could optimize over the matching polytope with a small extended formulation remained open until recently, when in a breakthrough result, Rothvo{\ss} \cite{R14} proved that the extension complexity of $P_{MAT}$ is indeed $2^{\Omega(n)}$.

	The central question that we want to answer here concerns whether the matching polytope can be approximated by a small extended formulation. Formally, we say that a polytope $P$ is monotone if for any $x \in P$ and $y$ satisfying $0 \le y \le x$, we have that $y \in P$. We call a polytope $K$ to be a $(1+\eps)$ approximation of a monotone polytope $P$ if $P \subseteq K \subseteq (1+\eps)P$. Notice that if $P$ is monotone, then the above is equivalent to requiring that for any non-negative vector $w$,
	\[ \max\{w^Tx ~|~x \in P\} \le \max\{w^Tx~|~x \in K\} \le (1+\eps)\max\{w^Tx~|~x\in P\}.\]

The matching polytope $P_{MAT}$ is indeed monotone and it is well-known\footnote{See Appendix \ref{sec:upperboundmat} for a proof.} that for any $\frac2n \le \eps \le 1$, the following polytope $Q_\eps$ with at most $\binom{n}{1+{1}/{\eps}} + O(n^2)$ facets approximates the matching polytope up to a factor of $1+\eps$ (note that for $\eps \le \frac1{n-1}$, $Q_{\eps}$ is the matching polytope itself): 
	\begin{align*}
		 \ Q_\eps(n) = \Bigg\{x \in \BR^{\binom{[n]}{2}} ~\bigg|~ & \sum_{\substack{e \text{ incident}\\ \text{ on } i}} x_e \le1\; \forall i \in [n];\;x \ge 0;\\
		 \                                       &\sum_{\substack{e \text{ inside }\\ U}} x_e \le \frac{|U|-1}{2}\; \forall U \subseteq [n], |U| \text{ odd},\;|U| \le \frac{1+\eps}{\eps}\Bigg\}.
	\end{align*}
	This gives us a \emph{polynomial type approximation scheme} ({PTAS}) style extended formulation (size at most $n^c$ for a constant $c = c(\eps)$ depending on $\eps$) that approximates the matching polytope. Building on the ideas of Rothvo{\ss} \cite{R14}, Braun and Pokutta \cite{BP15} showed that any extended formulation that approximates the matching polytope up to a factor of $1+O\left(\frac{1}{n}\right)$ has size $2^{\Omega(n)}$ ruling out a \emph{fully polynomial type approximation scheme} ({FPTAS}) type extended formulation (size polynomial in both $n$ and $\frac1{\eps}$) for matching. Rothvo{\ss} \cite{R14} observed that this already implies that for any $\frac2n\le\eps\le 1$, any $(1+\eps)$ approximating extended formulation must have $2^{\Omega\left({1}/{\eps}\right)}$ size.
	
	\begin{thm}[\cite{R14, BP15}]
		For any $\frac2n \le \eps \le 1$ and any polytope $P_{MAT} \subseteq K \subseteq (1+\eps)P_{MAT}$,
		\[\mathsf{xc}(K) \ge 2^{\Omega\left({1}/{\eps}\right)}.\] 
  \end{thm}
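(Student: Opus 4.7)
The plan is to derive this statement from the stronger Braun--Pokutta lower bound of $2^{\Omega(m)}$ for $(1+O(1/m))$-approximating the $m$-vertex matching polytope, by reducing a $(1+\eps)$-approximation of $P_{MAT}(n)$ to a $(1+\eps)$-approximation of $P_{MAT}(m)$ for an appropriately chosen $m = \Theta(1/\eps)$. This is the reduction attributed in the excerpt to Rothvo\ss.

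Let $c_0$ denote the constant for which \cite{BP15} gives the bound: any polytope $\widetilde K$ with $P_{MAT}(m) \subseteq \widetilde K \subseteq (1 + c_0/m) P_{MAT}(m)$ has $\xc(\widetilde K) \ge 2^{\Omega(m)}$. Given $\eps$ and $n$ with $\frac{2}{n} \le \eps \le 1$, set $m := \min\bigl(n,\lfloor c_0/\eps\rfloor\bigr)$, so that $\eps \le c_0/m$. The core step is a face-restriction. Identify $[m]$ with a fixed subset of $[n]$ and observe that $P_{MAT}(m)$ is exactly the face of $P_{MAT}(n)$ cut out by the equalities $x_e = 0$ for every $e \in \binom{[n]}{2}$ not contained in $[m]$, since those coordinates vanish on every matching of $[m]$ and no matching indicator of $[n]$ using such an edge survives. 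Given a polytope $K$ with $P_{MAT}(n) \subseteq K \subseteq (1+\eps)P_{MAT}(n)$, define $K'$ to be the same face of $K$ obtained by imposing $x_e = 0$ for $e \not\subseteq [m]$.

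The two inclusions $P_{MAT}(m) \subseteq K' \subseteq (1+\eps) P_{MAT}(m)$ then need to be verified, and both are essentially immediate: the first because every matching of $[m]$ is a matching of $[n]$ already satisfying the extra equalities; the second because any $x \in K'$ lies in $(1+\eps) P_{MAT}(n)$ and has $x_e = 0$ off $[m]$, so $x/(1+\eps)$ is a point of $P_{MAT}(n)$ supported inside $[m]$, hence of $P_{MAT}(m)$. Since taking a face cannot increase extension complexity, $\xc(K) \ge \xc(K')$. Applying the Braun--Pokutta bound to $K'$ yields $\xc(K) \ge 2^{\Omega(m)} = 2^{\Omega(1/\eps)}$ when $\eps \ge 2c_0/n$; in the remaining regime $\frac{2}{n} \le \eps \le 2c_0/n$ we have $1/\eps = \Theta(n)$ and the $2^{\Omega(n)}$ lower bound applies directly on $[n]$. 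The only delicate point is checking that face restriction commutes with scaling by $(1+\eps)$, which it does because both operations act coordinate-wise; apart from this, the argument is purely bookkeeping, which is why the main content of the lower bound lies entirely in the cited $(1+O(1/n))$-approximation theorem of \cite{BP15}.
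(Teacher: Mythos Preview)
Your argument is correct and is precisely the reduction the paper alludes to (the statement is cited from \cite{R14,BP15} and the paper only remarks that ``Rothvo{\ss} observed that this already implies'' the bound, without spelling out the face-restriction step). Your verification that $K'$ is genuinely a face of $K$---because $K\subseteq(1+\eps)P_{MAT}$ inherits the nonnegativity constraints $x_e\ge 0$---and that faces do not increase extension complexity, fills in exactly the details the paper omits.
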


	Note that the above implies a tight lower bound of $2^{\Omega(n)}$ for $\eps \le \frac2n$ but leaves a gap between the upper and lower bounds. Moreover, the gap gets larger as $\eps$ increases and in particular when $\eps = \Omega(1)$ we do not even have non-trivial lower bounds. 

	The above theorem is proven using the connection between extension complexity and non-negative rank of slack matrices that was established in the work of Yannakakis \cite{Y91} and subsequently extended by Braun, Fiorini, Pokutta and Steurer \cite{BFPS15} to handle approximations of polytopes. The lower bound on extension complexity above, then follows from a lower bound on the non-negative rank of the slack matrix associated with the matching polytope. 

	A matrix that is closely related to the matching slack matrix is the unique disjointness matrix. Let $\CY$ be the collection of all subsets of $[n]$. For a parameter $\rho \in [0,1]$, we say that a non-negative matrix $A \in \BR^{\CY \times \CY}$ is a unique disjointness matrix if 
\begin{equation}
	A_{xy}  = \begin{cases} \;\;\;\;1 \text {, if } |x \cap y| = 0 \text{ and }\\
	 			   		 \le 1-\rho \text{, if } |x\cap y|=1.
			   \end{cases}
	\label{eqn:udisj}
\end{equation}
	
Many recent lower bounds in extended formulations (see Section \ref{sec:history}) follow from a lower bound on the non-negative rank of such matrices. Starting with the breakthrough work of Fiorini, Massar, Pokutta, Tiwary and de Wolf \cite{FMPTW15}, a sequence of works \cite{BFPS15, BM13, BP16} proved that the non-negative rank of any unique disjointness matrix is $2^{\Omega(\rho n)}$. 

For this work, the matrix that will be of relevance is the lopsided version of the unique disjointness matrix where the rows are indexed by $k$-subsets of $[n]$ where $k \le \frac{n}2$, while the columns are indexed by all subsets of $[n]$. This is useful for us since it turns out that to prove an extension complexity lower bound for any $(1+\eps)$-approximation for the matching polytope it suffices to prove a lower bound on similar lopsided slack matrices. The rows of any such slack matrix (an example is exhibited by the slack matrix for the polytope $Q_{\eps}$) are indexed by odd sets of size at most $O(\frac{1}{\eps})$ and the columns are indexed by all possible matchings of which there are exponentially many in $n$.  

From the known lower bounds for the unique disjointness matrix, one could infer that the non-negative rank of the lopsided unique disjointness matrix must be $2^{\Omega(\rho k)}$. One could however, hope to improve this bound to $\approx\binom{n}{k}$ (which is much larger than $2^{\Omega(k)}$ when $k$ is small) at least for the case $\rho=1$ by making use of the lopsided structure. Such lopsided structure has previously been exploited in the setting of communication complexity to prove analogous lower bounds for lopsided versions of disjointness \cite{MNSW98, AIP06, P11, RR15}.  

\subsection{Our Results}

In this paper, we show that the simple upper bound exhibited by the polytope $Q_{\eps}$ defined above is tight in the sense that any extended formulation of a polytope that $(1+\eps)$ approximates $P_{MAT}$ must (roughly) have as many facets as $Q_{\eps}$.

\begin{restatable}{thm}{mat}
	\label{thm:mat}
	For any $\frac2n \le \eps \le 1$ and any polytope $P_{MAT} \subseteq K \subseteq (1+\eps)P_{MAT}$, it holds that
	\[\mathsf{xc}(K) \ge \binom{n}{{\alpha}/{\eps}},\]
  	where $0<\alpha<1$ is an absolute constant. 
\end{restatable}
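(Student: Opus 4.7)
The plan is to follow the standard route of lower bounding extension complexity via the non-negative rank of a slack-like matrix, and to reduce Theorem~\ref{thm:mat} to a lopsided analogue of the unique disjointness non-negative rank bound---the latter being where essentially all of the technical effort is concentrated.

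\emph{Step 1: Set up the approximate slack matrix.} By the approximate Yannakakis factorization theorem of Braun--Fiorini--Pokutta--Steurer, for any $K$ with $P_{MAT}\subseteq K\subseteq (1+\eps)P_{MAT}$ we have $\xc(K)\ge \nrank(S)$, where $S$ is the non-negative matrix with rows indexed by a valid inequality system for $(1+\eps)P_{MAT}$ and columns by vertices of $P_{MAT}$ (i.e.\ matchings $M$), and $S_{a,M}=b-a^\top \ind_M$. Taking the rows to be the scaled odd-set inequalities $\sum_{e\in\binom{U}{2}} x_e \le (1+\eps)\tfrac{|U|-1}{2}$ for odd $U\subseteq [n]$, the entries read $S_{U,M} = (1+\eps)\tfrac{|U|-1}{2} - |M\cap\binom{U}{2}|$.

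\emph{Step 2: Extract a lopsided unique disjointness sub-matrix.} Choose $k=\lfloor \alpha/\eps\rfloor$ for a small absolute constant $\alpha$, so that $\eps k\le \alpha<1$. Restrict $S$ to rows $U$ of size $2k+1$ and to columns $M$ that are matchings of size $k$ all of whose edges either lie inside $U$ or are disjoint from $U$; for such pairs, $S_{U,M}=(1+\eps)k - t$ with $t=|M\cap\binom{U}{2}|\in\{0,1,\dots,k\}$. After rescaling rows by $\tfrac{1}{1+\eps k}$, entries with $t=k-1$ (one edge of $M$ missing from $U$, so $M$ and $U$ ``barely intersect'') become $1$, while entries with $t=k$ (a perfect matching on $U\setminus\{v\}$) become $\tfrac{\eps k}{1+\eps k}\le 1-\rho$ with $\rho=\Theta(1)$. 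A natural encoding of a matching $M$ as a set of vertex-representatives exhibits this sub-matrix (after choosing $M$'s and $U$'s appropriately) as a lopsided unique disjointness matrix $A$ with $k$-subset rows, arbitrary-subset columns, and gap parameter $\rho$.

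\emph{Step 3: Non-negative rank of the lopsided matrix.} The main technical task is to show $\nrank(A)\ge \binom{n}{\alpha' k}$ for an absolute constant $\alpha'>0$. The plan is a rectangle-cover argument that exploits the asymmetry. Given a factorization $A=\sum_{i=1}^r u_i v_i^\top$, threshold each rank-one term to extract a combinatorial rectangle $R_i=X_i\times Y_i$ on which the term is non-negligible. These rectangles together must cover each disjoint pair with cumulative weight $\ge 1$, while keeping the weight on each unique-intersection pair below $1-\rho$. An averaging argument then produces a rectangle $X\times Y$ carrying an $\Omega(1/r)$ fraction of the disjoint pairs. A probabilistic argument on a uniformly random $y\in Y$ shows that if $|X|$ is large then many rows $x\in X$ have some $y\in Y$ with $|x\cap y|=1$ and the rectangle weight exceeding $1-\rho$ on the pair $(x,y)$, a contradiction. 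Quantitatively, this yields $|X|\le \binom{n-\beta k}{k}$ for a constant $\beta=\beta(\rho)>0$, hence $r\ge \binom{n}{k}/\binom{n-\beta k}{k}\ge \binom{n}{\alpha' k}$, completing the proof after absorbing constants into $\alpha$.

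\emph{Main obstacle.} The crux is the quantitative rectangle-width bound in Step 3. Existing non-negative rank lower bounds for (symmetric) unique disjointness, based on information-theoretic common-information or hyperplane-separation arguments, only give $2^{\Omega(\rho k)}$, which is nowhere near the target $\binom{n}{\alpha k}$ when $k=\Theta(1/\eps)$ is small relative to $n$. The improvement must come from the lopsided structure---rows live in the sparse family $\binom{[n]}{k}$ but columns range over the huge universe $2^{[n]}$, providing an abundance of unique-intersection witnesses that force any covering rectangle to be narrow on the row side. Adapting the corruption/richness-style lower bounds for lopsided set disjointness from communication complexity (MNSW, Patrascu, Razborov--Rao) to the non-negative factorization setting is where the real work lies, since one has to contend with continuous non-negative weights and with a soft unique-intersection slack constraint rather than a boolean disjoint-vs-intersecting dichotomy.
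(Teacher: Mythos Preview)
Your high-level route (approximate Yannakakis $\to$ non-negative rank of a slack-like matrix $\to$ exploit lopsidedness) is the right skeleton, but two of your three steps have genuine gaps compared to what the paper actually does.

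\textbf{Step 2 is not a valid reduction.} As written, your column restriction ``matchings $M$ of size $k$ all of whose edges either lie inside $U$ or are disjoint from $U$'' depends on the row $U$, so you are not extracting a submatrix at all. More substantively, the paper does \emph{not} reduce the matching slack matrix to lopsided unique disjointness by taking a submatrix. The analogy to UDISJ is only heuristic: after Rothvo\ss's random partition into blocks and a direct-sum conditioning, the cut $U$ plays the role of a singleton in $[t]$ and the matching $M$ plays the role of a subset of $[t]$, but the crucial probability one must upper bound is $\tfrac{2+\beta}{6+2\beta}\approx\tfrac13$, not $\tfrac12$. Lemma~\ref{lemma:main} alone (the UDISJ engine) only controls the ``$>\tfrac12+\gamma$'' mass, so the matching proof needs an additional combinatorial ingredient specific to matchings: the random partition symmetry combined with a $2$-intersecting-family bound (Lemma~\ref{lemma:intfam}) to kill the contribution of ``good'' partitions, plus a separate argument bounding the contribution of ``bad'' partitions by relating it to the tiny $\beta$-slack on $|\delta(U)\cap M|=1$ entries (Claims~\ref{claim:good} and~\ref{claim:bad}). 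None of this is visible in your Step~2, and a clean submatrix encoding of the matching slack as a lopsided UDISJ matrix with constant $\rho$ is not known.

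\textbf{Step 3 is a different and unproven approach.} The paper proves the lopsided UDISJ non-negative rank bound (Theorem~\ref{thm:udisj}) via the common-information framework: normalize the matrix to a distribution, take the Markov variable $R$, and use an entropy lemma (Lemma~\ref{lemma:main}) showing that if both $\BH(X\mid R,Y_X{=}0)$ and $\BH(Y\mid R,Y_X{=}0)$ are near-maximal then the biased set $\{(x,r):p(Y_x{=}0\mid r)\ge\tfrac{1+\gamma}{2}\}$ has small measure. Your proposal instead sketches a corruption/rectangle-width argument transplanted from randomized communication lower bounds. That is a legitimate alternative program, but the quantitative step you assert---``$|X|\le\binom{n-\beta k}{k}$ for any large-weight rectangle''---is exactly the hard part, and adapting richness/corruption from boolean covers to continuous non-negative factorizations with a soft $1-\rho$ constraint is not routine; you have not supplied the mechanism that forces narrow rows in the presence of real-valued weights. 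The paper's entropy route sidesteps this by working distributionally from the start.
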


Note that the above theorem also covers the case for which tight bounds were previously known: when $\eps \le \frac2n$, we get an asymptotically tight lower bound of $\binom{n}{\alpha n/2} = 2^{\Omega(n)}$.

We also prove a tight lower bound on the non-negative rank of the lopsided unique disjointness matrix. Let $k \le \frac{n}2$ and $\CX = \binom{[n]}{k}$ be the collection of $k$-subsets of $[n]$ and let $\CY$ be the collection of all subsets of $[n]$. For a parameter $\rho \in [0,1]$, let $A \in \BR^{\CX \times \CY}$ be any non-negative matrix satisfying \eqref{eqn:udisj}.

Then denoting by $\nrank(A)$ the nonnegative rank of $A$, we prove that: 
\begin{restatable}{thm}{udisj}
   \label{thm:udisj}
   For any $\frac{3\cdot1000^8}{\log(n/k)} \le \rho^8 \le 1$, it holds that $\displaystyle \nrank(A) \ge \binom{n}{\alpha\rho^8 k}$ with $0<\alpha<1$ an absolute constant. 
\end{restatable}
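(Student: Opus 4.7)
I would follow a rectangle covering strategy for non-negative rank, tailored to exploit the asymmetry between the row index set $\binom{[n]}{k}$ and the column index set $2^{[n]}$. Writing $A = \sum_{i=1}^{r} u_i v_i^{\top}$ with $u_i \in \BR_{\ge 0}^{\CX}$, $v_i \in \BR_{\ge 0}^{\CY}$ so that $r = \nrank(A)$, I would first restrict attention to columns of a single size $\ell$ — a parameter to be chosen so that $k\ell/n$ is a small polynomial in $\rho$ — and then set up two probability distributions on $\CX \times \CY$: the distribution $\mu$ uniform on pairs $(x,y)$ with $|y|=\ell$ and $|x \cap y|=0$, and the distribution $\nu$ uniform on pairs with $|y|=\ell$ and $|x \cap y|=1$. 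Since $A_{xy} = 1$ on $\supp(\mu)$ and $A_{xy} \le 1-\rho$ on $\supp(\nu)$, taking expectations of the rank-$r$ factorization yields
\[
\sum_{i=1}^{r} \Big(\BE_{\mu}[u_i(x) v_i(y)] - \BE_{\nu}[u_i(x) v_i(y)]\Big) \;\ge\; \rho.
\]
The constraint $\sum_i u_i v_i \le 1$ on $\supp(\mu) \cup \supp(\nu)$ means each rank-one summand is automatically $\le 1$ on this common support, so the task reduces to a per-rectangle discrepancy bound.

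The central lemma I would aim for is: for every non-negative rank-one function $f = u \otimes v$ that is $\le 1$ on $\supp(\mu) \cup \supp(\nu)$, one has
\[
\BE_{\mu}[f] - \BE_{\nu}[f] \;\le\; \gamma \;:=\; \binom{n}{\alpha\rho^8 k}^{-1},
\]
from which $r \ge \rho/\gamma \ge \binom{n}{\alpha'\rho^8 k}$ after adjusting the constant. I would establish this via a structure-versus-pseudorandomness dichotomy applied to the combinatorial rectangle $R = \supp(u) \times \supp(v) = X \times Y$. In the \emph{pseudorandom} case, every coordinate-marginal $\BP_{x \in X}[i \in x]$ is close to $k/n$ (and similarly $\BP_{y \in Y}[i \in y]$ close to $\ell/n$); a direct hypergeometric computation then shows that conditional on $R$ the two events $|x \cap y|=0$ and $|x \cap y|=1$ have comparable probability (this is where the choice of $\ell$ enters), so $\mu(R)$ and $\nu(R)$ are comparable and their difference meets the required bound. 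In the \emph{structured} case, some coordinate deviates appreciably from its unbiased frequency; subadditivity of entropy together with a Pinsker-type inequality then forces $\log|X| + \log|Y|$ to shrink by an amount quadratic in that deviation, which bounds $\mu(R)$ directly. The exponent $\rho^8$ arises from tracking the quadratic loss through Pinsker iterated across both the $x$- and $y$-marginals and across both first-moment and covariance-type controls on coordinate frequencies that must be imposed simultaneously.

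The main obstacle is the rectangle discrepancy lemma, and specifically extracting the sharp $\binom{n}{\rho^8 k}$ bound rather than the $2^{\Omega(\rho k)}$ bound one would obtain by directly adapting the existing non-negative rank arguments for the \emph{non-lopsided} unique disjointness matrix of \cite{FMPTW15, BFPS15, BM13, BP16}. The improvement requires genuinely exploiting the rich combinatorial structure of $\binom{[n]}{k}$: in the structured regime, one must convert entropy deficits into substantial combinatorial shrinkage — via a Kruskal--Katona or shifting-type estimate — so that structured rectangles really are smaller than the pseudorandom threshold, while in the pseudorandom regime one needs sharp concentration of the hypergeometric intersection size to make $\mu$ and $\nu$ close on $R$. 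Balancing the two regimes so that a polynomial $\rho^8$ suffices (rather than some larger power that would weaken the final matching-polytope bound) is the delicate technical step on which the whole argument rests.
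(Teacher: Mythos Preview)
Your route via a per-factor discrepancy bound is genuinely different from the paper's. The paper normalizes $A$ to a distribution $q(x,y)$, uses the common-information characterization to obtain a latent $R$ with $X\arr R\arr Y$ and $|\supp R|\le\nrank(A)$, and runs a direct-sum over a partition of $[n]$ into $k$ blocks of size $n/k$: conditioning so that each block carries one element of $X$, it shows (Lemma~\ref{lemma:base}) that each block forces $\Omega(\rho^8\log(n/k))$ bits of mutual information with $R$. That per-block bound is reduced to a one-sided bias lemma (Lemma~\ref{lemma:main}): if $X\in[m]$, $Y\in\{0,1\}^m$, $X\arr R\arr Y$, and both conditional entropies $\BH(X\mid R,Y_X{=}0)$ and $\BH(Y\mid R,Y_X{=}0)$ are near-maximal, then the set of $(x,r)$ with $p(Y_x{=}0\mid r)\ge\tfrac{1+\gamma}{2}$ has measure $O(\gamma)$. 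The lopsidedness is used exactly here, in the asymmetric entropy budget ($\log m$ on the $X$ side versus $m$ on the $Y$ side).

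Your proposal has a real gap at the step you yourself flag as the obstacle. In the ``pseudorandom'' branch, controlling coordinate marginals $\BP_{x\in X}[i\in x]\approx k/n$ (even together with pairwise covariances) does not make the law of $|x\cap y|$ on $X\times Y$ hypergeometric; there are large families in $\binom{[n]}{k}$ with balanced low-order marginals whose intersection statistics with a fixed $Y$ are far from those of a uniform $k$-set, so ``a direct hypergeometric computation'' is not available. In the ``structured'' branch, a single deviating coordinate buys only $\Theta(\delta^2)$ bits of entropy via Pinsker, whereas the target deficit is $\rho^{O(1)}k\log(n/k)$; you have not explained how many such coordinates you extract or how their combined deficit yields $\mu(R)\le\binom{n}{\alpha\rho^8 k}^{-1}$ --- Kruskal--Katona bounds unweighted shadows, not the $\mu$-measure of a weighted rank-one factor $u\otimes v$. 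More basically, your dichotomy is phrased for the $0/1$ rectangle $\supp(u)\times\supp(v)$, but $u,v$ carry arbitrary non-negative weights, and reducing to the unweighted case is itself nontrivial here. The paper sidesteps all of this by never isolating a single factor: the Markov-chain formulation keeps the global coupling through $R$ and derives the bound from an entropy--bias tradeoff on the $Y$ side, which is where the asymmetry between $k$-sets and arbitrary sets actually does work.
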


From known results (see Section \ref{sec:acorr}), proving a lower bound on the non-negative rank of any lopsided unique disjointness matrix also gives a lower bound on the extension complexity of a lopsided version of the correlation polytope. Formally, we have the following result for the polytope $P_{LCORR} \in \BR^{n\times n}$ defined as 
\[P_{LCORR} = \mathsf{conv}\left\{bb^T~\bigg|~b\in\bits^n, \sum_{i=1}^n b_i \le k\right\}.\]

\begin{corollary}
	\label{cor:acorr}
	Let $1\le\sigma^8\le\frac{1}{3\cdot1000^8}\log\left(\frac{n}{k}\right)$. For any polytope $P_{LCORR}(n) \subseteq K \subseteq \sigma P_{LCORR}(n)$, it holds that
	$\displaystyle \mathsf{xc}(K) \ge \binom{n}{\alpha k/\sigma^8}$ with $0<\alpha<1$ an absolute constant.
\end{corollary}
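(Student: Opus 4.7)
The plan is to derive the corollary by combining Theorem~\ref{thm:udisj} with the standard Braun--Fiorini--Pokutta--Steurer~\cite{BFPS15} reduction from approximate extension complexity to non-negative rank, specialized to the lopsided correlation polytope.

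For each $b \in \binom{[n]}{k}$ (viewed both as a set and as a $0/1$ vector), consider the quadratic inequality $f_b(X) := \langle 2\,\diag(b) - bb^T,\, X\rangle \leq 1$. Evaluated at a vertex $cc^T$ of $P_{LCORR}$, this gives
\[
    f_b(cc^T) \;=\; 2|b\cap c| - |b\cap c|^2 \;=\; |b\cap c|\bigl(2 - |b\cap c|\bigr),
\]
which equals $1$ when $|b\cap c| = 1$, equals $0$ when $|b\cap c| \in \{0, 2\}$, and is non-positive otherwise. Hence $\max_c f_b(cc^T) = 1$, so $f_b \leq 1$ is valid for $P_{LCORR}$. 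For any $K$ with $P_{LCORR} \subseteq K \subseteq \sigma P_{LCORR}$, the scaled inequality $f_b \leq \sigma$ is valid on $K$. By the approximate factorization theorem of~\cite{BFPS15}, $\mathsf{xc}(K)$ is at least $\nrank(S)$, where $S$ is the non-negative matrix with rows indexed by $b \in \binom{[n]}{k}$ and columns indexed by $c \in \bits^n$ with $|c| \leq k$, and entries
\[
    S_{bc} \;=\; \sigma - f_b(cc^T) \;=\; \sigma - |b \cap c|\bigl(2 - |b \cap c|\bigr).
\]
After rescaling by $1/\sigma$ (which preserves non-negative rank), the entries become $1$ when $|b\cap c| = 0$ and $1 - 1/\sigma$ when $|b\cap c| = 1$, exactly matching the definition of a lopsided unique disjointness matrix with parameter $\rho = 1/\sigma$.

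Applying Theorem~\ref{thm:udisj} with $\rho = 1/\sigma$, the hypothesis $\rho^8 \ge \frac{3\cdot 1000^8}{\log(n/k)}$ becomes the corollary's assumption $\sigma^8 \le \frac{1}{3 \cdot 1000^8}\log(n/k)$, while the trivial constraint $\rho^8 \le 1$ corresponds to $\sigma \ge 1$; and the conclusion $\nrank(A) \ge \binom{n}{\alpha \rho^8 k}$ yields $\mathsf{xc}(K) \ge \binom{n}{\alpha k/\sigma^8}$, as desired.

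The only real subtlety — and the place where the ``known results'' of Section~\ref{sec:acorr} do the work — is the mismatch between the column index set of $S$ (subsets $c$ of size at most $k$, corresponding to genuine vertices of $P_{LCORR}$) and the full column set $\CY = 2^{[n]}$ appearing in Theorem~\ref{thm:udisj}. One expects this to be resolved either by noting that the proof of Theorem~\ref{thm:udisj} can be arranged to use only columns indexed by subsets of size at most $k$, or by extending $S$ to all of $\bits^n$ via the same formula $\sigma - |b\cap c|(2-|b\cap c|)$ (which remains non-negative for every value of $|b\cap c|$ and preserves the unique-disjointness structure) and combining this with a small modification of the reduction allowing such virtual columns. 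Either way, the reduction is essentially cosmetic once Theorem~\ref{thm:udisj} is in hand.
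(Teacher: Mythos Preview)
Your reduction has a genuine gap, and neither of your proposed fixes works. By choosing the \emph{constraints} to be indexed by $b\in\binom{[n]}{k}$ while the \emph{vertex columns} of $P_{LCORR}$ are necessarily indexed by $c$ with $|c|\le k$, you produce a slack matrix in which \emph{both} sides are restricted to small sets. This is precisely the ``small-set unique disjointness'' matrix that Theorem~\ref{thm:udisj} does \emph{not} cover; the paper explicitly remarks (right after Lemma~\ref{lemma:main}) that the common-information method underlying Theorem~\ref{thm:udisj} fails in that symmetric situation. So your option~(a) is not available. Option~(b) fails for the opposite reason: appending virtual columns $c$ with $|c|>k$ can only \emph{increase} the non-negative rank, while Theorem~\ref{thm:xc} only upper-bounds $\nrank$ of the genuine slack matrix by $\xc(K)+1$; there is no modification of the BFPS reduction that lets you charge $\xc(K)$ against the rank of an enlarged matrix whose extra columns are not vertices of the inner polytope.

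The paper's resolution is simply to swap which side carries the size restriction. Take the outer polytope $Q=\{x:\langle 2\diag(a)-aa^T,x\rangle\le 1,\ a\in\bits^n\}$ with constraints ranging over \emph{all} $a\in\bits^n$; since $P_{LCORR}\subseteq Q$ one has $K\subseteq\sigma P_{LCORR}\subseteq\sigma Q$. The slack matrix $S^{P_{LCORR},\sigma Q}$ then has rows indexed by all subsets $a$ and columns indexed by the vertices $c$ with $|c|\le k$. Passing to the transpose and restricting to the rows with $|c|=k$ yields exactly the lopsided matrix of Theorem~\ref{thm:udisj}; since non-negative rank is preserved under transpose and only drops on submatrices, $\nrank(S^{P_{LCORR},\sigma Q})\ge\nrank(A)$ and the corollary follows. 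In short: the lopsidedness must come from the vertices of $P_{LCORR}$, not from artificially pruning the constraint set.
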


\subsection{Other Related Work} 
\label{sec:history}

Here we briefly survey some of the recent related lower bounds on the size of linear programs (LPs) and semidefinite programs (SDPs) for other polytopes and optimization problems. The progress in extension complexity lower bounds was jump-started by the breakthrough work of Fiorini, Massar, Pokutta, Tiwary and de Wolf \cite{FMPTW15} who related the extension complexity of the TSP and Correlation polytopes with the non-negative rank of unique disjointness matrix where the entries are $0$ if the sets intersect and proved exponential lower bounds for the latter. This implied an exponential lower bound on the extension complexity of the TSP polytope extending the results of Yannakakis \cite{Y91} which only applied to symmetric extended formulations. Via reductions established in \cite{AT13,PV13}, exponential lower bounds for extension complexity also follow for several other $\mathsf{NP}$-hard polytopes such as the $\mathsf{3SAT}$ and Knapsack polytopes.

Braun, Fiorini, Pokutta and Steurer \cite{BFPS15} defined a notion of approximation for combinatorial optimization problems in terms of extended formulations. They proved that even approximating the value of any linear objective function over the Correlation polytope requires extended formulations of exponential size by proving non-negative rank lower bounds for unique disjointness when the intersecting entries are $1-\rho$. Braverman and Moitra \cite{BM13} (see also \cite{BP16}) used information theoretic techniques to prove a tight lower bound of $2^{\Omega(\rho n)}$ on the non-negative rank of unique disjointness. An implication of this result was that any LP of size $2^{O(n^\eps)}$ for the convex hull of all cliques in all $n$-node graphs cannot achieve better than an $n^{1-\eps}$ integrality gap. This matches the algorithmic hardness of approximation result of H\aa stad \cite{H96}. Underlying all the works above is a lower bound on the non-negative rank of the unique disjointness matrix. These works, however, did not consider the lopsided version of the unique disjointness matrix. 

In a parallel line of work, Chan, Lee, Raghavendra and Steurer \cite{CLRS13} proved lower bounds on the size of linear programs for constraint satisfaction problems by relating it to known Sherali-Adams integrality gaps using techniques from Fourier analysis and convex optimization. For instance, they proved that any polynomial sized LP for Max-Cut on $n$-node graphs cannot beat the trivial approximation factor of $1/2$. Recently, Kothari, Meka and Raghavendra \cite{KMR17} improved their lower bounds to $2^{n^{c}}$ for some constant $0<c<1$. Lee, Raghavendra and Steurer \cite{LRS15} proved that similar situation arises for SDP relaxations: when it comes to maximum constraint satisfaction problems, SDPs of size polynomial size cannot perform much better than sums of squares relaxations of constant degree. For instance, this implies that for $\mathsf{MAX}$-$3$-$\mathsf{SAT}$, no SDP of polynomial size can beat the trivial approximation factor of $7/8$.

When it comes to approximations, the problems discussed above have exponential size LP lower bounds. The matching problem is very different in the sense that for any $\eps \ge \frac2n$, it has an LP of size $\approx \binom{n}{1/\eps}$ which achieves a $(1+\eps)$ approximation. We show that for the matching problem this is best possible. A similar situation arises for the Max-Knapsack problem for which Bienstock \cite{B08} gave an LP of size $\approx \binom{n}{1/\eps}$ which achieves a $(1-\eps)$ approximation for any $\eps \ge \frac2n$. An exponential lower bound for exact extended formulations for Max-Knapsack follows by a reduction from the unique disjointness matrix \cite{AT13, PV13} (also see \cite{GJW16} which proves a better lower bound by using a different construction). It is unclear how to extend the aforementioned reduction from unique disjointness to prove a strong lower bound for any $(1-\eps)$ approximation for Max-Knapsack. A lower bound even in the case of $(1-\frac1n)$ approximation remains an interesting open problem.

\subsection{Organization}

In Section \ref{sec:techniques} we give the high-level intuition for our results before we delve into technical details. Section \ref{sec:slack} discusses the slack matrix for the Matching and Lopsided Correlation polytopes. Section \ref{sec:preliminaries} contains the basic notation and preliminaries. In Section \ref{sec:lowerbounds} we prove lower bounds on the non-negative rank of the lopsided unique disjointness matrix and the matching slack matrix. Section \ref{sec:mainlemmaproof} proves a key technical lemma.

\section{Overview of Techniques}
\label{sec:techniques}

To give an outline of the proofs, we will assume basic familiarity with the definition of entropy of discrete random variables. Random variables will be denoted by capital letters and the values attained by them will be denoted by smaller letters. We will write $p(x)$ to denote the distribution of $X$ as well as the probability of the event $X=x$ in the probability space $p$, where the meaning will be clear from context. See section \ref{sec:preliminaries} for notational conventions and preliminaries.

Our proof is based on the \emph{common information} approach used by Braun and Pokutta \cite{BP15,BP16}. In this approach, we view a non-negative matrix $L$ indexed by $x$ and $y$ as a probability distribution $p(x,y)$ by normalizing by the total weight of the matrix. Let $X$ and $Y$ be sampled from $p(x,y)$. If $\nrank(L)=r$, then the corresponding distribution $p(x,y)$ is a convex combination of $r$ product distributions given by the non-negative rank one factors. Viewed this way, a non-negative factorization of $L$ gives us a random variable $R$ with support of size $\log \nrank(L)$ that then breaks the dependency between $X$ and $Y$. In other words, $X$ and $Y$ are independent if we know the value of $R$ (see Section \ref{sec:preliminaries}) or formally, $X, R, Y$ form a Markov chain (denoted by $X \arr R \arr Y$).

 \subsection{Lopsided Unique Disjointness}

To give intuition behind the proof of Theorem \ref{thm:udisj}, let us take a look at what the corresponding distribution looks like when the matrix we start with is the lopsided unique disjointness matrix $A$ given by \eqref{eqn:udisj}. Let $X'$ and $Y'$ be the random variables sampled from the distribution obtained by normalizing $A$. Then $X'$ is a random subset of $[n]$ of size at most $k$ and $Y'$ is an arbitrary random subset of $[n]$. It turns out that by using a direct-sum argument (dividing the universe $[n]$ into blocks of size $[\frac{n}{k}]$) and appropriate conditioning we can reduce our question to a problem about random subsets $X$ and $Y$ of the universe $[\frac{n}{k}]$. 
 
In the reduced problem, the set $X \subseteq [\frac{n}{k}]$ is a random subset of size one and $Y \subseteq [\frac{n}{k}]$ is an arbitrary random set while the probability that $X \cap Y = \varnothing$ is at least $\frac12 + \Omega(\rho)$. We will argue that if the non-negative rank was in fact small, then the probability that $X \cap Y = \varnothing$ can not be much larger than $\frac12$ deriving a contradiction. 

In the reduced problem, the entropies of $X$ and $Y$ are given by $\BH(X|X\cap Y=\varnothing) = \log\left(\frac{n}k\right) - O(1)$ and $\BH(Y|X\cap Y=\varnothing) = \frac{n}k - O(1)$. If $\log \nrank(A) \ll \gamma^8 k \log\left(\frac{n}k\right)$ for a small constant $\gamma$, then it turns out that we get a random variable $R$ such that $X - R -Y$ and even after conditioning on $R$ the entropies remain large:
$\BH(X|R, X\cap Y=\varnothing) \ge \log\left(\frac{n}k\right)  -\gamma^8 \log\left(\frac{n}k\right) - O(1)$ and $\BH(Y|R, X\cap Y=\varnothing) \ge \frac{n}{k} - \gamma^8 \log\left(\frac{n}k\right) - O(1)$. Note that the conditioning on $X\cap Y=\varnothing$ is needed to carry out the direct-sum argument and is quite essential.

To prove a non-negative rank lower bound for lopsided unique disjointness, we exploit the lopsided structure to prove the following key technical lemma which intuitively says that for most values of $R$, the probability that the event $X \cap Y=\varnothing$ happens conditioned on $R$ is smaller than $\frac12 + \gamma$. In the following lemma, we view the set $X$ as an element of $[\frac{n}{k}]$ while we view $Y \in \bits^{\frac{n}k}$ as an indicator vector for a subset of $[\frac{n}{k}]$, so $X\cap Y=\varnothing$ is equivalent to the event that $Y_X=0$. 

\begin{restatable}{lemma}{main}
	\label{lemma:main}
	Let $m$ be a large enough integer, $X \in [m], Y \in \bits^m$ and $R$ be random variables with distribution $p(xyr)$ such that $X \arr R \arr Y$. For any $\gamma$ satisfying $\frac{3}{\log m} \le \gamma^8 \le \frac1{2^{64}}$ define $\CB = \{(x,r)|p(Y_x=0|r) \ge \frac{1+\gamma}2\}$. If 
	\[ \BH(X|R,Y_X=0) \ge (1-\gamma^8)\log m -3  \;\text{    and     }\; \BH(Y|R,Y_X=0) \ge m-\gamma^8 \log m - 3,\]
	then, $p((x,r) \in \CB) \le 64\gamma$.
\end{restatable}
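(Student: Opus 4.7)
The plan is to use Markov's inequality to reduce to a second-moment bound on the bias $p(Y_X = 0 \mid R) - 1/2$ and to control that bound using subadditivity of entropy together with the Bernoulli inequality $1 - h(q) \ge 2(q-1/2)^2 / \ln 2$.

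First I would condition on $Z := Y_X$. Since $\CB$ is exactly the event that $p(Z = 0 \mid X, R) \ge (1+\gamma)/2$ (by the Markov chain $X \arr R \arr Y$, one has $p(Z=0 \mid X=x, R=r) = p(Y_x = 0 \mid r)$), a Bayes manipulation yields
\[
p((X, R) \in \CB) \;\le\; \frac{2\, p(Z=0)}{1+\gamma}\, p((X, R) \in \CB \mid Z = 0) \;\le\; 2\, p((X, R) \in \CB \mid Z = 0),
\]
so it is enough to show $p((X, R) \in \CB \mid Z = 0) \le 32 \gamma$.

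Working inside the conditional world, subadditivity $\BH(Y \mid R, Z=0) \le \sum_x \BH(Y_x \mid R, Z=0)$ together with the Bernoulli inequality produces
\[
\sum_{x \in [m]} \BE_{R \mid Z=0}\!\left[\left(p(Y_x = 0 \mid R, Z=0) - \tfrac12\right)^2\right] \;=\; O(\gamma^8 \log m).
\]
A Markov-type bound then forces the ``conditional bias'' set $S_r^{(0)} := \{x : |p(Y_x=0 \mid r, Z=0) - \tfrac12| \ge \gamma/4\}$ to have small expected size under $R \mid Z=0$. I would next use the entropy bound on $X \mid R, Z=0$ to bound $p(X \in S_R^{(0)} \mid R, Z=0)$, and use the Markov chain to compare the conditional marginal $p(Y_x = 0 \mid r, Z=0)$ with the unconditional marginal $p(Y_x = 0 \mid r)$ appearing in the definition of $\CB$; a direct calculation shows the two differ by an $O(1/m)$-type shift since conditioning on $Z = 0$ fixes only a single coordinate of $Y$.

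The main obstacle is the last step: translating the bound on $\BE_{R \mid Z = 0}[|S_R^{(0)}|]$ (a uniform-over-$x$ average) into a bound on $\BE_{(X, R) \mid Z = 0}[\ind[X \in S_R^{(0)}]]$ (weighted by the actual marginal of $X$ given $R, Z = 0$). Because the entropy deficit $\log m - \BH(X \mid R, Z = 0)$ can be as large as $\gamma^8 \log m + 3 \ge 6$ in the regime of the lemma, a direct Pinsker bound on the TV distance between $p(X \mid R, Z=0)$ and the uniform distribution is vacuous. Closing this gap requires combining both conditional entropy hypotheses simultaneously and exploiting the extreme asymmetry between the supports of $X$ (size $m$) and $Y$ (size $2^m$) — for instance by restricting to a subset of $R$-values that are jointly typical for both $X \mid R, Z=0$ and $Y \mid R, Z=0$, and handling the remaining ``bad'' $R$-values by a union bound.
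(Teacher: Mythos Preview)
The step you flag as the main obstacle—converting the bound on $\BE[|S_R^{(0)}|]$ into a bound on $p(X \in S_R^{(0)} \mid Z=0)$—can in fact be handled along the lines you sketch: for $r$ with $\BH(X \mid r, Z=0) \ge (1-\gamma^4)\log m$ and $|S_r^{(0)}| \le O(\gamma^2 \log m)$, the elementary ``small set versus large entropy'' estimate $p(X\in S)\le (a+1)/\log(m/|S|)$ gives $p(X \in S_r^{(0)} \mid r, Z=0) = O(\gamma^4)$, and such $r$ carry all but $O(\gamma^4)$ of the conditional mass by Markov. So this is not where the argument breaks.

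The genuine gap is the step you dismiss as an ``$O(1/m)$-type shift''. Conditioning on $Z = Y_X = 0$ does \emph{not} fix a single coordinate of $Y$: since $X$ is random given $r$, the event $Z=0$ is the mixture $\sum_{x'} p(X=x'\mid r)\,\ind[Y_{x'}=0]$, and a direct calculation gives
\[
p(Y_x=0 \mid r, Z=0) \;=\; p(Y_x=0 \mid r)\cdot \frac{\sum_{x'} p(X=x'\mid r)\,p(Y_{x'}=0 \mid r,\, Y_x=0)}{\sum_{x'} p(X=x'\mid r)\,p(Y_{x'}=0 \mid r)},
\]
so the shift is governed by the pairwise correlations among the coordinates of $Y$ given $r$, about which nothing is assumed. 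There is thus no reason for $\CB_r = \{x: p(Y_x=0\mid r)\ge \tfrac{1+\gamma}{2}\}$ to sit inside $S_r^{(0)}$, and your subadditivity bound—which only controls the \emph{conditional} biases $p(Y_x=0\mid r, Z=0)$—says nothing about $\CB_r$. Nor can one recover from $\BH(Y\mid r, Z=0)\ge m - O(\gamma^8\log m)$ a lower bound on the unconditional $\BH(Y\mid r)$ strong enough to control the unconditional biases directly. The paper avoids this issue altogether: after fixing a good $r$ it works in the product measure $q(xy)=p(xy\mid r)$ and builds a rectangle $\CS\times\CT$ in the \emph{unconditional} space, with $\CS\subseteq\CS_r$, on which $\BE[\ind[Y_X=0]-\ind[Y_X=1]]\ge\gamma/2$. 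The $Z=0$ conditioning is used only to certify, via the entropy hypotheses, that $q(x\mid X\in\CS)$ is pointwise $\le 2^{2/\gamma^2}/m^{1-\gamma^2}$ and $\BH_q(Y\mid Y\in\CT)\ge m-\gamma^2\log m - 2/\gamma^2$; the contradiction (large bias plus spread-out $X$ forces many biased coordinates of $Y$, hence small $\BH_q(Y\mid Y\in\CT)$) is then derived entirely in the unconditional product world.
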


With a little work, one could conclude from the above lemma that if the non-negative rank of $A$ was small, then the probability of the event $X \cap Y = \varnothing$ is at most $\frac12 + O(\gamma)$. Choosing $\gamma$ to be sufficiently small, we can ensure that this probability is much smaller than $\frac12 + \Omega(\rho)$, and so, we derive a contradiction to our initial assumption, that the non-negative rank of the lopsided unique disjointness matrix is small.

To understand Lemma \ref{lemma:main} in more detail, it is worthwhile to consider some simple examples. If it was the case that $\BH(X|R,Y_X=0) \ge \log m - \alpha$ and $\BH(Y|R,Y_X=0) \ge m - \alpha$ where $\alpha \ll 1$, then using Pinsker's inequality, conditioned on the event $Y_X=0$ and most values of $R$, $X$ and $Y$ are close to uniform in statistical distance. Hence, the probability that $x,r$ is such that $p(Y_x=0|r)$ is significantly larger than $\frac12$ is small. Lower bounds on the non-negative rank of the standard unique disjointness matrix (rows and columns indexed by all possible subsets of $[n]$) essentially follow from a variation of this argument since in those cases the entropy loss is small enough so that we can say that the distributions of $X$ and $Y$ (conditioned on $R$ and the event $Y_X =0$) are close to uniform in statistical distance. 

In our case, however, the entropy loss is large enough so that the distributions are quite far from uniform in statistical distance. Let us try to construct an example where the entropy loss is larger. Let $R$ be a random subset of $[m]$ of size $m^{1-\alpha}$ and $T \subseteq R$ be a random subset of size $\alpha\log m$. $X$ will be a uniform index chosen from the subset $R$ while the string $Y$ is chosen uniformly conditioned on the event that the $Y_i=0$ for every $i \in T$. In this case, $R$ is a random variable that breaks the dependency between $X$ and $Y$. Also, we have that $\BH(X|R,Y_X=0)=(1-\alpha)\log m$ and $\BH(Y|R,Y_X=0) = m - \alpha \log m$, so the entropy loss is of the same order as in the assumptions of Lemma \ref{lemma:main}. Here, the event that $x,r$ is such that $p(Y_x=0|r)$ is significantly larger than $\frac12$ occurs only when $x \in T$, and hence, the probability of such $x,r$ is at most $|T|/|R|\le \frac{\alpha \log m}{m^{1-\alpha}}$.

Generalizing the intuition gained from the example given above, the proof of Lemma \ref{lemma:main} proceeds by showing that if the measure of $x,r$ such that $p(Y_x=0|r) \ge \frac{1+\gamma}{2}$ is large, as well as the entropy $\BH(X|R,Y_X=0)$ is large, then for most values of $R$, there is a large set of coordinates of $Y$ that must be very biased given $R$, and hence the entropy $\BH(Y|R,Y_X=0)$ must be small. This intuition is borrowed from the lower bounds on lopsided disjointness in communication complexity \cite{P11, RR15}, but since the setting of non-negative rank is different, the technical details involved for converting this intuition into proof are more challenging here. 

Before moving on, we stress two key points about Lemma \ref{lemma:main}. Firstly, given the assumptions on entropy here, one may hope to say that the probability that $x,r$ is such that $p(Y_x=0|R=r) \le \frac{1-\gamma}2$ must also be small. However, this is not true $-$ the lemma is one-sided and it is fairly easy to construct examples where this is not the case. And secondly, the lopsided structure is crucial in Lemma \ref{lemma:main}. Such a lemma is not true if one considers the case where $Y$ is also a random subset of $[m]$ of size one satisfying $\BH(Y|R,X\cap Y=\varnothing) \ge (1-\gamma^8)\log m-O(1)$ for a constant $\gamma > 0$. So, even though one could hope that the non-negative rank of the small set unique disjointness matrix (where we restrict both rows and columns to be indexed by sets of size at most $k$) is $\approx\binom{n}{k}$, a common information based approach, as used here, will be unable to prove this. 

\subsection{Matching Slack Matrix}
\label{sec:matsketch}

Recalling the connection between extension complexity and non-negative rank of slack matrices, it turns out that to prove Theorem \ref{thm:mat} it is sufficient to prove a lower bound on the non-negative rank of an appropriate slack matrix associated with the matching polytope (see Section \ref{sec:slack}). For the sake of providing intuition, we will work with a slightly simpler slack matrix $S$ in this section. The rows of this slack matrix are indexed by odd cuts (subsets) of $[2n+6]$ of size at most $1/\eps$ and the columns are indexed by perfect matchings in the complete graph on the vertex set $[2n+6]$. The entry corresponding to cut $u$ and perfect matching $m$ is $S_{um}=|\delta(u)\cap m|-1$ where $\delta(u)$ is the set of edges of the complete graph on $[2n+6]$ crossing $u$ (edges with exactly one end point in $u$). The true slack matrix whose non-negative rank we need to bound to prove Theorem \ref{thm:mat} is a noisy version of this slack matrix where a small constant $\beta > 0$ is added to every entry.  

Using a direct-sum argument with appropriate conditioning (dividing vertices into $1/\eps$ chunks of size $O(\eps n)$ each), we can reduce our problem to a question about a random odd cut $U$ and a random perfect matching $M$ in a graph with $2t + 6$ vertices where $t := \eps n$. Furthermore, the non-negative rank decomposition gives a random variable $R$ such that $U \arr R \arr M$ where the size of support of $R$ is $\log \nrank(S)$. Denoting by $q(u,m,r)$ the distribution of $U,M,R$, it turns out that the probability $q(U=u, M=m)$ is proportional to $|\delta(u) \cap m|-1$. In particular, if only one edge of the matching $m$ crosses the cut $u$, then it has probability zero under the distribution $q$.

To describe the high-level idea of the proof we need some notation. We first fix an arbitrary perfect matching $\CA$ in the graph and an arbitrary cut $\CZ$ that cuts all edges of $\CA$. We pick a uniformly random partition $B = (B_0, B_1, \dots, B_t)$ of the set $\CZ$ such that $|B_0|=3$ and $|B_i|=2$ for each $i \in [t]$. We call a cut $U$ \emph{consistent} if $U = B_0 \cup B_j$ for some $j \in [t]$ and note that the size of any such cut is always $5$. We call a perfect matching $M$ \emph{consistent} if $M$ always includes the edges of $\CA$ touching $B_0$ and inside the other blocks $B_j$ for $j \in [t]$, either $M$ includes the edges of $\CA$ or $M$ matches the block $B_j$ to itself and the neighbors of $B_j$ under $\CA$ to itself (see Figure \ref{fig:basicpartition}). 

\begin{figure*}[h!]
    \centering
     \includegraphics[height=1.4in]{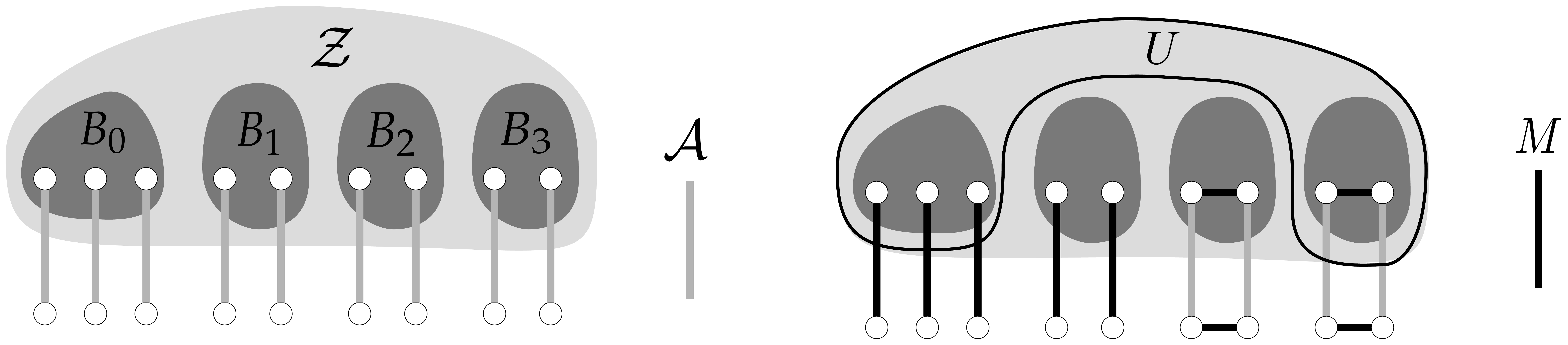}
   	\caption{\small An example of $\CA, \CZ, B$ with $t=3$ (left) and $U,M$ conditioned on $\CD$ (right). Note that $U$ and $M$ are both consistent.}
	\label{fig:basicpartition}
\end{figure*}

Let $\CE$ be the event that $U$ and $M$ are both consistent. Note that given the partition $B$, one can check whether the cut $U$ is consistent without knowing what the matching $M$ is and vice-versa. Hence, as $U - R - M$, it follows that even conditioned on the event $\CE$, $U$ and $M$ are still independent given $R$ and $B$. Furthermore, when the event $\CE$ occurs then either $|\delta(U) \cap M|=3$ (when $U$ does not cut the edges of $M$ apart from those incident on $B_0$) or $|\delta(U) \cap M|=5$ otherwise. 

Let $\CD \subseteq \CE$ denote the event that $U$ and $M$ are consistent and $U$ does not cut the edges of $M$ apart from those incident on $B_0$. Comparing this setup to the case of lopsided unique disjointness, one can see that apart from choosing $B_0$, $U$ corresponds to picking a set of size one among the $t$ blocks $(B_1, \dots, B_t)$, and apart from the edges touching $B_0$, $M$ corresponds to picking a subset of the same $t$ blocks by considering the elements where $M$ crosses the block $B_j$ to be in the set. The event $\CD$ then exactly corresponds to the event that these sets are disjoint. In the ``disjoint'' case, there are exactly $3$ edges of $M$ crossing the cut $U$ where as in the ``intersecting'' case the number of edges of $M$ crossing $U$ is exactly $5$.

As the probability under $q$ is proportional to $|\delta(U) \cap M|-1$, it is not too hard to see that the probability of the event $q(\CD|\CE) = \frac{(3-1)}{(3-1) + (5-1)} = \frac13$. Furthermore, the entropies $\BH(U|B, \CD) = \log t - O(1)$ and $\BH(M|B, \CD) = t - O(1)$. If $\log \nrank(S) \ll \gamma^8 \cdot \frac{n}{t} \log t$ for a small constant $\gamma$, then even after conditioning on $R$ the entropies remain large: $\BH(U|RB, \CD) \ge \log t -\gamma^8 \log t - O(1)$ and $\BH(M|RB, \CD) \ge t - \gamma^8 \log t - O(1)$. 

We want to proceed similarly to the case of lopsided unique disjointness: we want to conclude that the assumptions on entropy imply that $q(\CD|\CE)$ must be much smaller than $\frac13$. In the case of lopsided disjointness, it was enough for us to use Lemma \ref{lemma:main} and bound the contribution to $p(\CD|\CE,R=r,B=b)$ by $\frac12+\gamma$ for most $r, b$. But now as we want to prove that the probability is smaller than $\frac13$, we need to exploit the combinatorial structure of the matching polytope.

This is where the random partition $B$, which is the key new idea introduced by Rothvo{\ss } \cite{R14}, is useful. We are going to argue that only certain kinds of partitions can contribute to the probability of the event $\CD$ otherwise there is a non-zero probability of sampling a cut $U$ and a matching $M$ that satisfies $|\delta(U) \cap M|=1$ and any such pair has probability zero in the distribution $q$, since the corresponding slack matrix entry in $S$ is zero. Then, averaging over all the choices of the random partition $B$, we can show the total contribution of these random partitions to $q(\CD|\CE)$ is indeed less than $\frac13$.

Let us make some simplifying assumptions first. Define $M_j$ to be the edges of $M$ corresponding to block $B_j$. Let us assume that for all values $r,b$ the probability that $M_j = \CA_j$ ($M$ crosses $B_j$) is roughly $\frac12$ conditioned on $\CE,R=r,B=b$ for each $j \in [t]$. Since conditioned on $\CE$ either $M_j=\CA_j$ or $M_j \neq \CA_j$, this implies that 
	\[ q(M_j\neq \CA_j|\CE,R=r,B=b) \approx q(M_j= \CA_j|\CE,R=r,B=b) \approx \frac12.\]
Also assume that $U$ is almost uniform among the $m$ possible cuts. Then, as $U$ and $M$ are independent conditioned on $\CE,R=r,B=b$, we get that 
\[ q(U=B_0\cup B_j,M_j\neq \CA_j|\CE,R=r,B=b) \approx q(U=B_0\cup B_j,M_j = \CA_j|\CE,R=r,B=b).\]

\begin{figure*}[h!]
    \centering
     \includegraphics[height=1in]{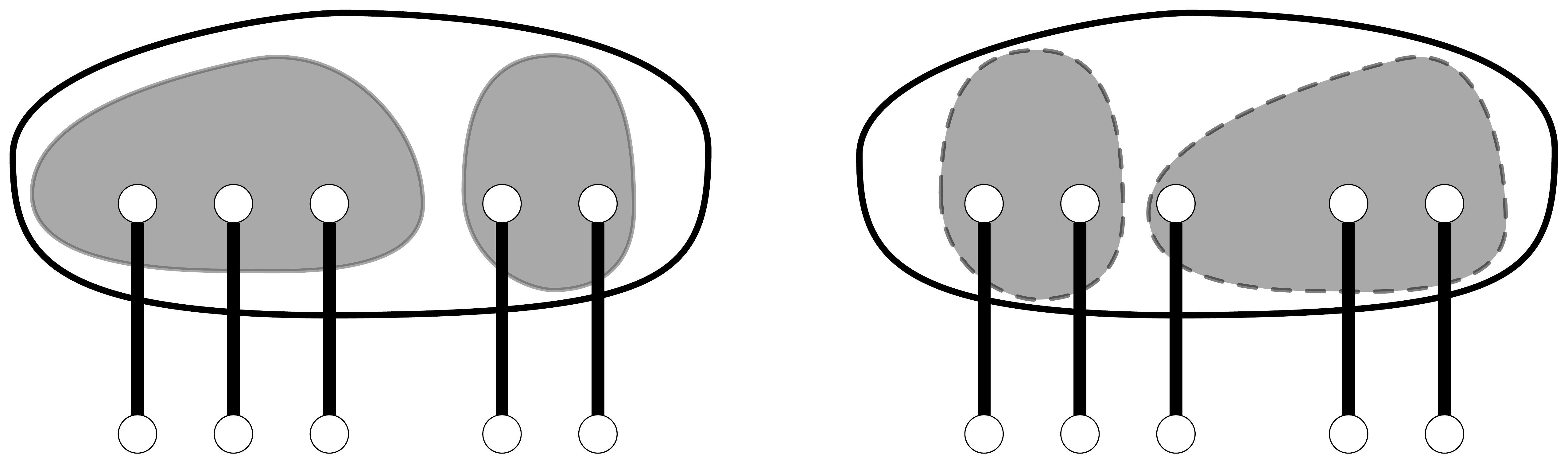}
   	\caption{\small An example of a cut and matching corresponding to the event $U=B_0\cup B_j,M_j = \CA_j$ and an example of two different splits of the cut into $B_0 \cup B_j$. By symmetry any such partition is equi-probable.}
	\label{fig:goodcaseexample}
\end{figure*}

The event on the right hand side above fixes a cut of size $5$ in the graph and a matching that crosses on all the edges. Fix the blocks outside this cut arbitrarily. By symmetry all of the $\binom{5}{3}$ ways of splitting this cut into $B_0$ and $B_j$ are equally likely (see Figure \ref{fig:goodcaseexample}). It turns out that the right hand side above can be non-zero only when the vertices chosen in $B_0$ form a $2$-intersecting family (note that this determines $B_j$ as the parts outside are already fixed) and hence averaging over $b$ and $r$, we get that $q(\CD|\CE) \approx \frac{\Gamma}{10}q(\comp\CD|\CE)$ where $\Gamma$ is a bound on the size of any such family. It turns out that $\Gamma$ is small enough so that we can conclude $q(\CD|\CE) < \frac13$ and derive a contradiction. 

Why must the probability be zero when there are two partitions $b$ and $b'$ (which are same everywhere outside this cut) such that $|b_0 \cap b'_0|=1$? This is because we can choose an appropriate cut $u$ that is consistent with $b'$ (See Figure \ref{fig:errorex}) and has non-zero probability $q(U=u|\CE,R=r,B=b')>0$ and an appropriate matching $m$ that is consistent with $b$ (see Figure \ref{fig:errorex}) and has non-zero probability $q(M=m|\CE,R=r,B=b)>0$. Note that then it must also hold that the cut and the matching has non-zero probability even without conditioning on $\CE$: $q(U=u|R=r)>0$ and $q(M=m|R=r)>0$. But since $U$ and $M$ are independent given $R$ then such a pair would have non-zero probability even though $|\delta(u) \cap m|=1$ and any such pair must have zero probability under $q$. 

\begin{figure*}[h]
    \centering
    \begin{subfigure}[h]{0.8\textwidth}
        \centering
        \includegraphics[height=1in]{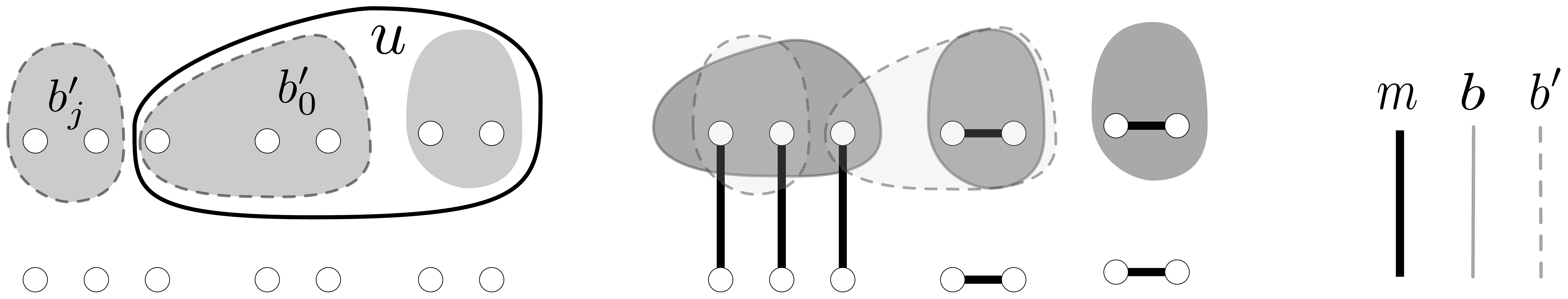}
        \caption{\footnotesize An example of a cut $u$ consistent with $b'$ (left) and a matching $m$ consistent with $b$ (right). Note that $b$ and $b'$ agree on all the blocks except $b_0$ and $b_j$.}
		\label{fig:errorex}
    \end{subfigure}%
	\vspace*{1cm}	
    \begin{subfigure}[h]{0.8\textwidth}
        \centering
        \includegraphics[height=1in]{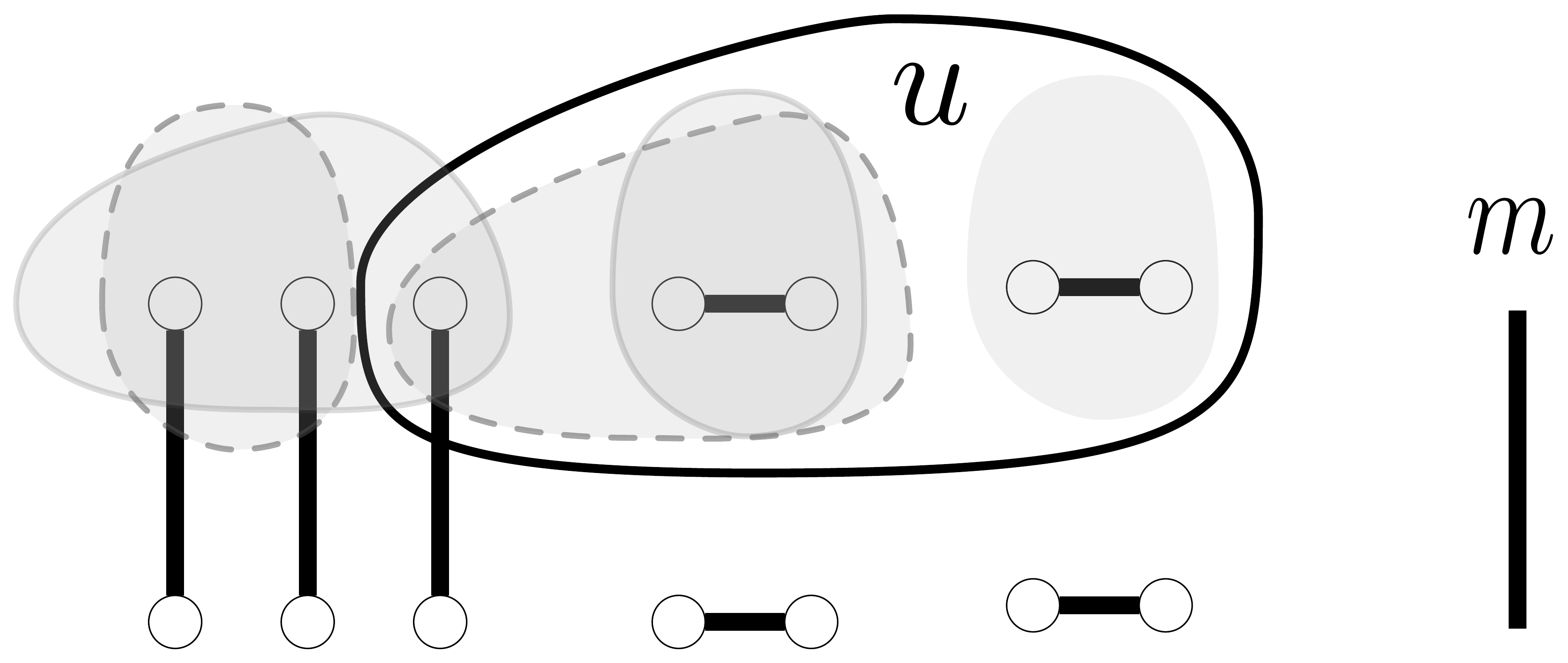}
		\caption{\footnotesize The cut $u$ and matching $m$ satisfy $|\delta(u) \cap m|=1$.}
		\label{fig:errorex1}
    \end{subfigure}
	\label{fig:errorexample}
\end{figure*}

If it was the case that $\BH(U|B, \CD) \ge \log t - \alpha$ and $\BH(M|B, \CD) \ge t - \alpha$ where $\alpha \ll 1$, then one could work with statistical distance as is done in \cite{R14, BP15}, but since in our case the entropy loss is much larger, we use Lemma \ref{lemma:main} in conjunction with the random partition idea. 

As mentioned before, to prove Theorem \ref{thm:mat} we have to work with a noisy version of the slack matrix used above, where the probability of  a pair $U$ and $M$ satisfying $|\delta(U)\cap M|=1$ is not zero, but a small constant. This along with the limitations of Lemma \ref{lemma:main} and the lopsided structure makes translating this intuition into a formal proof considerably more involved.

\section{Slack Matrices and Non-Negative Rank} 
\label{sec:slack}

Consider polytopes $P = \mathsf{conv}\{x_1,\dots, x_s\} \subseteq \BR^d$ and $Q =\{x \in \BR^n ~|~ \langle a_i, x\rangle \le b_i, ~ \forall i \in [t]\} \subseteq \BR^d$ such that $P \subseteq Q$. The \emph{Slack Matrix} $S^{P,Q} \in \BR^{s \times t}$ corresponding to polytopes $P$ and $Q$ is a non-negative matrix defined by $S^{P,Q}_{ij} = b_i - \langle a_i, v_j \rangle$.

The \emph{non-negative rank} of a non-negative matrix $S$ is 
\[ \nrank(S) = \min\{r | \exists U \in \BR_{\ge 0}^{s\times r}, V \in \BR_{\ge 0}^{r\times t} : S=UV\}.\]

Braun, Fiorini, Pokutta and Steurer \cite{BFPS15} showed that to lower bound the size of extended formulations for any polytope sandwiched between inner polytope $P$ and outer polytope $Q$, it suffices to lower bound the non-negative rank of the corresponding slack matrix $S^{P,Q}$.

\begin{thm}[\cite{BFPS15}]
	\label{thm:xc}
	Let $P = \mathsf{conv}\{v_1,\dots,v_s\} \subseteq \BR^d$ and $Q = \{x| \langle a_i,x \rangle \le b_i \;,\forall 1 \le i \le t\} \subseteq \BR^d$ be two polytopes. Then, for any polytope $K \subseteq \BR^d$ satisfying $P \subseteq K \subseteq Q$, $xc(K) \ge \nrank(S^{P,Q}) - 1$.
\end{thm}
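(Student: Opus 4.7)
The plan is to follow the standard Yannakakis-style factorization argument, adapted to the sandwich setting as in \cite{BFPS15}. Let $K \subseteq \BR^d$ be a polytope with $P \subseteq K \subseteq Q$, and suppose $r = \xc(K)$. By definition there is an extended formulation $K' = \{y \in \BR^{d'} : My \le g\}$ with $M$ an $r \times d'$ matrix and $g \in \BR^r$, together with a linear map $\pi : \BR^{d'} \to \BR^d$ satisfying $\pi(K') = K$. The goal is to construct a non-negative factorization of $S^{P,Q}$ of inner dimension at most $r+1$.

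First I would handle the rows. For each $i \in [t]$, since $K \subseteq Q$, the inequality $\langle a_i, \pi(y)\rangle \le b_i$ holds for every $y \in K'$. Farkas' lemma applied to the system $My \le g$ then yields a vector $u_i \in \BR_{\ge 0}^r$ such that $M^T u_i = \pi^T a_i$ and $u_i^T g \le b_i$. Setting $c_i := b_i - u_i^T g \ge 0$, one gets the identity
\begin{equation*}
b_i - \langle a_i, \pi(y)\rangle \;=\; c_i + u_i^T(g - My) \qquad \text{for all } y \in \BR^{d'}.
\end{equation*}
Next I would handle the columns. For each vertex $v_j$ of $P$, since $v_j \in P \subseteq K = \pi(K')$, pick some preimage $y_j \in K'$, and define the non-negative slack vector $s_j := g - My_j \in \BR_{\ge 0}^r$.

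Combining these, the slack matrix entry becomes
\begin{equation*}
S^{P,Q}_{ij} \;=\; b_i - \langle a_i, v_j\rangle \;=\; b_i - \langle a_i, \pi(y_j)\rangle \;=\; c_i + \sum_{k=1}^r (u_i)_k (s_j)_k .
\end{equation*}
This exhibits $S^{P,Q}$ as the sum of the rank-one non-negative matrix $c \cdot \mathbf{1}^T$ (whose entries are $c_i \ge 0$) and the product $U V$, where $U \in \BR_{\ge 0}^{t \times r}$ has rows $u_i^T$ and $V \in \BR_{\ge 0}^{r \times s}$ has columns $s_j$. Hence $\nrank(S^{P,Q}) \le r + 1 = \xc(K)+1$, which rearranges to the claimed inequality.

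The only delicate step is invoking Farkas' lemma correctly, i.e., verifying feasibility of the dual system (which follows from $K' \neq \emptyset$, as is the case whenever $P \neq \emptyset$), and observing that the ``offset'' $c_i = b_i - u_i^T g$ is automatically non-negative because $u_i^T g \le b_i$ comes from the primal-dual pairing. Extended formulations with equality constraints can be handled by splitting each equality into two inequalities, which does not change the argument. The additive $-1$ in the bound is precisely the rank cost of the constant-column contribution $c \cdot \mathbf{1}^T$, and there is no obvious way to avoid it in this generality, which is why the theorem is stated with that slack.
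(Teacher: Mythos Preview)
The paper does not actually prove this theorem; it is quoted from \cite{BFPS15} and used as a black box. Your argument is the standard Yannakakis-style factorization proof and is correct: LP duality expresses each outer inequality's slack as a non-negative combination of the extended formulation's facet slacks plus a non-negative offset, and lifting each inner vertex to a preimage in $K'$ gives the column factors, yielding $\nrank(S^{P,Q}) \le \xc(K)+1$.

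One small refinement worth making precise: rather than splitting equalities into pairs of inequalities (which would inflate the facet count), it is cleaner to keep the extended formulation as $\{y: My\le g,\ Ay=b\}$ with $r$ inequality rows and apply the affine Farkas lemma directly. The multipliers on the equalities are unconstrained in sign and get absorbed into the constant term $c_i$, so the factorization still has inner dimension $r+1$. With that adjustment your proof matches the argument in \cite{BFPS15} essentially verbatim.
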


\subsection*{Slack Matrix for the Matching Polytope}

Let $0<\beta<1$ be a constant which will be determined by the proof. In proving Theorem \ref{thm:mat}, we may assume without loss of generality that $\frac{c'}{n} \le \eps \le 1$ for a sufficiently large constant $c'$. We call a subset of vertices a \emph{cut} and for a cut $U$, we define $\delta(U)$ to be the set of edges with exactly one end point in $U$ and $E(U)$ to be the set of edges inside $U$.

Consider the following outer polytope $Q'_\eps(n) \subseteq \BR^{\binom{[n]}{2}}$ for matching:
	\begin{align*}
		\ Q'_\eps(n) = \Bigg\{x \in \BR^{\binom{[n]}{2}} ~\Bigg|~&\sum_{e \in \delta(\{i\})} x_e \le1\;,\forall i \in [n];\;x \ge 0;\\
		\		  						     &\sum_{e \in E(U)} x_e \le \frac{|U|+\beta-1}{2} \;,\forall U \subseteq [n],\; |U| \text{ odd },\;|U| \le 1+\frac{\beta}{\eps}.\Bigg\}
	\end{align*}

For any cut $U$ of size at most $1+\frac{\beta}{\eps}$, $(1+\eps)(|U|-1)\le |U| + \beta - 1$, so we have the following.
\begin{prop} $(1+\eps)P_M(n) \subseteq Q'_{\eps}(n)$.
\end{prop}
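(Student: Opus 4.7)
The plan is to take an arbitrary $y \in (1+\eps)P_{MAT}(n)$, write it as $y = (1+\eps)x$ for some $x \in P_{MAT}(n)$, and check that $y$ satisfies every defining inequality of $Q'_{\eps}(n)$. Non-negativity is immediate from $x \ge 0$ and $1+\eps > 0$. The per-vertex degree inequalities carry over directly from the same inequalities of $P_{MAT}(n)$, so the only substantive step is verifying the truncated odd set constraints.

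Fix an odd $U \subseteq [n]$ with $|U| \le 1 + \beta/\eps$. Since $x \in P_{MAT}(n)$, Edmonds's characterization gives $\sum_{e \in E(U)} x_e \le (|U|-1)/2$. The bound on $|U|$ is chosen precisely so that $\eps(|U|-1) \le \beta$, which rearranges to
\[ (1+\eps)(|U|-1) = (|U|-1) + \eps(|U|-1) \le (|U|-1) + \beta = |U| + \beta - 1. \]
Combining these two facts,
\[ \sum_{e \in E(U)} y_e = (1+\eps) \sum_{e \in E(U)} x_e \le (1+\eps) \cdot \frac{|U|-1}{2} \le \frac{|U| + \beta - 1}{2}, \]
which is precisely the odd set inequality defining $Q'_{\eps}(n)$ at $U$. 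This completes the verification of all defining inequalities of $Q'_{\eps}(n)$, so $y \in Q'_{\eps}(n)$ as desired.

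The proof reduces to a single algebraic manipulation, and I anticipate no real technical obstacle. The only conceptual content is the matched choice of the truncation $|U| \le 1+\beta/\eps$ in $Q'_{\eps}$ together with the additive slack $\beta/2$ on the right-hand side of its odd set inequalities, engineered exactly so that the multiplicative $(1+\eps)$ slack from scaling is absorbed for odd cuts of small enough size. Larger odd sets would violate the scaled Edmonds bound $(1+\eps)(|U|-1)/2$ against a purely additive-slack right-hand side, which is why they are dropped from the description of $Q'_{\eps}$ in the first place.
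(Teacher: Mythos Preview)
Your handling of the odd set constraints is correct and matches the paper's one-line justification exactly: for odd $U$ with $|U|\le 1+\beta/\eps$ one has $\eps(|U|-1)\le\beta$, hence $(1+\eps)(|U|-1)\le |U|+\beta-1$, and the scaled Edmonds inequality is absorbed.

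However, your assertion that ``the per-vertex degree inequalities carry over directly from the same inequalities of $P_{MAT}(n)$'' is wrong. If $y=(1+\eps)x$ with $x\in P_{MAT}$, then $\sum_{e\in\delta(\{i\})}y_e=(1+\eps)\sum_{e\in\delta(\{i\})}x_e\le 1+\eps$, not $\le 1$; for any perfect-matching vertex $x=\ind_M$ this sum equals $1+\eps>1$. So with the degree constraints written as $\le 1$ in $Q'_\eps$, the containment $(1+\eps)P_{MAT}\subseteq Q'_\eps$ is literally false, and this step cannot be salvaged as written. The paper's own argument checks only the odd set inequality and is equally silent on the degree constraints; the intended reading is that the degree constraints in $Q'_\eps$ should be relaxed to $\le 1+\eps$ (or simply dropped), which does not affect anything downstream since the slack matrix used later is the submatrix indexed by odd cuts and perfect matchings only.
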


Hence, to lower bound the extension complexity of approximating polytopes, it suffices to lower bound the non-negative rank of the slack matrix $S^{P_{MAT}, Q'_{\eps}}$. Since there are only $O(n)$ degree constraints, let us restrict ourselves to the submatrix $S$ of $S^{P_{MAT}, Q'_{\eps}}$ given by odd cuts and perfect matchings. Then, the entry corresponding to cut $u$ and perfect matching $m$ is 
\begin{equation}
\ S_{um} = \frac{|u|+ \beta-1}{2} - |E(u)\cap m|=\frac{|\delta(u) \cap m|+\beta-1}{2},
\label{eqn:matslack}
\end{equation}
since for any perfect matching $m$ and odd cut $u$, $|E(u) \cap m|=\frac{|u|-|\delta(u)\cap m|}{2}$. Note that this matrix has $\Theta(1/\eps)$ rows. We prove that

\begin{restatable}{thm}{matrank}
	 \label{thm:matrank}
	 \[ \nrank(S) \ge \binom{n}{{\alpha}/{\eps}}\]
	 where $\frac{c'}{n} \le \eps \le 1$ for a large enough constant $c'=c'(\beta)$ and $0<\alpha<1$ is an absolute constant.
\end{restatable}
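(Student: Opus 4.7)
Normalize the slack matrix to a probability distribution $q(u,m)\propto S_{um}=(|\delta(u)\cap m|+\beta-1)/2$ on pairs of odd cuts of size at most $1+\beta/\eps$ and perfect matchings on $[n]$. A non-negative factorization of rank $r$ then gives a hidden variable $R$ with $|\supp(R)|=r$ and $U\arr R\arr M$. Suppose for contradiction that $r\le\binom{n}{\alpha/\eps}$, so that $\log r\ll\gamma^8(1/\eps)\log(\eps n)$ for a small constant $\gamma=\gamma(\beta)$. A direct-sum reduction partitions $[n]$ into $1/\eps$ chunks of size $\approx 2\eps n$ and conditions on $U,M$ being supported on one random chunk; this reduces to a graph on $2t+6$ vertices with $t:=\eps n$, while preserving the Markov property.

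In the reduced instance, fix a reference perfect matching $\CA$ and a reference cut $\CZ$ cutting every $\CA$-edge, and sample a uniformly random ordered partition $B=(B_0,B_1,\dots,B_t)$ of $\CZ$ with $|B_0|=3$ and $|B_j|=2$ for $j\ge1$. Call $U$ \emph{consistent} if $U=B_0\cup B_j$ for some $j\in[t]$, and $M$ \emph{consistent} if it contains the $\CA$-edges at $B_0$ and, on each block $B_j$ with $j\ge 1$, either contains the $\CA$-edges or \emph{swaps} (matches $B_j$ and its $\CA$-neighbors to themselves). Let $\CE$ be the event that both $U,M$ are consistent and $\CD\subseteq\CE$ the further event $|\delta(U)\cap M|=3$. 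Since consistency of $U$ and of $M$ each depend only on $B$, the chain $U\arr(R,B)\arr M$ persists after conditioning on $\CE$. Consistent pairs have $|\delta(U)\cap M|\in\{3,5\}$ with equal counts per type for each fixed $U$, which yields
\[
q(\CD\mid\CE)\;=\;\frac{2+\beta}{6+2\beta}\;=\;\tfrac13+\Theta(\beta),
\]
and one checks $\BH(U\mid B,\CD)=\log t-O(1)$, $\BH(M\mid B,\CD)=t-O(1)$, both dropping by at most $\gamma^8\log t$ after further conditioning on $R$.

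Identifying the chosen block index with $X\in[t]$ and the vector $Y_j:=\ind[M_j=\CA_j]$ with $Y\in\bits^t$, the event $\CD$ is exactly $Y_X=0$. Lemma \ref{lemma:main} then gives that with probability $1-O(\gamma)$ over $(R,B)$, $q(Y_X=0\mid R,B)\le 1/2+\gamma$. Since $1/3<1/2$, this alone does not contradict the slack value and the random partition must be used to push below $1/3$. Inside a candidate $5$-cut $B_0\cup B_j$, once the blocks outside are fixed, the $\binom{5}{3}=10$ ways of splitting the five vertices into $(B_0,B_j)$ are equi-probable. For two splits $b,b'$ with $|b_0\cap b_0'|=1$, one exhibits a pair $(u,m)$ consistent with $b',b$ respectively and $|\delta(u)\cap m|=1$; the Markov chain then forces the product $q(U=u\mid r)\, q(M=m\mid r)$ to inherit the near-zero slack weight $\beta/2$ on average, which translates into the constraint that the surviving $B_0$-choices within each $5$-core form an approximate $2$-intersecting family on five elements, of size at most an absolute constant $\Gamma<10$.

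Averaging the per-core bound $\Gamma/10$ over $B$, together with the typical-disjointness bound from Lemma \ref{lemma:main}, yields $q(\CD\mid\CE)<1/3-\Omega_\beta(1)$, contradicting the exact slack-matrix value and completing the proof. The principal obstacle is the $\beta$-noise: in the zero-noise setting of Section \ref{sec:matsketch} the $|\delta(u)\cap m|=1$ entries vanish and the $2$-intersecting constraint is rigid, but with $\beta>0$ it is only approximately rigid, so the $\beta$-dependent errors must be carried through both Lemma \ref{lemma:main} (which requires $\gamma\ll\beta$) and the combinatorial averaging, so that the final gap remains a strict $\Omega(\beta)$ and $\alpha$ can be chosen as a function of $\beta$ alone.
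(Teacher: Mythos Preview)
Your plan tracks the paper's strategy, but two pieces are either mis-stated or genuinely missing.

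First, the direct-sum reduction is not a ``restrict to one random chunk'' move. In the actual argument a \emph{consistent} cut $U$ selects one two-vertex block from \emph{every} chunk (plus $B_0$), so that $|U|\approx 1/\eps$ matches the row-index of the slack matrix; likewise $M$ makes a swap/no-swap choice in every block of every chunk. The direct sum is over the per-chunk mutual informations $\Infc{R}{U^i}{BU^{<i}M^{\ge i}\CD}+\Infc{R}{M^i}{BU^{\le i}M^{>i}\CD}$ via Lemma~\ref{lemma:directsum}; only after fixing $W=(U^{<i},M^{>i},B^{-i})$ and conditioning on a per-chunk consistency event $\CE$ do you obtain the single-chunk picture with $t$ blocks. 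Your phrasing ``$U,M$ supported on one chunk'' would produce cuts of the wrong size and would lose the sum over $i$ that yields the factor $1/\eps$ in $\log\nrank(S)$.

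Second, the $\beta$-noise handling you flag as the ``principal obstacle'' is where the real work lives, and it is not supplied. Lemma~\ref{lemma:main} only controls $\CS_2=\{(j,r,b):p(M^i_j\neq\CA^i_j\mid rb)\ge\tfrac{1+\beta}{2}\}$; to even set up the $5$-core argument one must also excise $\CS_1$ (blocks with tiny swap probability), $\CS_3$ (atypically heavy $j$ under $p(\cdot\mid rb,\CD)$), and $\CS_4$ (blocks whose swap-event is correlated with too many other blocks). On the remaining good set $\CG$, the $2$-intersecting bound (Lemma~\ref{lemma:intfam}, with $\Gamma=4$) handles the subset $\CG_2$ where all compatible $B_0$-choices pairwise $2$-intersect, giving $p(\CG_2,\CD)\le\tfrac{4}{10}\,p(\overline\CD)+\beta$. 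But the complement $\CG_1$---where some alternative split $b'$ with $|b_0\cap b'_0|=1$ is also in $\CG$---remains. The paper treats $\CG_1$ by introducing a separate event $\CF$ of \emph{bad} matchings (exactly one $\CA$-edge at $B_0$) and, through a chain of symmetry and product-structure comparisons between $p(\cdot\mid rb)$, $p(\cdot\mid rb')$, and $q(\cdot\mid\CE\cup\CF)$ that crucially uses the correlation sets $\CJ_{jrb}$ coming from $\CS_4$, shows $p(\CG_1,\CD)\lesssim q(|\delta(U)\cap M|=1\mid\CE\cup\CF)\le\beta/4$. Your line that the Markov chain ``forces the product to inherit the near-zero slack weight $\beta/2$ on average'' is the right intuition for the $\beta=0$ sketch, but does not by itself give a bound on $\CG_1$ when $\beta>0$; the bad-matching construction and the $\CJ_{jrb}$ bookkeeping are the missing ideas.
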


Theorems \ref{thm:xc} and \ref{thm:matrank} then give us Theorem \ref{thm:mat}.

\subsection*{Slack Matrix for the Lopsided Correlation Polytope}
\label{sec:acorr}

Consider the polytope $Q = Q(n) = \{x\in \BR^{n\times n}~|~\langle2\diag(a)-aa^T,x\rangle \le 1, a\in \bits^n\}$ where $\diag(a) \in \BR^{n \times n}$ denotes the diagonal matrix which has the vector $a$ on the diagonal and is zero otherwise. It is well-known (see \cite{BP16} for example) that $P_{LCORR} \subseteq Q$ and the slack matrix corresponding to the inner polytope $P_{LCORR}$ and outer polytope $\sigma Q$ is the non-negative matrix $\sigma A$ where $A$ is a lopsided unique disjointness matrix satisfying the conditions in \eqref{eqn:udisj} with parameter $\rho = \frac{1}{\sigma}$. Corollary \ref{cor:acorr} then directly follows from Theorems \ref{thm:udisj} and \ref{thm:xc}. 

\section{Notation and Preliminaries}
\label{sec:preliminaries}

\subsection{Probability Spaces and Variables}

Unless otherwise stated, logarithms in this text are computed base two. We denote by $[n]$ the set $\{1,2,\dotsc, n\}$ and by $\binom{[n]}{k}$ the collection of subsets of $[n]$ of size $k$. Random variables are denoted by capital letters (e.g.\ $A$)
and the values they attain are denoted by lower-case letters (e.g.\ $a$). Events in a probability space will be denoted by calligraphic letters (e.g.\ $\CE$).  For event $\CD$, we use $\ind_\CD$ to denote the corresponding indicator variable, $\comp{\CD}$ to denote the complement event and for another event $\CE$, we use the shorthand $\CD,\CE$ to denote the intersection event $\CD \cap \CE$. Given $a = a_1,a_2, \dotsc, a_n$ (resp. $a=a^1,\dots,a^n$), we write $a_{\leq i}$ (resp. $a^{\leq i}$) to denote $a_1,\dotsc, a_i$ (resp. $a^1,\dots,a^i$). We define $a_{< i}, a_{\ge i}, a_{> i}$ (resp. $a^{<i}, a^{\ge i}, a^{>i}$) similarly.

Given a probability space $p$ and a random variable $A$ in the underlying sample space, we use the notation $p(a)$ to denote both the distribution on the variable $a$, and the number $\BP_p[A = a]$. The meaning will be clear from context. We will often consider multiple probability spaces with the same underlying sample space, so for example $p(a)$ and $q(a)$ will denote the distribution of the random variable $A$ under the probability spaces $p$ and $q$ respectively with the underlying sample space of $p$ and $q$ being the same. We write $p(a|b)$ to denote either the distribution of  $A$ conditioned on the event $B=b$, or the number $\BP[A=a |B=b]$. Given a distribution $p(a,b,c,d)$, we write $p(a,b,c)$ to denote the marginal distribution on the variables $a,b,c$ (or the corresponding probability). We often write $p(ab)$ instead of $p(a,b)$ for conciseness of notation. If $\CE$ is an event, we write $p(\CE)$ to denote its probability according to $p$. 

The support of a random variable $A$ is defined to be the set $\supp(A) := \{a~|~p(a)>0\}$. Given a fixed value $c$, we denote by $\Ex{p(b|c)}{g(a,b,c)} := \sum_{b} p(b|c) \cdot g(a,b,c)$, the expected value of the function $g(a,b,c)$ under the distribution $p(b|c)$. If the probability space $p$ is clear from the context, then we will just write $\Ex{b|c}{g(a,b,c)}$ to denote the expectation. We use $\BE_{a \in \CA}[g(a)]$ to denote the expected value of $g(a)$ under the uniform distribution over the set $\CA$. 

We write $A\arr R\arr B$ to assert that the random variables $A, R, B$ form a Markov chain, or, in other words, $p(arb) = p(r) \cdot p(a|r) \cdot p(b|r)$. In stating the preliminary lemmas and definitions, $p$ is assumed to be the underlying probability space of the random variables being considered.    

{To get familiar with the notation, consider the following example. Let $A \in \bits^2$ be a uniformly distributed random variable in a probability space $p$. Then, $p(a)$ is the uniform distribution on $\bits^2$ and if $a=(0,0)$, $p(a) = 1/4$. Let $A_1$ and $A_2$ denote the first and second bits of $A$, then if $B = A_1 + A_2 \bmod 2$, then when $b=1$, $p(a|b)$ is the uniform distribution on $\{(0,1),(1,0)\}$. If $a = (1,0)$, and $b=1$,  then $p(a|b)=1/2$, and $p(a,b) = 1/4$. If $\CE$ is the event that $A_1=B$, then $p(\CE)=1/2$. Let $q(a)=p(a|\CE)$, then $q(a)$ is the uniform distribution on $\{(0,0),(1,0)\}$ and $q(a_2)$ is the distribution over the sample space $\bits$ which takes the value $0$ with probability $1$.}

\subsection{Entropy and Mutual Information}
\label{sec:prelim_info_theory}

For a discrete random variable $A$, the entropy of $A$ is defined as 
	\[ \BH(A) = \BE_{\cp(a)}\left[\log\frac{1}{\cp(a)}\right]. \]

For any two random variables $A$ and $B$, the entropy of $A$ conditioned on $B$ is defined as $\BH(A|B) = \BE_{\cp(b)}[\BH(A|b)]$. The mutual information between $A$ and $B$ is defined as $\Inf{A}{B} = \BH(A) - \BH(A|B) = \BH(A) - \BH(B|A)$. Similarly, the conditional mutual information is defined as $\Infc{A}{B}{C} = \BH(A|C) - \BH(A|BC)$.

{We shall often work with multiple probability spaces over the same underlying sample space. To avoid confusion, we shall explicitly write $\BH_p(A)$ (and $\Infc[p]{A}{B}{C}$) to specify the probability space $p$ being used for computing the entropy (and mutual information).} 

\subsection{The Binary Entropy Function}

The binary entropy function\footnote{We adopt the convention that $x\log x=0$ at $x=0$.} is defined to be $\sh(x) := -x\log x - (1-x)\log (1-x)$ for $x \in [0,1]$. The function $\sh$ is concave on the interval $[0,1]$ and is decreasing on the interval $[\frac12,1]$.

The proposition below will be quite useful. A proof is given in Appendix \ref{sec:prelimproofs}.

\begin{prop}
\label{prop:binent}
$\sh(x) \le 1 - 2\log e\left(x-\frac12\right)^2$ for all $x \in [0,1]$.
\end{prop}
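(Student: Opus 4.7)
The inequality is a quantitative reverse to the fact that $\sh$ is maximized at $x = 1/2$, and there are two natural routes to prove it. The quickest is to recognize that $1 - \sh(x) = \KLdiv{\text{Bern}(x)}{\text{Bern}(1/2)}$ (by direct expansion of the KL divergence in bits), at which point Pinsker's inequality $\KLdiv{P}{Q} \ge 2\log e \cdot d_{TV}(P,Q)^2$ applied to the two Bernoullis, whose total variation distance is exactly $|x - 1/2|$, gives the claim in one line. Since Pinsker is itself usually proved, this route effectively punts the work to a standard reference. The self-contained alternative is a convexity / Taylor argument, which I will sketch below in more detail because it keeps everything inside calculus.

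The plan is to define $f(x) = 1 - 2\log e \cdot (x - 1/2)^2 - \sh(x)$ on $[0,1]$ and show $f \ge 0$. First I would check continuity at the endpoints (using $\lim_{x \to 0^+} x \log x = 0$), then compute on $(0,1)$: since $\sh'(x) = \log\frac{1-x}{x}$, one has $\sh'(1/2) = 0$, hence $f'(1/2) = 0$, so $x = 1/2$ is a critical point of $f$ and $f(1/2) = 1 - 0 - 1 = 0$. Next I would compute the second derivative $\sh''(x) = -\frac{\log e}{x(1-x)}$, giving
\[
f''(x) \;=\; -4\log e + \frac{\log e}{x(1-x)} \;=\; \log e \cdot \frac{1 - 4x(1-x)}{x(1-x)} \;=\; \log e \cdot \frac{(1-2x)^2}{x(1-x)} \;\ge\; 0
\]
for every $x \in (0,1)$. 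Thus $f$ is convex on $(0,1)$, and its unique critical point $x = 1/2$ must be a global minimum there, with value $0$. Continuity of $f$ on $[0,1]$ extends the inequality $f(x) \ge 0$ to the closed interval.

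There is essentially no obstacle: the only mild care needed is the behavior at $x \in \{0,1\}$, where $\sh''$ blows up. This is handled purely by continuity since the second-derivative computation is only used to certify convexity on the open interval $(0,1)$, and $f$ itself extends continuously to the endpoints with finite values $f(0) = f(1) = 1 - \frac{1}{2}\log e > 0$. So the proof reduces to the one-line algebraic identity $1 - 4x(1-x) = (1-2x)^2$, which is what makes the constant $2\log e$ exactly sharp at $x = 1/2$.
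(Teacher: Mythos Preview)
Your convexity argument is correct and is essentially the paper's own proof: define the difference, compute the second derivative $-4\log e + \frac{\log e}{x(1-x)} \ge 0$, and conclude via the critical point at $x=1/2$. Your presentation is in fact slightly cleaner (the factorization $f''(x)=\log e\cdot\frac{(1-2x)^2}{x(1-x)}$ and the explicit endpoint check), and the Pinsker route you mention is a valid alternative the paper does not use.
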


\subsection{Non-negative Rank and Common Information}

Given a non-negative matrix $A \subset \BR^{\CX \times \CY}$, we can view it as a probability distribution $p(xy)$ as follows:
\[ p(xy) = \frac{A_{xy}}{\sum_{x',y'} A_{x'y'}}.\]
Let $X$ and $Y$ denote random variables with distribution $p(xy)$. Then, we have the following proposition whose proof can be found in Appendix \ref{sec:prelimproofs}.

\begin{prop}[\cite{BP16}]
	\label{prop:nrank}
	There is a random variable $R$ such that $X \arr R \arr Y$ and $|\supp(R)| \le \nrank(A)$.
\end{prop}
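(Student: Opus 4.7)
The natural approach is to read off a non-negative rank-$r$ factorization of $A$ as an $r$-component mixture of product distributions on $(X,Y)$, with the hidden mixture label serving as the common-information variable $R$. This turns the algebraic statement ``$A$ has small non-negative rank'' into the probabilistic statement ``$X$ and $Y$ are conditionally independent given a small-range variable.''

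Concretely, I would fix a non-negative factorization $A = UV$ with $U \in \BR^{\CX \times [r]}_{\ge 0}$ and $V \in \BR^{[r] \times \CY}_{\ge 0}$ achieving $r = \nrank(A)$. Write $Z = \sum_{x,y} A_{xy}$, and for each $i \in [r]$ set $c_i = \sum_x U_{xi}$, $d_i = \sum_y V_{iy}$, and $\alpha_i = c_i d_i / Z$. Indices $i$ with $c_i d_i = 0$ contribute nothing to $A$ and can be discarded, so we may assume $c_i, d_i > 0$. Swapping the order of summation gives $\sum_i \alpha_i = (1/Z)\sum_{x,y}\sum_i U_{xi} V_{iy} = (1/Z)\sum_{x,y} A_{xy} = 1$, so the $\alpha_i$ are legitimate mixture weights over at most $r$ atoms.

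Next I would define a joint distribution $\tilde p$ on $(X,Y,R)$ by $\tilde p(x,y,i) := U_{xi} V_{iy}/Z$ and verify three routine facts: (i) summing over $i$ recovers $\tilde p(x,y) = A_{xy}/Z = p(x,y)$, so the $(X,Y)$-marginal agrees with the given distribution; (ii) $\tilde p(R=i) = c_i d_i/Z = \alpha_i$, so $|\supp(R)| \le r = \nrank(A)$; and (iii) the conditionals $\tilde p(x \mid R=i) = U_{xi}/c_i$ and $\tilde p(y \mid R=i) = V_{iy}/d_i$ involve disjoint factors of $U$ and $V$, so
\[ \tilde p(x,y \mid R=i) \;=\; \tilde p(x \mid R=i)\,\tilde p(y \mid R=i), \]
which is precisely the Markov condition $X \arr R \arr Y$. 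I do not expect any substantive obstacle here: the content of the proposition is essentially the observation that a non-negative rank-$r$ matrix, once normalized into a joint distribution, is a convex combination of $r$ rank-one product distributions, each one indexed by one value of the latent variable $R$.
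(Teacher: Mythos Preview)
Your proposal is correct and is essentially the same argument as the paper's: both read off an optimal non-negative factorization $A=\sum_i u_i v_i^T$, normalize by the total mass, and let $R$ be the index $i$ with $p(x,y\mid R=i)\propto u_i(x)v_i(y)$ and $p(R=i)\propto \sum_{x,y}u_i(x)v_i(y)$, which makes $X\arr R\arr Y$ and $|\supp(R)|\le \nrank(A)$. Your version is slightly more explicit (you spell out discarding zero-mass indices and check the marginal identity), but there is no substantive difference.
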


One can view the above proposition as saying that to prove a lower bound on the non-negative rank, it suffices to lower bound a well-known information theoretic quantity called the \emph{common information} between $X$ and $Y$. For more details on this interpretation, see \cite{BP16}.

\subsection{Preliminary Information Theory Lemmas}

The proofs of the following basic facts can be found in \cite{CT06}:

\begin{proposition} \label{proposition:infoupper} If $A \in \{0,1\}^\ell$, then $\Inf{A}{B} \leq \ell$.\end{proposition}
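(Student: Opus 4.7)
The plan is to combine two standard facts: non-negativity of conditional entropy for a discrete random variable, and the maximum-entropy bound for distributions with bounded support. First I would write the mutual information in the form $\Inf{A}{B} = \BH(A) - \BH(A|B)$, using the definition given in Section \ref{sec:prelim_info_theory}. Since $A|B=b$ is a discrete random variable for each value $b$, $\BH(A|b) \geq 0$, and hence $\BH(A|B) = \BE_{p(b)}[\BH(A|b)] \geq 0$. This gives $\Inf{A}{B} \leq \BH(A)$.

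Second, I would bound $\BH(A)$ by the logarithm of its support size. Using Jensen's inequality applied to the concave function $\log$, one gets the standard inequality $\BH(A) = \BE_{p(a)}[\log(1/p(a))] \leq \log|\supp(A)|$. Because $A$ takes values in $\{0,1\}^\ell$, we have $|\supp(A)| \leq 2^\ell$, and therefore $\BH(A) \leq \ell$. Combining the two inequalities yields $\Inf{A}{B} \leq \ell$, as desired.

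There is no real obstacle here; both ingredients are textbook facts (referenced as being in \cite{CT06}), and the proof is a one-line chain of inequalities. The only thing worth being a bit careful about is to note that non-negativity of conditional entropy uses discreteness of $A$, which is implicit from $A \in \{0,1\}^\ell$.
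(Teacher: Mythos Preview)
Your proposal is correct and is exactly the standard textbook argument. The paper does not give its own proof of this proposition; it simply cites \cite{CT06}, so there is nothing to compare against beyond noting that your chain $\Inf{A}{B} = \BH(A) - \BH(A|B) \le \BH(A) \le \log|\supp(A)| \le \ell$ is precisely the expected derivation.
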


\begin{proposition} 
	$\BH(A|B) \le \BH(A)$ where the equality holds if and only if $A$ and $B$ are independent.
\end{proposition}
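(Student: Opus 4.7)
The plan is to reduce the inequality to the non-negativity of mutual information and then derive the latter from Gibbs' inequality (non-negativity of KL divergence). First I would rewrite the difference as
\[
\BH(A) - \BH(A|B) = \Ex{p(a,b)}{\log \frac{p(a,b)}{p(a)p(b)}} = \Inf{A}{B},
\]
by expanding both entropies via their definitions and combining sums. Thus it suffices to show $\Inf{A}{B} \ge 0$, with equality if and only if $A$ and $B$ are independent.

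Next I would prove $\Inf{A}{B} \ge 0$ by Jensen's inequality applied to the convex function $-\log$. Concretely, writing $I(A;B) = \BE_{p(a,b)}[-\log(p(a)p(b)/p(a,b))]$ and pulling the expectation inside the logarithm,
\[
\Inf{A}{B} \ge -\log \Ex{p(a,b)}{\frac{p(a)p(b)}{p(a,b)}} = -\log \sum_{(a,b)\,:\,p(a,b)>0} p(a)p(b) \ge -\log 1 = 0.
\]
The inner sum is at most $\sum_{a,b} p(a)p(b) = 1$, which justifies the last step. This establishes the main inequality.

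For the equality case, I would invoke the strict convexity of $-\log$ on the positive reals: Jensen's inequality is tight if and only if the argument $p(a)p(b)/p(a,b)$ is constant over the support of $p(a,b)$, and the final inequality is tight if and only if the support of $p(a,b)$ equals the support of $p(a)p(b)$. Combining these two conditions forces $p(a,b) = p(a)p(b)$ for every pair $(a,b)$, i.e.\ independence. Conversely, if $A$ and $B$ are independent then $\log(p(a,b)/(p(a)p(b))) = 0$ everywhere on the support, so $\Inf{A}{B} = 0$ and hence $\BH(A|B) = \BH(A)$.

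The only mild subtlety is handling the zero-probability entries carefully when swapping the sum and logarithm; this is the standard convention $0 \log 0 = 0$ (stated in the excerpt for the binary entropy) extended to the joint and marginal probabilities, and it makes both Jensen steps rigorous. No step is a serious obstacle — this is a textbook fact (a reference to \cite{CT06} would also suffice), and the argument above is essentially one page of elementary manipulations.
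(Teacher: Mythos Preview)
Your proof is correct; this is the standard Gibbs-inequality/Jensen argument. The paper does not give its own proof of this proposition but simply defers to \cite{CT06}, where the proof is essentially the one you wrote.
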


The above implies that if $A$ and $B$ are independent then $\Inf{A}{B}=0$.

\begin{proposition}[Chain Rule] If $A= A_1,\dotsc, A_n$, then 
	$\BH(A) = \sum_{i=1} \BH(A_i|A_{<i}) \text{ and } \Inf{A_1,\dots, A_n}{B} = \sum_{i=1}^n \Infc{A_i}{B_i}{A_{<i}}.$
\end{proposition}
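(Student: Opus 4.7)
The plan is to reduce both claims to the single--step identity $\BH(X,Y) = \BH(X) + \BH(Y\mid X)$ and then iterate. For this two--variable identity I would factor $p(x,y) = p(x)\,p(y\mid x)$, take $-\log$, and sum against $p(x,y)$: the contribution of $-\log p(x)$ collapses to $\BH(X)$ after marginalising over $y$, while the contribution of $-\log p(y\mid x)$ is exactly $\BH(Y\mid X) = \BE_{p(x)}[\BH(Y\mid X=x)]$ by definition of conditional entropy.

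With this base identity in hand, the entropy chain rule follows by a short induction on $n$. The base case $n=1$ is trivial. For the inductive step I would group the first $n-1$ variables together and apply the two--variable identity to the pair $(A_{<n},\,A_n)$, giving $\BH(A_{\leq n}) = \BH(A_{<n}) + \BH(A_n\mid A_{<n})$, and then invoke the induction hypothesis on $\BH(A_{<n})$. I would also record the conditional version $\BH(A_{\leq n}\mid B) = \sum_{i} \BH(A_i\mid A_{<i},B)$ as an immediate corollary: it is the same derivation carried out pointwise at each value $B=b$ and then averaged against $p(b)$, so it does not require any new idea.

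For the mutual information chain rule, I would start from the definition $\Inf{A_{\leq n}}{B} = \BH(A_{\leq n}) - \BH(A_{\leq n}\mid B)$, apply the unconditional entropy chain rule to the first term and the conditional entropy chain rule just obtained to the second term, and then pair up like summands. Each pair collapses to $\BH(A_i\mid A_{<i}) - \BH(A_i\mid A_{<i},B) = \Infc{A_i}{B}{A_{<i}}$, which matches the claim (reading the $B_i$ appearing in the statement as the single random variable $B$, the only interpretation that makes the two sides dimensionally compatible).

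There is no real obstacle here beyond careful bookkeeping with logarithms and iterated conditioning; the main thing to be attentive to is the reduction of the conditional entropy chain rule to the unconditional one, so that the mutual information identity follows by termwise subtraction rather than by a separate induction.
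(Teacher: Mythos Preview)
Your proposal is correct and is the standard textbook derivation; the paper itself does not give a proof but simply cites Cover and Thomas \cite{CT06}, where exactly this argument appears. Your reading of the $B_i$ in the statement as the single variable $B$ is also the right one---this is a typo in the paper.
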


\begin{prop}[Chernoff Bound]
	\label{lemma:chernoff}
	The number of strings in $\bits^m$ with hamming weight at least $3m/4$ is at most $e^{-m/8}\cdot 2^m$.
\end{prop}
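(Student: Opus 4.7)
The plan is to interpret the counting problem probabilistically: if $X$ is uniformly distributed on $\bits^m$, then the ratio $|\{x \in \bits^m : \mathrm{wt}(x) \ge 3m/4\}| / 2^m$ equals $\BP[\mathrm{wt}(X) \ge 3m/4]$, so it suffices to prove the probabilistic statement $\BP[\mathrm{wt}(X) \ge 3m/4] \le e^{-m/8}$, and then multiply through by $2^m$.

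Writing $\mathrm{wt}(X) = \sum_{i=1}^m X_i$ as a sum of i.i.d.\ Bernoulli$(1/2)$ random variables with mean $1/2$ each, I would apply the standard Cramér--Chernoff moment generating function method. For any $t > 0$, Markov's inequality gives
\[
	\BP\left[\sum_{i=1}^m X_i \ge \tfrac{3m}{4}\right] \;=\; \BP\!\left[e^{t\sum_i X_i} \ge e^{3tm/4}\right] \;\le\; e^{-3tm/4}\cdot \BE\!\left[e^{tX_1}\right]^m \;=\; \left(\tfrac{1+e^t}{2}\right)^{\!m} e^{-3tm/4}.
\]
I would then optimize over $t$ (the optimum is $t = \ln 3$, which gives the tight rate $\log(2/3^{3/4}) \approx -0.131 < -1/8$). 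Equivalently, and more cleanly, I would invoke Hoeffding's inequality for bounded i.i.d.\ variables in $[0,1]$: $\BP\bigl[\sum_i X_i - \tfrac{m}{2} \ge t\bigr] \le e^{-2t^2/m}$; setting $t = m/4$ immediately yields the desired bound $e^{-m/8}$.

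There is essentially no obstacle here$-$this is a routine concentration calculation. The only mild choice is which route to take: a self-contained MGF argument (slightly stronger constant but requires optimizing $t$) or a direct application of Hoeffding (cleaner and gives exactly the constant $1/8$ claimed). Since the statement asks for the exact constant $1/8$, I would favor the Hoeffding route for minimal computation, or note that $-\ln(2) + \tfrac{3}{4}\ln 3 > \tfrac{1}{8}$ if writing out the MGF derivation from scratch.
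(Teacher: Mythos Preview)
Your proposal is correct. The paper does not actually supply a proof of this proposition at all; it is stated as a standard fact in the preliminaries (labeled ``Chernoff Bound'') and invoked directly where needed, with no argument given in the text or the appendix. Your Hoeffding route is the natural and cleanest way to recover exactly the constant $1/8$ that the paper asserts, and the alternative MGF computation you sketch is also fine (and indeed yields a slightly sharper rate).
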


The proof of the following lemma can be found in Appendix \ref{sec:prelimproofs}. 

\begin{lemma}[\cite{BR11}]
	\label{lemma:directsum}
	Let $X = X_1,\dots, X_n$ and $Y = Y_1,\dots, Y_n$ be random variables such that the $n$-tuples $(X_1, Y_1),\dots,(X_n, Y_n)$ are mutually independent. Let $R$ be an arbitrary random variable. Then,
	\begin{align*} 
		\ \sum_{i=1}^n \Infc{X_i}{R}{X_{<i}Y_{\ge i}} &\le \Infc{X}{R}{Y} \text{ and } \sum_{i=1}^n \Infc{Y_i}{R}{X_{\le i}Y_{>i}} \le \Infc{Y}{R}{X}.
	\end{align*}
\end{lemma}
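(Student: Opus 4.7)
The plan is to prove the first inequality; the second follows by the same argument after interchanging the roles of $X$ and $Y$ throughout. First, by the chain rule for conditional mutual information,
\[ \Infc{X}{R}{Y} = \sum_{i=1}^n \Infc{X_i}{R}{X_{<i}Y}, \]
so it suffices to show that for each $i$,
\[ \Infc{X_i}{R}{X_{<i}Y_{\ge i}} \le \Infc{X_i}{R}{X_{<i}Y_{\ge i}Y_{<i}} = \Infc{X_i}{R}{X_{<i}Y}. \]
In words, further conditioning on the earlier ``$Y$-coordinates'' $Y_{<i}$ should not decrease the mutual information between $X_i$ and $R$.

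To prove this per-coordinate bound, I would apply the chain rule to $\Infc{X_i}{RY_{<i}}{X_{<i}Y_{\ge i}}$ in two different orders, which yields the identity
\[ \Infc{X_i}{R}{X_{<i}Y_{\ge i}} + \Infc{X_i}{Y_{<i}}{X_{<i}Y_{\ge i}R} = \Infc{X_i}{Y_{<i}}{X_{<i}Y_{\ge i}} + \Infc{X_i}{R}{X_{<i}Y}. \]
The crux is to show that the first term on the right-hand side vanishes. This is where the independence hypothesis enters: because the pairs $(X_j, Y_j)$ are mutually independent, the conditional distribution of $X_i$ given $X_{<i}$ and $Y_{\ge i}$ depends only on $Y_i$, and is in particular unchanged by further conditioning on $Y_{<i}$. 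Hence $\Infc{X_i}{Y_{<i}}{X_{<i}Y_{\ge i}} = 0$, and since the second term on the left-hand side is non-negative, the required per-coordinate inequality follows immediately from the identity.

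Summing the per-coordinate inequality over $i$ and combining with the chain rule expansion of $\Infc{X}{R}{Y}$ gives the first claim, and the second claim follows by an identical argument with $X$ and $Y$ swapped. The only delicate point, and the main place where the hypothesis is used, is verifying $\Infc{X_i}{Y_{<i}}{X_{<i}Y_{\ge i}} = 0$; this reduces to checking from the factorization $p(x,y) = \prod_j p(x_j, y_j)$ that $p(x_i \mid x_{<i}, y_{<i}, y_{\ge i}) = p(x_i \mid y_i) = p(x_i \mid x_{<i}, y_{\ge i})$, which is a direct computation. Note that $R$ itself is arbitrary (not required to be independent across coordinates), and the asymmetric conditioning on $X_{<i}Y_{\ge i}$ is precisely what allows the independence argument to kill the error term without any assumption on $R$.
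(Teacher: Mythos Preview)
Your proposal is correct and follows essentially the same approach as the paper: both expand $\Infc{X_i}{RY_{<i}}{X_{<i}Y_{\ge i}}$ via the chain rule in two ways, use the independence hypothesis to conclude $\Infc{X_i}{Y_{<i}}{X_{<i}Y_{\ge i}} = 0$, drop the nonnegative term $\Infc{X_i}{Y_{<i}}{X_{<i}Y_{\ge i}R}$, and then sum and apply the chain rule $\Infc{X}{R}{Y} = \sum_i \Infc{X_i}{R}{X_{<i}Y}$.
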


\begin{lemma} Let $X$ be a random variable such that $\BH(X) \ge \log \ell - a$ where $\ell = |\supp(X)|$ and $a \ge 0$. Define $\CS = \lbrace x|p(x)\le \dfrac{2^{(a+1)/\gamma}}{\ell}\rbrace$. Then, $p(\CS) \ge 1 - \gamma$.
\label{lemma:entsupp}
\end{lemma}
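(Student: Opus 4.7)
The plan is to bound $p(\comp{\CS})$ from above by splitting the entropy sum according to whether $x \in \comp{\CS}$ or $x \in \CS$, and comparing the result against the hypothesis $\BH(X) \ge \log\ell - a$. Set $t := (a+1)/\gamma$, so that $\comp{\CS} = \{x : p(x) > 2^{t}/\ell\}$, equivalently $\log(1/p(x)) < \log \ell - t$ for every $x \in \comp{\CS}$. This immediately yields the ``heavy'' contribution
\[ \sum_{x \in \comp{\CS}} p(x)\log(1/p(x)) \;\le\; (\log\ell - t)\cdot p(\comp{\CS}). \]

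For the ``light'' contribution, write $q := p(\CS) = 1 - p(\comp{\CS})$ and observe that the conditional distribution of $X$ given $X \in \CS$ has entropy at most $\log|\CS| \le \log\ell$. Expanding the conditional entropy,
\[ \sum_{x \in \CS} \frac{p(x)}{q}\log\!\left(\frac{q}{p(x)}\right) \;\le\; \log\ell, \]
which rearranges to $\sum_{x \in \CS} p(x)\log(1/p(x)) \le q\log\ell + q\log(1/q)$.

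Adding the two bounds, using $p(\comp{\CS})+q=1$, and invoking the elementary inequality $q\log(1/q)\le 1$ for $q\in[0,1]$, I obtain
\[ \BH(X) \;\le\; \log\ell - t\cdot p(\comp{\CS}) + 1. \]
Combining with the hypothesis $\BH(X) \ge \log\ell - a$ gives $t\cdot p(\comp{\CS}) \le a+1$, hence $p(\comp{\CS}) \le (a+1)/t = \gamma$, which is the desired $p(\CS)\ge 1-\gamma$.

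I do not foresee any real obstacle: the proof is a one-step entropy accounting argument. The only quantitative subtlety is the extra ``$+1$'' coming from the $q\log(1/q)$ term, which is precisely why the threshold in the statement is $2^{(a+1)/\gamma}/\ell$ rather than the naive $2^{a/\gamma}/\ell$ one would guess from a pure Markov-on-$\log(1/p(X))$ attempt.
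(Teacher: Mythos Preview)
Your proof is correct and is essentially the same as the paper's: both split the entropy sum over $\CS$ and $\comp{\CS}$, bound the heavy part by $(\log\ell - t)\,p(\comp{\CS})$ (the paper writes $t=\log b$ with $b=2^{(a+1)/\gamma}$), bound the light part via Jensen/conditional entropy by $p(\CS)\log\ell + p(\CS)\log(1/p(\CS)) \le p(\CS)\log\ell + 1$, and combine with the hypothesis to get $p(\comp{\CS})\le (a+1)/t=\gamma$. The only cosmetic difference is that you phrase the $\CS$-bound as ``conditional entropy $\le \log|\CS|$'' while the paper applies concavity of $\log$ directly; the resulting inequality is identical.
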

\begin{proof}
	Set $b = 2^{(a+1)/\gamma}$. Denoting by $\comp{\CS}$ the complement of $\CS$, we can write
	\begin{align*}
		\ \BH(X) &= \sum_{x \in \CS} p(x)\log\left(\frac{1}{p(x)}\right) + \sum_{x \notin \CS} p(x)\log\left(\frac{1}{p(x)}\right) \\
		\      &\le 	 \sum_{x \in \CS} p(x)\log\left(\frac{1}{p(x)}\right) + p(\comp\CS)\log\left(\frac{\ell}{b}\right) \\
		\	   &\le p(\CS) \log\left(\sum_{x\in \CS} \frac{1}{p(\CS)}\right) + p(\comp\CS)\log\left(\frac{\ell}{b}\right),
	\end{align*}
	where the first inequality follows from the definition of $\CS$ and the second from concavity of the $\log$ function. We can further upper bound 
	\begin{align*}
		\ \BH(X) &\le p(\CS) \log\left(\frac{1}{p(\CS)}\right) + p(\CS) \log|\CS| + p(\comp\CS)\log\left(\frac{\ell}{b}\right) \\
		\        &\le 1 + p(\CS)\log \ell + p(\comp\CS)\log\left(\frac{\ell}{b}\right),	
	\end{align*}
	where we used that $x\log \left(1/x\right) \le 1$ for $0\le x \le 1$. Since $\BH(X) \ge \log \ell - a$, we get that 
	\begin{align*}
		\ \log \ell -a & \le 1 + p(\CS)\log \ell  + (1-p(\CS))\log\left(\frac{\ell}{b}\right),
	\end{align*}
	which gives that $p(\CS) \ge 1 - \frac{a+1}{\log b} = 1-\gamma$.
\end{proof}

\begin{lemma}
	\label{lemma:bias}
	Let $Y \in \str^\ell$ and define $\bias_i(Y) := p(Y_i=0) - p(Y_i=1)$. If there is a set $\CS \subseteq [\ell]$ such that $\BE_{i \in \CS}[\bias_i(Y)] \ge 2\gamma$ where $\gamma > 0$, then $\BHc{}(Y) \le \ell - \gamma^2|\CS|$.
\end{lemma}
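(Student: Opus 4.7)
The plan is to prove this bound through a direct coordinate-wise entropy estimate, combining subadditivity of entropy with the quadratic upper bound on the binary entropy function supplied by Proposition \ref{prop:binent}.

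First I would write $\BH(Y) \le \sum_{i=1}^{\ell} \BH(Y_i)$ using subadditivity, and split the sum as $\sum_{i \notin \CS} \BH(Y_i) + \sum_{i \in \CS} \BH(Y_i)$. For coordinates outside $\CS$ use the trivial bound $\BH(Y_i) \le 1$, contributing at most $\ell - |\CS|$. For coordinates inside $\CS$, observe that $p(Y_i=0) = \tfrac{1+\bias_i(Y)}{2}$, so $\BH(Y_i) = \sh\bigl(\tfrac{1+\bias_i(Y)}{2}\bigr)$. By Proposition \ref{prop:binent},
\[
\BH(Y_i) \;\le\; 1 - 2\log e \cdot \left(\tfrac{1+\bias_i(Y)}{2}-\tfrac{1}{2}\right)^{\!2} \;=\; 1 - \tfrac{\log e}{2}\,\bias_i(Y)^2.
\]

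Next I would convert the resulting sum of squared biases into a bound involving the average bias. By Cauchy--Schwarz applied to $\{\bias_i(Y)\}_{i \in \CS}$ (which handles negative values automatically since the left-hand side only depends on squares),
\[
\sum_{i \in \CS} \bias_i(Y)^2 \;\ge\; \frac{1}{|\CS|} \Bigl(\sum_{i \in \CS} \bias_i(Y)\Bigr)^{\!2} \;=\; |\CS| \cdot \bigl(\BE_{i \in \CS}[\bias_i(Y)]\bigr)^2 \;\ge\; 4\gamma^2 |\CS|,
\]
using the hypothesis $\BE_{i \in \CS}[\bias_i(Y)] \ge 2\gamma$. Combining the two displays,
\[
\sum_{i \in \CS}\BH(Y_i) \;\le\; |\CS| - 2\log e \cdot \gamma^2 |\CS|.
\]

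Adding the two pieces gives $\BH(Y) \le \ell - 2\log e \cdot \gamma^2 |\CS|$, and since $2\log e > 1$ this yields the claimed $\BH(Y) \le \ell - \gamma^2|\CS|$. There is no real obstacle here: the only small point to verify is that the Cauchy--Schwarz step goes through despite individual $\bias_i(Y)$ possibly being negative, which it does because the hypothesis controls the \emph{sum} (equivalently the average) of the $\bias_i(Y)$ over $\CS$, and squaring that sum gives a valid lower bound on $\sum \bias_i(Y)^2$ regardless of signs.
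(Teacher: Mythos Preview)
Your proof is correct. The overall skeleton matches the paper's proof: subadditivity to reduce to single-coordinate entropies, the trivial bound outside $\CS$, and Proposition~\ref{prop:binent} to convert bias into an entropy deficit inside $\CS$. The one genuine difference is in how the average-bias hypothesis is used. The paper first averages and then bounds: it applies Jensen's inequality (concavity of $\sh$) to get $\BE_{i\in\CS}[\sh(p(Y_i{=}0))]\le \sh(\BE_{i\in\CS}[p(Y_i{=}0)])$, then uses monotonicity of $\sh$ on $[\tfrac12,1]$ together with Proposition~\ref{prop:binent} once. You instead bound first and then average: you apply Proposition~\ref{prop:binent} coordinate-wise to obtain the quadratic terms $\bias_i(Y)^2$, and then use Cauchy--Schwarz to pass from $\sum_{i\in\CS}\bias_i(Y)^2$ to $(\sum_{i\in\CS}\bias_i(Y))^2/|\CS|$. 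Both routes yield the same constant (indeed both give the slightly stronger $\ell - 2\log e\cdot\gamma^2|\CS|$ before discarding the $2\log e$ factor), and both are equally elementary; the Jensen route is marginally shorter, while your Cauchy--Schwarz route makes no use of the monotonicity of $\sh$ on $[\tfrac12,1]$.
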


\begin{proof}
	We may write
	\[ \BE_{i\in \CS}[\bias_i(Y)] = \BE_{i \in \CS}[p(Y_i=0) - (1-p(Y_i=0))] = 2\BE_{i\in \CS}[p(Y_i=0)] - 1,\]
	which by the assumption implies that $\BE_{i \in \CS}[p(Y_i=0)] \ge \frac12 + {\gamma}$.
	
	We can upper bound
	\[ \BE_{i \in S}[\BH(Y_i)] = \BE_{i \in \CS}[\sh(p(Yi=0))] \le \sh(\BE_{i \in \CS}[p(Y_i=0)]), \] 
	where the last inequality follows from the concavity of the binary entropy function $\sh$. Since $\sh$ is a decreasing function on $\left[\frac12, 1\right]$ and $\sh(\frac12 + x) \le 1 - 2\log e\cdot x^2 \le 1 - x^2,$
	\[\BE_{i \in \CS}[\BH(Y_i)] \le \sh\left(\frac12 + {\gamma}\right) \le 1 - \gamma^2.\] 

	Denoting by $\comp{\CS}$ the complement of $\CS$ and applying the chain rule we get: 
	\begin{align*}
	\	\BH(Y) &\le \BH(Y_{\overline \CS}) + \BH(Y_\CS) \le \BH(Y_{\overline \CS}) + \sum_{i \in \CS}\BH(Y_i) \\
	\	     &\le (\ell - |\CS|) + |\CS|(1-\gamma^2) = \ell - \gamma^2|\CS|.
	\end{align*}
\end{proof}

\begin{lemma}[Averaging Lemma]
	\label{lemma:avg}
	Let $A$ be a bounded random variable such that $\BE[A] \ge \alpha$. For any $\beta < \alpha$, let $\CS = \{ a ~|~ A(a) \ge \beta \}$. Then, $p(\CS) \ge \frac{\alpha-\beta}{m-\beta}$ where $m = \max_a\{A(a)\}$.
\end{lemma}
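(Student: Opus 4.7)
The plan is to use a standard reverse-Markov style two-bucket split of the expectation. I will bound the contribution to $\BE[A]$ from inputs in $\CS$ by using the maximum value $m$, and bound the contribution from inputs outside $\CS$ by using the defining threshold $\beta$.

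More concretely, I would start by writing
\[
\alpha \;\le\; \BE[A] \;=\; \sum_{a \in \CS} p(a)\,A(a) \;+\; \sum_{a \notin \CS} p(a)\,A(a).
\]
For the first sum I use $A(a) \le m$ for every $a$, obtaining a contribution of at most $p(\CS)\cdot m$. For the second sum I use the definition of $\CS$: if $a \notin \CS$ then $A(a) < \beta$, so this contribution is strictly less than (and in particular at most) $(1 - p(\CS))\cdot \beta$. Combining, $\alpha \le p(\CS)\,m + (1 - p(\CS))\,\beta$, i.e.\ $\alpha - \beta \le p(\CS)(m - \beta)$. Since $\beta < \alpha \le m$ we have $m - \beta > 0$, so we can divide to get $p(\CS) \ge (\alpha - \beta)/(m - \beta)$, which is exactly the stated bound.

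There is essentially no obstacle here: the lemma is a one-line averaging argument and the only technical subtlety is to note that $A$ being bounded guarantees that $m < \infty$ and that $m - \beta > 0$, so that the division in the last step is legitimate. I would include one sentence at the end pointing out that the hypothesis $\beta < \alpha$ (combined with $\alpha \le m$) ensures the denominator is positive and the resulting probability bound is nontrivial.
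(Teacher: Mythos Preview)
Your proposal is correct and matches the paper's proof essentially line for line: the paper also splits $\BE[A]$ over $\CS$ and its complement, bounds by $p(\CS)m + (1-p(\CS))\beta$, and rearranges to get $(m-\beta)p(\CS) \ge \alpha - \beta$. Your added remark about $m-\beta > 0$ is a reasonable clarification but not present in the paper.
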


\begin{proof}
	We have 
	\begin{align*}
	\ \BE[A] &= \sum_{a \in \CS} p(a) A(a) + \sum_{a\notin \CS}p(a) A(a) \le p(\CS) m + (1-p(\CS))\beta.
 \end{align*}
	which gives us that $(m-\beta)p(\CS) \ge \BE[A] - \beta \ge \alpha - \beta$.
\end{proof}

\subsection{Intersecting Families}

The following lemma will be crucial for the analysis. It is a special case of the Erd\H{o}s-Ko-Rado Theorem (see \cite{W84}) which says that when $n \ge 6$, then the size of any family of $\binom{[n]}{3}$ that intersects in two elements is at most $n-2$ (Lemma \ref{lemma:intfam} follows from the case $n=6$). We give a self-contained proof in Appendix \ref{sec:prelimproofs}.

\begin{lemma}
	\label{lemma:intfam}
	Let $\FF \subseteq \binom{[5]}{3}$ be a family of subsets such that any two sets in $\FF$ intersect in two elements. Then, $|\FF|\le 4$.
\end{lemma}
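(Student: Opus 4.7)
The plan is to reduce the claim to a trivial graph-theoretic fact by passing to complements. For $S \in \binom{[5]}{3}$ write $\comp S := [5]\setminus S \in \binom{[5]}{2}$. A one-line inclusion--exclusion gives $|S \cap T| = 5 - |\comp S \cup \comp T| = 1 + |\comp S \cap \comp T|$ for any two $3$-subsets of $[5]$, so the assumption on $\FF$ is equivalent to the statement that the $2$-subsets in $\CG := \{\comp S : S \in \FF\}$ pairwise share exactly one element. Since $S \mapsto \comp S$ is a bijection, it suffices to show $|\CG| \le 4$.

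I would then view $\CG$ as the edge set of a simple graph $G$ on vertex set $[5]$. Two distinct edges of $G$ share exactly one vertex iff they are incident rather than disjoint, so the translated condition is simply that $G$ has no matching of size $2$. The plan is to finish by classifying such graphs: I will argue that any graph without a matching of size $2$ is either a triangle or a subgraph of a star, which on five vertices gives at most $3$ and $4$ edges respectively, so in either case $|\CG| \le 4$.

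The only step with any real content is this classification, and it is a short two-case argument. If $G$ contains a triangle $uvw$, then any further edge has zero or one endpoints in $\{u,v,w\}$, and in each case one can directly exhibit a matching of size $2$ (against $uv$ in the first case, and against the triangle edge opposite the shared vertex in the second), forcing $G$ to be exactly the triangle. If $G$ is triangle-free, I would fix any path $vuw$ of length two in $G$ (the case of at most one edge being trivial); every other edge $e$ must meet both $uv$ and $uw$, and triangle-freeness forbids $e = vw$, so $e$ must pass through $u$, i.e., $G$ is a star centered at $u$. On $[5]$ the largest star $K_{1,4}$ has four edges, completing the bound. No serious obstacle arises here — the only care needed is in the triangle-free case to handle the degenerate sub-case of at most one edge separately, which is routine.
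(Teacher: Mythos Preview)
Your proof is correct, but it takes a genuinely different route from the paper's. The paper argues directly on the $3$-sets: if the union of the family has size at most $4$ then $|\FF|\le\binom{4}{3}=4$ trivially; if the union is all of $[5]$, one picks $\CF_1,\CF_2,\CF_3$ whose union is $[5]$, observes that $\CF_1\cap\CF_2\subseteq\CF_3$, and then shows any further set would be forced to intersect one of these three in a single element. Your argument instead passes to complements, turning the $2$-intersecting family of $3$-sets into a $1$-intersecting family of $2$-sets, i.e.\ a simple graph on $[5]$ with no matching of size two, and then classifies such graphs as either a triangle or a sub-star. What your approach buys is a clean reduction to a standard graph-theoretic fact that generalizes (this is essentially the $k=2$ case of the structure of intersecting families), whereas the paper's argument is a slightly more ad hoc but self-contained case analysis that avoids the complementation detour. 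Both are short; yours is arguably more conceptual.
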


\section{Lower Bounds on Non-negative Rank}
\label{sec:lowerbounds}

Let us recall the main technical lemma which we use to derive a lower bound on the non-negative rank of the lopsided unique disjointness matrix as well as the matching slack matrix.

\main*

We will prove the above lemma in Section \ref{sec:mainlemmaproof}. First we use it to derive non-negative rank lower bounds.

\subsection{Non-negative Rank of Lopsided Unique Disjointness}

In this section we prove Theorem \ref{thm:udisj}.

\udisj*

It will be convenient to assume that $n$ is divisible by $k$ and $\frac{n}k$ is a large enough integer. We split the universe $[n]$ into blocks of size $\frac{n}k$ where $\{\frac{n}k(i-1)+1,\dots, \frac{n}{k}i\}$ is the $i^\text{th}$ block for every $i \in [k]$.  

Define a distribution on $X \in \binom{[n]}{k}$ and $Y \in \bits^n$ given by \[q(xy) = \frac{A_{xy}}{\sum_{x',y'} A_{x'y'}},\]
where $A$ is the lopsided unique disjointness matrix defined in \eqref{eqn:udisj} with parameter $\rho$ and we view $y \in \bits^n$ as the indicator vector for a subset of $[n]$. Let $X^i$ denote the intersection of $X$ with the elements in the $i^{\text{th}}$ block, and let $Y^i$ be the projection of $Y$ onto the coordinates in the $i^{\text{th}}$ block. For every $x \in [\frac{n}k]$, we will use the notation $Y^i_x$ to denote the $x^{\text{th}}$ coordinate of $Y_i$, in other words $Y_{\frac{n}{k}(i-1)+x}$. 

Let $\CD$ denote the event that $x$ and $y$ are disjoint and $x^i$ has exactly one element for every $i$.  
\begin{lemma} 
	\label{lemma:base}
	Let $R$ be any random variable satisfying $X \arr R \arr Y$. Then, for every $i \in [k]$ it holds that 
	\begin{align*}
		\ \Infc[q]{R}{X^i}{X^{<i}Y^{\ge i}\CD}  +  \Infc[q]{R}{Y^i}{X^{\le i}Y^{>i}\CD}  &\ge \left(\frac{\rho}{1000}\right)^8 \log \left(\frac{n}{k}\right).
	\end{align*}
\end{lemma}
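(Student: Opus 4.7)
The plan is to prove the lemma by contradiction using Lemma~\ref{lemma:main} in a carefully conditioned subspace of $q$. Set $\gamma:=\rho/1000$ and $m:=n/k$, and suppose toward contradiction that $\Infc[q]{R}{X^i}{X^{<i}Y^{\ge i}\CD}+\Infc[q]{R}{Y^i}{X^{\le i}Y^{>i}\CD}<\gamma^8\log m$, so each individual term is at most $\gamma^8\log m$.

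\textbf{Step 1 (Baseline entropies).} Since $A_{xy}=1$ for every disjoint $(x,y)$, the distribution $q(\,\cdot\,|\CD)$ is uniform over configurations consistent with $\CD$. In particular the blocks $(X^j,Y^j)$ are mutually independent under $q|\CD$, with $X^j$ uniform on $[m]$ and $Y^j|X^j$ uniform on $\{y\in\bits^m:y_{X^j}=0\}$. A short entropy calculation gives
\[
\BH_q(X^i\mid X^{<i},Y^{\ge i},\CD)=\BH_q(X^i\mid Y^i,\CD)\ge\log m-1,\quad
\BH_q(Y^i\mid X^{\le i},Y^{>i},\CD)=\BH_q(Y^i\mid X^i,\CD)=m-1,
\]
and combined with the assumption these yield
\[
\BH_q(X^i\mid R,X^{<i},Y^{\ge i},\CD)\ge(1-\gamma^8)\log m-1,\quad
\BH_q(Y^i\mid R,X^{\le i},Y^{>i},\CD)\ge m-1-\gamma^8\log m.
\]

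\textbf{Step 2 (Reducing to one block while preserving Markov).} Fix values $x^*,y^*$ for $(X^{\neq i},Y^{\neq i})$ that are $\CD$-consistent on the non-$i$ blocks (so $x^*$ is singleton per block and $x^*\cap y^*=\varnothing$), and let $\tilde q:=q(\,\cdot\,|X^{\neq i}=x^*,Y^{\neq i}=y^*)$. Because this conditioning decomposes as an $X$-event times a $Y$-event, the Markov chain $X^i\arr R\arr Y^i$ under $q$ is preserved under $\tilde q$; moreover $|X|=k$ forces $X^i$ to be a singleton in block $i$, which we identify with an element of $[m]$. By \eqref{eqn:udisj}, the weight of $(X^i=e,Y^i=y)$, proportional to $A_{(x^*,e),(y^*,y)}$, equals $1$ if $e\notin y$ and is at most $1-\rho$ if $e\in y$, so
\[
\tilde q(X^i\cap Y^i=\varnothing)\ge\frac{m\cdot 2^{m-1}}{m\cdot 2^{m-1}+(1-\rho)\,m\cdot 2^{m-1}}=\frac{1}{2-\rho}\ge \tfrac12+\tfrac\rho4.
\]
Moreover the identity $\tilde q(\,\cdot\,|X^i\cap Y^i=\varnothing)=q(\,\cdot\,|X^{\neq i}=x^*,Y^{\neq i}=y^*,\CD)$ holds, since $\CD$ reduces to block-$i$ disjointness once $(x^*,y^*)$ are fixed to be disjoint singletons.

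\textbf{Step 3 (Transferring entropies and invoking Lemma~\ref{lemma:main}).} Sample $(x^*,y^*)$ from the marginal of $q|\CD$ on $(X^{\neq i},Y^{\neq i})$. Using the identity above together with the block independence under $q|\CD$ and a chain-rule manipulation, the baseline entropies from Step~1 translate (via Markov's inequality applied to the entropy deficit) to the following bounds for typical $(x^*,y^*)$:
\[
\BH_{\tilde q}(X^i\mid R,X^i\cap Y^i=\varnothing)\ge(1-\gamma^8)\log m-3,\quad
\BH_{\tilde q}(Y^i\mid R,X^i\cap Y^i=\varnothing)\ge m-\gamma^8\log m-3.
\]
For such $(x^*,y^*)$, with the preserved Markov chain, Lemma~\ref{lemma:main} applies (the range hypothesis $3/\log m\le\gamma^8\le 2^{-64}$ follows from $\rho^8\ge 3\cdot 1000^8/\log m$ and $\rho\le 1$), giving
$\tilde q\bigl\{(x,r):\tilde q(Y^i_x=0\mid r)\ge\tfrac{1+\gamma}{2}\bigr\}\le 64\gamma$. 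Using the Markov chain, $\tilde q(Y^i_{X^i}=0\mid X^i=x,R=r)=\tilde q(Y^i_x=0\mid r)$, so
\[
\tilde q(X^i\cap Y^i=\varnothing)=\BE_{(x,r)\sim\tilde q}\bigl[\tilde q(Y^i_x=0\mid r)\bigr]\le 64\gamma+\tfrac{1+\gamma}{2}\le \tfrac12+65\gamma.
\]
Combining with the lopsided lower bound $\tilde q(X^i\cap Y^i=\varnothing)\ge\tfrac12+\tfrac\rho4=\tfrac12+250\gamma$ gives $250\gamma\le 65\gamma$, the desired contradiction. The \emph{main obstacle} is the entropy transfer in Step~3: the hypothesis controls mutual informations under $q|\CD$ that condition on $Y^i$ (for the $X^i$-piece) and on $X^i$ (for the $Y^i$-piece), whereas Lemma~\ref{lemma:main} requires entropies under $\tilde q$ given only $R$ and the disjointness event. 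Reconciling these via the identity of Step~2 and a careful direct-sum style averaging over $(x^*,y^*)$ is where the genuine technical work lies.
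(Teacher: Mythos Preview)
Your overall strategy---assume the information sum is small, pass to a conditioned product subspace, compute that the disjointness probability in block $i$ is at least $\tfrac12+\Omega(\rho)$, then use Lemma~\ref{lemma:main} to force it below $\tfrac12+O(\gamma)$---is exactly the paper's. The numerical endgame is also the same. The gap is precisely the one you flag in Step~3, and it is real, not cosmetic.

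Concretely: your $\tilde q$ conditions on $X^{\neq i}=x^*$ and $Y^{\neq i}=y^*$, so the entropy you need is
\[
\BH_{\tilde q}(X^i\mid R,\,X^i\cap Y^i=\varnothing)=\BH_q\!\bigl(X^i\mid R,\,X^{\neq i},\,Y^{\neq i},\,\CD\bigr),
\]
whereas Step~1 only delivers $\BH_q(X^i\mid R,X^{<i},Y^{\ge i},\CD)$. Going from the latter to the former requires \emph{removing} $Y^i$ (harmless) but \emph{adding} $X^{>i}$ and $Y^{<i}$ to the conditioning. That extra conditioning can drop the entropy by $\Infc[q]{X^i}{X^{>i}Y^{<i}}{R,X^{<i},Y^{>i},\CD}$, and your per-$i$ contradiction hypothesis gives no handle on this quantity (block independence under $q|\CD$ does not survive further conditioning on $R$). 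The same issue hits the $Y^i$ entropy. ``Direct-sum style averaging over $(x^*,y^*)$'' cannot manufacture the missing bound.

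The paper sidesteps this by a different choice of subspace. It fixes only $W=X^{<i}Y^{>i}$---the variables that already appear in \emph{both} information terms---and in addition conditions on the event $\CE$ that each $X^j$ is a singleton and $X\cap Y$ is contained in block $i$. Given $w$ and $r$, $\CE$ factors as an $X$-event times a $Y$-event, so the Markov chain survives and $p:=q(\,\cdot\,|w,\CE)$ is product in $(X,Y)$ given $r$. Now the two information terms become $\Infc[p]{R}{X^i}{Y^i\CD}$ and $\Infc[p]{R}{Y^i}{X^i\CD}$ on the nose, and since dropping the extra $Y^i$ (resp.\ $X^i$) conditioning only raises entropy, one gets $\BH_p(X^i\mid R,\CD)$ and $\BH_p(Y^i\mid R,\CD)$ large---exactly the hypotheses of Lemma~\ref{lemma:main}---with no loss. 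Replace your Step~2/3 conditioning with this $(W,\CE)$ conditioning and the rest of your argument goes through unchanged.
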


Using Lemma \ref{lemma:directsum} and Proposition \ref{prop:nrank} together with the above, we have
\[2\log \nrank(M) \ge \sum_{i=1}^k\left(\Infc[q]{R}{X^i}{X^{< i}Y^{\ge i}\CD}  +  \Infc[q]{R}{Y^i}{X^{\le i}{Y^{>i}}\CD}\right) \ge \left(\frac{\rho}{1000}\right)^8 k \log \left(\frac{n}{k}\right) ,\]
which proves Theorem \ref{thm:udisj} as $\binom{n}{s} = 2^{\Theta\left(s\log\left(\frac{n}{s}\right)\right)}$.

\begin{proof}[Proof of Lemma \ref{lemma:base}]
	Fix $i$ as in the statement of the lemma. Let $\Eps$ be the event that $X^i$ has exactly one element for every $i$ and $X \cap Y$ is a subset of the $i^{\text{th}}$ block. Note that $\CD \subset \CE$. 
	
Writing $W = X^{<i}Y^{>i}$, we will prove that for any fixed value $w$ attained by $W$, we have that 
\begin{equation}
	\label{eq:inf}
	\ \Infc[q]{R}{X^i}{wY^i\CD}  +  \Infc[q]{R}{Y^i}{wX^i\CD}  \ge \left(\frac{\rho}{1000}\right)^8 \log \left(\frac{n}k\right), 
\end{equation}
and the proof is completed by averaging over $w$. 

Note that after fixing $w$ and $r$, $X$ and $Y$ are independent and they can be checked separately to verify that the event $\CE$ occurs as for every block $j\neq i$ either $X^j$ or $Y^j$ is fixed given $w$. It follows that the distribution $p(xyr) := q(xyr|w\CE)$ satisfies $p(xy|r)=p(x|r)p(y|r)$. Furthermore under the distribution $p$, $\CD$ is equivalent to the event that $Y^i_{X^i}=0$. We can compute $p(\CD) \ge \frac{1}{1+(1-\rho)} = \frac{1}{2-\rho}$ since the matrix entries are given by
	\begin{equation}
		A_{xy} = \begin{cases} 1 \text{ when } (x, y) \in \supp(p(xy|\CD)),\\
					\le 1-\rho \text{ when } (x,y) \in \supp(p(xy|\comp{\CD})),
\end{cases}
	\label{eqn:disjslack}
\end{equation}
and the number of entries is the same in both cases.

For the sake of contradiction assume that \eqref{eq:inf} does not hold. Then, we are going to show that $p(\CD)$ must be significantly smaller than what we computed above. Define $\CB = \{(x^i,r)~|~p(Y^i_{x^i}=0|r)\ge \frac12 + \frac{\rho}{2000}\}$. We can upper bound $p(\CD)$ as follows:
\begin{align*}
	\ p(\CD) &= \sum_{x^i,r} p(x^i,r)p(Y^i_{x^i}=0|r,X^i=x^i) = \sum_{x^i,r} p(x^i,r)p(Y^i_{x^i}=0|r)  \\
	\		 &\le \sum_{(x^i,r) \notin \CB} p(x^i,r)\left(\frac12+\frac{\rho}{2000}\right) + \sum_{(x^i,r) \in \CB} p(x^i,r)p(Y^i_{x^i}=0|r) \\
	\		 &\le \left(\frac12+ \frac{\rho}{2000}\right) + p((x^i,r) \in \CB),
\end{align*}
where the second equality follows since $p(xy|r)$ is product.

We will show that
\begin{claim}
	\label{claim:badxr}
	$p((x^i,r) \in \CB)) \le \frac{64\rho}{1000}$.
\end{claim}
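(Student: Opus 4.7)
The plan is to apply Lemma \ref{lemma:main} directly with the substitutions $X = X^i$, $Y = Y^i$, $m = n/k$, $\gamma = \rho/1000$, in the distribution $p$. The Markov chain $X^i \arr R \arr Y^i$ under $p$ follows from $X \arr R \arr Y$ (established earlier in the excerpt) since $X^i, Y^i$ are functions of $X, Y$. The required range $3/\log m \le \gamma^8 \le 2^{-64}$ is exactly the theorem hypothesis $3 \cdot 1000^8/\log(n/k) \le \rho^8 \le 1$, combined with the easy inequality $1000^8 > 2^{64}$. Finally, the lemma's event $\CB = \{(x,r) : p(Y_x = 0 | r) \ge (1+\gamma)/2\}$ matches our $\CB$ here, since $(1+\gamma)/2 = \tfrac{1}{2} + \tfrac{\rho}{2000}$. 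So the lemma's conclusion $p((x,r) \in \CB) \le 64\gamma$ yields precisely $p((x^i, r) \in \CB) \le 64\rho/1000$, which is Claim \ref{claim:badxr}.

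What remains is to verify the two entropy hypotheses of Lemma \ref{lemma:main}: $\BH_p(X^i | R, \CD) \ge (1-\gamma^8)\log m - 3$ and $\BH_p(Y^i | R, \CD) \ge m - \gamma^8 \log m - 3$. Writing $\pi := p(\cdot \mid \CD)$, these are equivalent to $\BHc{\pi}(X^i | R) \ge (1-\gamma^8)\log m - 3$ and $\BHc{\pi}(Y^i | R) \ge m - \gamma^8 \log m - 3$. The first step is to pin down $\pi(x^i, y^i)$. Because $p = q(\cdot | w, \CE)$ is invariant under permutations of the coordinates of the $i$-th block (as $A_{xy}$ depends only on $|x \cap y|$ and the conditioning events $w, \CE$ are block-symmetric), and because $A_{xy} = 1$ for every $(x, y) \in \CD$ by \eqref{eqn:udisj}, one obtains
\[\pi(x^i, y^i) = \frac{\ind[y^i_{x^i} = 0]}{m \cdot 2^{m-1}},\]
with the particular values of $A$ on $\CE \setminus \CD$ affecting only the normalizing constant $p(\CD)$, not the conditional distribution $\pi$ itself. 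This closed form yields $\BHc{\pi}(Y^i | X^i) = m-1$ exactly, and $\BHc{\pi}(X^i | Y^i) = \BE_\pi[\log k_0]$ where $k_0$ is the number of zero coordinates of $Y^i$. Since $k_0 - 1 \sim \text{Binom}(m-1, 1/2)$ under $\pi$, a routine Chernoff bound yields $\BE_\pi[\log k_0] \ge \log m - 3$ whenever $m = n/k$ exceeds an absolute constant, which is guaranteed by the theorem hypothesis.

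The contradiction assumption that \eqref{eq:inf} fails, together with the fact that $\CD \subset \CE$ (so $q(\cdot | w, \CD) = q(\cdot | w, \CE, \CD) = \pi$), gives $\Infc[\pi]{X^i}{R}{Y^i} \le \gamma^8 \log m$ and $\Infc[\pi]{Y^i}{R}{X^i} \le \gamma^8 \log m$. Combining with the baseline entropies just computed, using ``conditioning reduces entropy'' and the chain rule,
\[\BHc{\pi}(X^i | R) \ge \BHc{\pi}(X^i | R, Y^i) = \BHc{\pi}(X^i | Y^i) - \Infc[\pi]{X^i}{R}{Y^i} \ge (1 - \gamma^8)\log m - 3,\]
and analogously $\BHc{\pi}(Y^i | R) \ge m - 1 - \gamma^8 \log m$. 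Both entropy hypotheses of Lemma \ref{lemma:main} are therefore satisfied, so the lemma applies and closes the proof of Claim \ref{claim:badxr}.

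The main subtlety I expect is the derivation of the clean product-form expression for $\pi(x^i, y^i)$: it rests on the block-symmetry of $p$ together with the crucial observation that on $\CE$ we have $|x \cap y| \le 1$, so only the matrix entries controlled by \eqref{eqn:udisj} enter the computation and the potentially adversarial entries of $A$ for $|x \cap y| \ge 2$ drop out. Everything else is bookkeeping with the chain rule and a standard Chernoff tail bound.
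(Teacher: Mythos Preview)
Your proof is correct and follows essentially the same route as the paper: compute the baseline conditional entropies $\BHc{p}(X^i|Y^i\CD)\ge\log t-3$ and $\BHc{p}(Y^i|X^i\CD)\ge t-1$ via the explicit form of $p(\cdot|\CD)$ and a Chernoff tail bound, combine these with the failure of \eqref{eq:inf} to obtain the entropy hypotheses of Lemma~\ref{lemma:main}, and then invoke that lemma with $\gamma=\rho/1000$.

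One small inaccuracy in your write-up: the claim that $p=q(\cdot\mid w,\CE)$ is permutation-invariant because ``$A_{xy}$ depends only on $|x\cap y|$'' is not justified, since the entries of $A$ with $|x\cap y|=1$ are only constrained to be $\le 1-\rho$ and need not be equal. Fortunately your argument does not actually need this: the formula $\pi(x^i,y^i)=\ind[y^i_{x^i}=0]/(m\,2^{m-1})$ follows solely from the fact that $A_{xy}=1$ on $\CD$ (which you also state), making $\pi$ uniform on its support, together with the block-product structure of that support. The paper proceeds the same way, simply asserting $p(y^i|\CD)=\Delta(y^i)/(t2^{t-1})$ without invoking any symmetry of $p$ itself.
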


This implies $p(\CD) \le \frac12 + \frac{\rho}{2000} + \frac{64\rho}{1000} < \frac{1}{2-\rho}$, which contradicts the fact that $p(\CD)\ge\frac{1}{2-\rho}$. This finishes the proof. Next we prove Claim \ref{claim:badxr}.  
\end{proof}

\begin{proof}[Proof of Claim \ref{claim:badxr}]
	Set $t:=\frac{n}{k}$. Conditioned on $x\CD$, every coordinate of $y^i$ other than $x^i$ is uniform and hence $\BHc{p}(Y^i|X^i\CD) \ge t - 1$. Similarly, conditioned on $y\CD$, $X^i$ is uniform on the coordinates of $y^i$ that are zero. We may compute from \eqref{eqn:disjslack} that $p(y^i|\CD) = \frac{\Delta(y^i)}{t2^{t-1}}$ where $\Delta(y^i)$ is the number of zeros in $y^i$. The probability of any string $y^i$ with less than $t/4$ zeros is at most $\frac{1}{2^{t+1}}$ and using Proposition \ref{lemma:chernoff} their total measure under the distribution $p(y^i|\CD)$ can be bounded by $\frac{e^{-t/8}}2$. Hence, $\BHc{p}(X^i|Y^i\CD) \ge (1-\frac{e^{-t/8}}2) \log \left(\frac{t}{4}\right) \ge \log t - 3$.

	As $p(xyr)=q(xyr|w\CE)$, if \eqref{eq:inf} is not true, then 
	\[ \Infc[p]{R}{X^i}{Y^i\CD} = \BHc{p}(X^i|Y^i\CD) - \BHc{p}(X^i|RY^i\CD) \le \left(\frac{\rho}{1000}\right)^8 \log t,\]
	and a similar statement is obtained by writing $\Infc[p]{R}{Y^i}{X^i\CD}$ in terms of entropy. It follows that
\begin{align*}
	\ \BHc{p}(X^i|RY^i\CD) &\ge \left(1-\left(\frac{\rho}{1000}\right)^8\right) \log t - 3, \text{~and~} \BHc{p}(Y^i|RX^i\CD) \ge t - \left(\frac{\rho}{1000}\right)^8 \log t - 1. 
\end{align*}

Since entropy can only decrease under conditioning and $\left(\frac{\rho}{1000}\right)^8 \ge \frac{3}{\log t}$, $X^i, Y^i$ and $R$ satisfy the conditions of Lemma \ref{lemma:main} and the claim follows.
\end{proof}

\subsection{Non-negative Rank of the Matching Slack Matrix}

Let us recall the definition of the matching slack matrix $S$ from \eqref{eqn:matslack}. The entry $S_{um}$ corresponding to cut $u$ and perfect matching $m$ is 
\begin{equation*}
	\ S_{um} = \frac{|\delta(u) \cap m|+\beta-1}2,
\end{equation*}
where $0<\beta<1$ is a constant that will be determined by the proof and $\delta(u)$ is the set of edges crossing $u$. In this section we prove Theorem \ref{thm:matrank}. 

\matrank*

For convenience we assume that $\beta/\eps$ is an integer and we work with graphs on $4n+6$ vertices where $n$ is divisible by $\beta/\eps$. Set $t = \frac{n}{\beta/\eps}$ and note that $t \ge c$ for a large constant $c=c(\beta)$.

Using the slack matrix $S$, define a distribution on cuts $U$ and perfect matchings $M$ given by \[q(um) = \frac{S_{um}}{\sum_{u',m'} S_{u'm'}}.\]

Fix an arbitrary perfect matching $\CA$ and an arbitrary cut $\CZ$ that cuts all edges of $\CA$. Let $B=(B_0, B_1,\dots,B_n)$ be a uniformly random partition of the set $\CZ$ into blocks such that $|B_0|=3$ and $|B_j|=2$ for each $j\in[n]$. For $i\in[\frac{n}t]$, we call $B^i = \{B_{t(i-1)+1},\dots, B_{ti}\}$ the $i^{\text{th}}$ \emph{chunk} and for $j \in [t]$ use $B^i_j = B_{t(i-1)+j}$ to denote the $j^{\text{th}}$ block of the chunk $B^i$. Let $\CA_j$ denote the edges of $\CA$ that touch $B_j$. 

We say that the cut $U$ is \emph{consistent} with $B$ if $B_0 \subseteq U \subseteq \CZ$ and for every $i \in [\frac{n}t]$, $U^i := U \cap (B^i_1 \cup \dots \cup B^i_{t})$ equals $B^i_j$ for some $j \in [t]$. We say that $M$ is \emph{consistent} with $B$ if $\CA_0 \subseteq M$ and for each $j \in [n]$, either $\CA_j \subseteq M$ or $M$ matches $B_j$ to itself and matches the neighbors of $B_j$ under $\CA_j$ to themselves. 

\begin{figure*}[h!]
    \centering
     \includegraphics[height=2.5in]{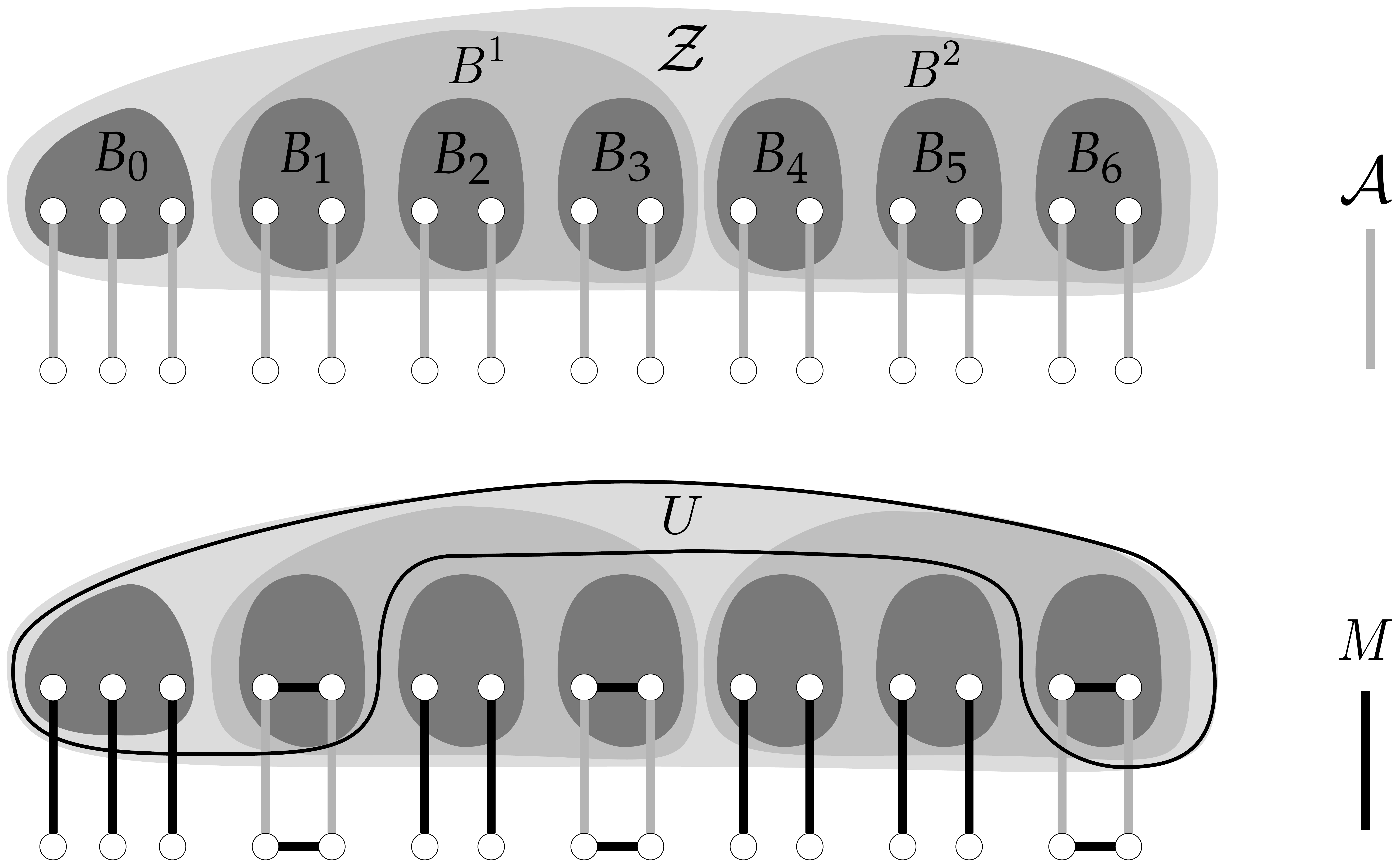}
   	\caption{\small Top to bottom: An example of $\CA, \CZ, B$ with $n=6$ and $t=3$; $U,M$ conditioned on $\CD$.}
	\label{fig:partition}
\end{figure*}

For $i \in [\frac{n}t]$, we write $M^i$ to denote the edges of $M$ contained in the vertices of $B^i$ and neighbors of $B^i$ under $\CA$. We write $M^i_j$ (and $\CA^i_j$) to denote the edges of $M^i$ (and $\CA$) corresponding to the vertices of $B^i_j$ and its neighbors under $\CA$.   

Let $U$ and $M$ be sampled from $q$ and let $\CD$ denote the event that $U$ and $M$ are consistent with $B$ and for every $i \in [\frac{n}t]$, $U^i$ does not cut the edges of $M^i$. 

\begin{lemma} 
	\label{lemma:basemat}
	Let $R$ be any random variable satisfying $U \arr R \arr M$. Then, for every $i \in [\frac{n}{t}]$ it holds that 
	\begin{align*}
		\ \Infc[q]{R}{U^i}{BU^{<i}M^{\ge i}\CD}  +  \Infc[q]{R}{M^i}{BU^{\le i}M^{>i}\CD}  &= \Omega(\log t).
	\end{align*}
\end{lemma}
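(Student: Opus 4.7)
The plan is to mirror the proof of Lemma \ref{lemma:base} for lopsided unique disjointness, but to replace the direct application of Lemma \ref{lemma:main} at the end with a sharper bound obtained by averaging over the random partition $B$ and invoking the $2$-intersecting family bound in Lemma \ref{lemma:intfam}. The target is to drive the conditional probability of $\CD$ below the base value $(1+\beta/2)/(3+\beta)$, which is strictly larger than $\tfrac13$ for small $\beta > 0$.

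\textbf{Reduction to one chunk.} First I would fix $i$ and set $W = (B, U^{<i}, M^{>i})$; it suffices to prove the bound after conditioning on each value $w$ of $W$ and averaging. Let $\CE$ denote the event that both $U$ and $M$ are consistent with $B$ (so $\CD \subseteq \CE$), and define $p(u^i m^i r) := q(u^i m^i r \mid w, \CE)$. The key structural observation, exactly as in Lemma \ref{lemma:base}, is that consistency of $U$ depends only on $U$ and $B$, and consistency of $M$ only on $M$ and $B$; since $B$ is fixed by $w$ and $U \arr R \arr M$ holds under $q$, the conditioning preserves the Markov chain and $p(u^i m^i \mid r)$ factorizes. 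Under $p$, $U^i$ is an index among the $t$ blocks of chunk $i$, while $M^i$ is encoded by a string in $\bits^t$ whose $j$-th bit indicates whether $M^i_j \neq \CA^i_j$ (i.e.\ whether $M$ matches inside $B^i_j$). The event $\CD$ then becomes precisely the disjointness-type event that the $U^i$-th bit of $M^i$ equals $1$.

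\textbf{Base probabilities and Lemma \ref{lemma:main}.} Using \eqref{eqn:matslack}, a consistent pair $(U, M)$ has slack $1+\beta/2$ when $\CD$ holds (only the three edges touching $B_0$ cross) and $2+\beta/2$ when $\CE \setminus \CD$ holds (two additional edges in the chosen block cross). A Chernoff argument analogous to Claim \ref{claim:badxr} (discarding strings of Hamming weight below $t/4$) gives $\BHc{p}(U^i \mid M^i, \CD) \ge \log t - O(1)$ and $\BHc{p}(M^i \mid U^i, \CD) \ge t - O(1)$. Assume for contradiction that the information bound fails with a sufficiently small constant, so that $\BHc{p}(U^i \mid R, \CD) \ge (1-\gamma^8)\log t - O(1)$ and $\BHc{p}(M^i \mid R, \CD) \ge t - \gamma^8 \log t - O(1)$ for a constant $\gamma$ to be chosen. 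Lemma \ref{lemma:main} then yields $p(\CB) \le 64\gamma$ with $\CB = \{(u^i, r) : p((M^i)_{j(u^i)} = 1 \mid r) \ge \tfrac{1+\gamma}{2}\}$, and hence
\[
p(\CD \mid \CE) \;\le\; \tfrac{1+\gamma}{2} + 64\gamma,
\]
which is close to $\tfrac12$ but not yet below the target $(1+\beta/2)/(3+\beta)$.

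\textbf{Random partition and Erd\H{o}s-Ko-Rado step.} This is where the random partition $B$ enters, following Section \ref{sec:matsketch}. After fixing the underlying size-$5$ cut and matching configuration, there are $\binom{5}{3}=10$ ways to split the cut into $(B_0, B^i_{j(U^i)})$; I would argue that if two partitions $b, b'$ agreeing outside the cut but with $|b_0 \cap b'_0| = 1$ both contribute non-negligibly to $p(\CD \mid r)$, then one can exhibit a consistent pair $(u, m)$ with $|\delta(u) \cap m| = 1$ satisfying $q(u \mid r), q(m \mid r) > 0$; but such a pair has slack only $\beta/2$ and hence $q$-mass at most $O(\beta)$. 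Thus only $B_0$-choices forming a $2$-intersecting family in $\binom{[5]}{3}$ can contribute substantially, and Lemma \ref{lemma:intfam} caps this family at $4$ out of $10$ splittings. Averaging yields an additional factor of $4/10$ on the $\tfrac12$ bound, giving roughly $p(\CD \mid \CE) \le \tfrac{2}{5} \cdot \tfrac12 + O(\gamma + \beta)$, which is strictly less than $(1+\beta/2)/(3+\beta)$ for small enough $\gamma, \beta$, yielding the desired contradiction.

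\textbf{Main obstacle.} The chief difficulty is handling the noise parameter $\beta$: in the exact matching polytope treated by Rothvo{\ss} \cite{R14}, the "$|\delta(u) \cap m| = 1$" configurations have slack zero and are outright forbidden, so the intersecting-family argument is clean; here they contribute nonzero mass $\beta/2$ that must be propagated quantitatively through the averaging. The main technical work will be to choose $\gamma$ and $\beta$ (hence $c' = c'(\beta)$) small enough that the combined bound $\tfrac{2}{5}\cdot\tfrac{1+\gamma}{2} + O(\gamma) + O(\beta)$ lies strictly below $(1+\beta/2)/(3+\beta)$, and to verify that the nonzero-probability argument in the intersecting-family step still forces the relevant configurations to form a $2$-intersecting system up to a small additive error proportional to $\beta$. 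Once these error terms are absorbed, the contradiction closes and Lemma \ref{lemma:basemat} follows.
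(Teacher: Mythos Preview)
Your plan tracks the paper's approach closely, but the reduction step as written contains a genuine inconsistency that would break the argument. You set $W=(B,U^{<i},M^{>i})$, fixing \emph{all} of $B$, and then later invoke symmetry over the $\binom{5}{3}=10$ ways to split the size-$5$ cut into $(B_0,B^i_j)$. Once $B$ is part of $w$ there is exactly one such split, so the averaging that feeds Lemma~\ref{lemma:intfam} is unavailable and the $4/10$ factor never materialises. In the paper the conditioning variable is $W=(U^{<i},M^{>i},B^{-i})$: the blocks $B_0$ and $B^i$ remain random, and the key symmetry claim (that, conditioned on $R$, $J=j$ and $M^i_j=\CA^i_j$, all ten re-splits of $B_0\cup B^i_j$ are equiprobable) is exactly what produces the Erd\H{o}s--Ko--Rado gain. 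A related slip is your event $\CE$: ``$U,M$ consistent'' is too coarse, since the slack then depends on every chunk; one must additionally require that $U^j$ does not cut $M^j$ for all $j\neq i$ so that the entries collapse to $(2+\beta)/2$ versus $(4+\beta)/2$ and $p(\CD)=\frac{2+\beta}{6+2\beta}$.

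Beyond this, your ``Main obstacle'' paragraph is on target but understates the work. The paper cannot absorb the $\beta$-noise into a single $O(\beta)$ term; it introduces an auxiliary ``bad matching'' event $\CF$, splits the good region into $\CG_1$ and $\CG_2$, and bounds $p(\CG_1,\CD)$ by a careful comparison against $q(|\delta(U)\cap M|=1\mid\CE\cup\CF)$ (Claim~\ref{claim:bad}). This step in turn requires further auxiliary sets $\CS_1,\CS_3,\CS_4$ controlling pairs $(j,r,b)$ with extreme marginals, heavy mass, or large correlated block-sets $\CJ_{jrb}$, none of which are covered by the single application of Lemma~\ref{lemma:main} you propose; without them the final inequality $p(\CG_2,\CD)\le\frac{4}{10}p(\comp\CD)+\beta$ and the resulting contradiction $p(\CD)\le\frac{9}{30}+O(\beta)<\frac13$ do not close.
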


Using Lemma \ref{lemma:directsum} and Proposition \ref{prop:nrank} together with the above, we have
\[2\log \nrank(S) \ge \sum_{i=1}^{n/t}\left(\Infc[q]{R}{U^i}{BU^{< i}M^{\ge i}\CD}  +  \Infc[q]{R}{M^i}{BU^{\le i}{U^{>i}}\CD}\right) = \Omega\left(\frac{n}t \log t\right) = \Omega\left(\frac{1}{\eps}\log(\eps n)\right),\]
which proves Theorem \ref{thm:matrank} as $\binom{n}{s} = 2^{\Theta\left(s\log\left(\frac{n}{s}\right)\right)}$.

\begin{proof}[Proof of Lemma \ref{lemma:basemat}]
	Fix a value of $i$ as in the statement of the lemma. Let $\Eps$ be the event that $U,M$ are consistent with $B$ and for each $j\neq i$, the edges of $M^j$ are not cut by $U^j$. Note that $\CD \subset \Eps$, but under $\Eps$ edges of $M^i$ may be cut by $U^i$. 

For any partition $b$ and for any $(u,m) \in \supp(p(um|b\CE))$, the weight of the slack matrix entries are given by
\begin{equation}
	S_{um} = \begin{cases} \frac{2+\beta}2 \text{ when } (u, m) \in \CD,\\
	 					\frac{4+\beta}2 \text{ when } (u,m) \in \comp\CD,\\
		   \end{cases}
	\label{eqn:slack}
\end{equation}
and note that the number of entries in both cases above is exactly $t2^{t-1}$.

Writing $W = U^{<i}M^{>i}B^{-i}$, we will prove that for any fixed value $w$ attained by $W$, the following holds 
\begin{equation}
	\label{eq:matinf}
	\ \Infc[q]{R}{U^i}{M^iBw\CD}  +  \Infc[q]{R}{M^i}{U^iBw\CD}  \ge \beta^8 \log t,
\end{equation}
and the proof is completed by averaging over $w$. Observe that the partition of the $i^{\text{th}}$ chunk $B^i$ and $B_0$ is still a random variable even after fixing $w$.

After fixing $wrb$, $U$ and $M$ are independent and they can be checked separately to verify that the event $\CE$ occurs as for every block $j\neq i$ either $U^j$ or $M^j$ is fixed given $wb$. It follows that the distribution $p(umbr)$ defined as $p(umbr) = q(umbr|w\CE)$ satisfies $p(um|rb) = p(u|rb)p(m|rb)$. A direct computation using \eqref{eqn:slack} then shows that \[p(\CD) = \frac{2+\beta}{(2+\beta)+(4+\beta)} = \frac{2+\beta}{6+2\beta} \ge \frac{1}{3}.\] 

For the sake of contradiction, assume that \eqref{eq:matinf} does not hold. Then, we are going to argue that $p(\CD)$ must be significantly smaller than $\frac13$ reminiscent to the proof of Lemma \ref{lemma:base}.
 
Define the random variable $J \in [t]$ to be $J=j$ if $U^i=B^i_j$ and note that the distribution $p(jm|rb)$ is product. Furthermore, note that under the distribution $p$, the event $\CD$ is equivalent to $M^i_J \neq \CA^i_J$.

For any $(j,r,b) \in \supp(p(jrb))$, we define the set of blocks correlated with $j$ as 
\[\CJ_{jrb} = \left\{k ~\bigg|~k\in [t], k \neq j, p(M^i_k \neq \CA^i_k|rb,M^i_j\neq \CA^i_j) \notin \left(\frac14,\frac34\right) \right\}.\] 
Define sets:
\begin{align*}
	\ \CS_1 &= \left\{(j,r,b)~\bigg|~ p(M^i_j \neq \CA^i_j|rb) \le \frac{1}{30}\right\}, &\CS_2 = \left\{(j,r,b)~\bigg|~p(M^i_j \neq \CA^i_j|rb) \ge \frac{1+\beta}{2}\right\},  \\
	\   \CS_3 &= \left\{(j,r,b)~\bigg|~p(j|rb\CD) \ge \frac{2^{1/\beta^2}}{t^{1-\beta^2}}\right\}, &\CS_4 = \left\{(j,r,b)~\bigg|~ |\CJ_{jrb}| \ge \frac{\beta}{2} \cdot \frac{t^{1-\beta^2}}{2^{1/\beta^2}}-1\right\}.
\end{align*}

For brevity, when $\CS \subseteq [t]\times\supp(p(rb))$ we write $p(\CS)$ to denote the probability of the event $(j,r,b) \in \CS$. Using Lemma \ref{lemma:main} and other entropy related arguments, we will be able to show that the contribution of $\CS_1, \CS_2, \CS_3$ and $\CS_4$ to $p(\CD)$ is small.

\begin{claim}
	\label{claim:error}
	\begin{enumerate*}[label={(\alph*)}]
		\item $p(\CS_1,\CD) \le \frac{1}{30}$,
		\item $p(\CS_2,\CD) \le 64\beta$,
		\item $p(\CS_3,\CD) \le 3\beta^2$,
		\item $p(\CS_4,\CD) \le \beta$.
	\end{enumerate*}
\end{claim}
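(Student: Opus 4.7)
The plan is as follows. Encode $M^i$ as a vector $Y^i \in \bits^t$ by setting $Y^i_k = 0$ iff $M^i_k \neq \CA^i_k$. Then $\CD$ is exactly the event $Y^i_J = 0$, and the independence $p(um|rb) = p(u|rb)p(m|rb)$ translates to $p(j,y^i|rb) = p(j|rb)\,p(y^i|rb)$; in particular, $p(\CD|j,r,b) = p(Y^i_j=0|rb) = p(M^i_j \neq \CA^i_j|rb)$. A direct computation from \eqref{eqn:slack} moreover shows that $p(j|B\CD)$ is uniform on $[t]$ while $p(y^i|B\CD)$ is proportional to the number of zeros $\Delta(y^i)$ of $y^i$. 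Part (a) follows immediately: for $(j,r,b)\in \CS_1$, $p(\CD|j,r,b)\le 1/30$, so $p(\CS_1 \cap \CD) \le 1/30$.

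For part (b), I will apply Lemma \ref{lemma:main} with $X=J$, $Y=Y^i$, conditioning random variable $(R,B)$, $m=t$, and $\gamma = \beta$, so that the bad set in the lemma coincides with $\CS_2$. The Markov chain $J - (R,B) - Y^i$ under $p$ is inherited directly from $p(um|rb) = p(u|rb)p(m|rb)$. There are two hypotheses to verify. First, base entropies: $\BHc{p}(Y^i|BJ\CD) = t - 1$ exactly (uniform with one bit fixed), while a Chernoff bound on $\Delta(Y^i)$ under $p(y^i|B\CD)\propto\Delta(y^i)$ yields $\BHc{p}(J|BY^i\CD) \ge \log t - 3$. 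Second, negating \eqref{eq:matinf} gives $\Infc[p]{R}{J}{Y^iB\CD} + \Infc[p]{R}{Y^i}{JB\CD} < \beta^8\log t$; combining this with monotonicity of conditional entropy yields $\BHc{p}(J|RB\CD) \ge \BHc{p}(J|RBY^i\CD) \ge (1-\beta^8)\log t - 3$ and symmetrically $\BHc{p}(Y^i|RB\CD) \ge t - \beta^8\log t - 3$. Lemma \ref{lemma:main} then delivers $p(\CS_2) \le 64\beta$, hence $p(\CS_2 \cap \CD) \le 64\beta$.

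Parts (c) and (d) are Markov-type tail arguments on top of the entropy estimates of (b). For (c), define $a(r,b) := \log t - \BHc{p}(J|R=r,B=b,\CD) \ge 0$; the bound from (b) gives $\BE_{p(rb|\CD)}[a(r,b)] \le \beta^8\log t + 3$, so by Markov at most a $2\beta^4$-fraction of $(r,b)$ under $p(\cdot|\CD)$ has $a(r,b) > \beta^4\log t$. For the remaining $(r,b)$, Lemma \ref{lemma:entsupp} applied to $p(j|rb\CD)$ with $\gamma' = \beta^2$ bounds the mass of $\{j : p(j|rb\CD) > 2^{(a(r,b)+1)/\beta^2}/t\}$ by $\beta^2$, and since $a(r,b) \le \beta^4\log t$ the threshold is at most $2^{1/\beta^2}/t^{1-\beta^2}$. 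Summing the two contributions gives $p(\CS_3 \cap \CD) \le 3\beta^2$. For (d), for any $k \in \CJ_{jrb}$, $p(Y^i_k=0|rb,Y^i_j=0)$ differs from $1/2$ by at least $1/4$, so Proposition \ref{prop:binent} yields $\BH(Y^i_k|rb,Y^i_j=0) \le 1 - c$ for an absolute constant $c>0$. Since $p(y^i|r,b,J=j,\CD) = p(y^i|rb,y^i_j=0)$, the chain rule gives $\BH(M^i|R=r,B=b,J=j,\CD) \le t - 1 - c|\CJ_{jrb}|$. Averaging over $p(jrb|\CD)$ and invoking $\BHc{p}(M^i|RBJ\CD) \ge t - 1 - \beta^8\log t$ from (b) yields $\BE_{jrb|\CD}[|\CJ_{jrb}|] = O(\beta^8\log t)$, after which Markov delivers $p(\CS_4 \cap \CD) \le \beta$ once $t$ is sufficiently large relative to $\beta$.

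The main obstacle is part (b), where the matching slack matrix must be fitted into the abstract template of Lemma \ref{lemma:main}. One must identify the right encoding $Y^i$ of $M^i$ so that $\CD$ becomes the ``disjointness-like'' event $Y^i_J = 0$; one must extract the base entropies from the slack weighting, the subtler of which is the Chernoff-based estimate $\BHc{p}(J|BY^i\CD) \ge \log t - 3$ that absorbs the reweighting $p(y^i|B\CD) \propto \Delta(y^i)$; and one must check that the Markov chain $J - (R,B) - Y^i$ survives the conditioning on $w\CE$. Once (b) is in place, parts (c) and (d) reduce to standard Markov-inequality arguments on the entropy deficits it produces.
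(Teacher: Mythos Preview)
Your proposal is correct and follows essentially the same approach as the paper: the same encoding $Y^i_k = \ind_{M^i_k = \CA^i_k}$, the same Chernoff estimate for $\BHc{p}(J|BY^i\CD)$, the same application of Lemma~\ref{lemma:main} for part~(b), and the same Markov/averaging arguments on entropy deficits for parts~(c) and~(d). The only cosmetic difference is that in~(d) you average to get $\BE_{p(jrb|\CD)}[|\CJ_{jrb}|] = O(\beta^8\log t)$ and then apply Markov to $|\CJ_{jrb}|$, whereas the paper applies the averaging directly to the entropy $\BHc{p}(M^i|rb,J=j,\CD)$; the two are equivalent via the pointwise bound $\BHc{p}(M^i|rb,J=j,\CD) \le t - 1 - c|\CJ_{jrb}|$.
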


Hence, denoting by $\CG$ the complement of the event $\CS_1 \cup \CS_2 \cup \CS_3 \cup \CS_4$, we can bound 
\begin{align*}
	\ p(\CD) &\le  p(\CG,\CD) + \frac{1}{30} + 64 \beta + 3\beta^2 + \beta \le p(\CG,\CD) + \frac{1}{30} + 2^7\beta \stepcounter{equation}\tag{\theequation}\label{eqn:upper}.
\end{align*}

To bound the contribution of $\CG$, we need to use the combinatorial structure of matchings. For this, we further split $\CG$ into two events. We say $(j,r,b)\in \CG_2$ if $(j,r,b) \in \CG$ and for all partitions $b'$ such that $(j,r,b')\in \CG$ and $b'$ agrees with $b$ on all the blocks except $b_0$ and $b^i_j$, it holds that $|b_0 \cap b'_0|\ge 2$. We define $\CG_1 = \CG\setminus\CG_2$. Note that by definition if $(j,r,b)\in \CG_1$, then there exists another partition $b'$ such that $(j,r,b')\in \CG_1$ and $b'$ agrees with $b$ on all the blocks except $b_0$ and $b^i_j$. 

As discussed in Section \ref{sec:matsketch}, most of the contribution comes from the event $\CG_2$ which we can bound by

\begin{claim}
	\label{claim:good}
	\[ p(\CG_2,\CD) \le \frac{4}{10} p(\comp\CD) + \beta = \frac{4}{10}\cdot\frac{4+\beta}{6+2\beta} + \beta.\] 
\end{claim}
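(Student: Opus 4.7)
The plan is to split the argument into a $\CD \to \comp{\CD}$ transfer using the exclusion of $\CS_2$, followed by a symmetry argument exploiting the random partition together with Lemma~\ref{lemma:intfam} on $2$-intersecting families in $\binom{[5]}{3}$.

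\emph{Step 1 (Transfer).} For any $(j,r,b) \in \CG_2 \subseteq \CG$, the definition of $\CG$ forbids $\CS_2$, so $p(M^i_j \neq \CA^i_j \mid r,b) < (1+\beta)/2$ and thus $p(M^i_j = \CA^i_j \mid r,b) > (1-\beta)/2$. Since $\CD$ and $\comp{\CD}$, conditional on $(J=j, r, b)$, are respectively the events $M^i_j \neq \CA^i_j$ and $M^i_j = \CA^i_j$, the ratio $p(\CD \mid j,r,b) / p(\comp{\CD} \mid j,r,b)$ is at most $(1+\beta)/(1-\beta)$, and summing over $(j,r,b) \in \CG_2$ yields $p(\CG_2, \CD) \le \tfrac{1+\beta}{1-\beta}\, p(\CG_2, \comp{\CD})$.

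\emph{Step 2 (Symmetry + intersecting family).} The heart of the argument is to show $p(\CG_2, \comp{\CD}) \le \tfrac{4}{10}\, p(\comp{\CD})$. Fix $j$, $r$, and the ``external'' portion $b^{\mathrm{ext}}$ of $b$ (all blocks except $b_0$ and $b^i_j$). The five remaining vertices $V_5 \subseteq \CZ$ are to be partitioned into $b_0$ (size $3$) and $b^i_j$ (size $2$) in one of $\binom{5}{3}=10$ ways. I claim $p(j,r,b,\comp{\CD})$ is the same across all $10$ splits. Using $p(u,m,r,b) \propto q(u,m,r)\,\ind[\CE,\, W=w]$ (with $q(b)$ uniform and $B$ independent of $(U,M,R)$ under $q$), the symmetry reduces to checking that the set of $(u,m)$ satisfying $J=j$, $\comp{\CD}$, $\CE$, $W=w$---together with the weight $q(u,m,r)$---is identical across the splits. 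Specifically, any such cut must satisfy $b_0 \subseteq u$ and $u^i = b^i_j$, hence $u = V_5 \cup (\text{contributions from chunks} \neq i)$, where the latter depend only on $w$ and $b^{\mathrm{ext}}$; and $\comp{\CD}$ together with $\CE$-consistency at $b_0$ forces $m$ to contain all five $\CA$-edges between $V_5$ and its $\CA$-neighbors $V_5'$, while outside $V_5 \cup V_5'$ the constraints on $m$ depend only on $w$ and $b^{\mathrm{ext}}$. The crux is that $V_5 = b_0 \cup b^i_j$ is split-invariant.

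Now for fixed $(j, r, b^{\mathrm{ext}})$, let $\CF \subseteq \binom{V_5}{3}$ be the set of values of $b_0$ for which $(j,r,b) \in \CG_2$. For any $b_0, b'_0 \in \CF$, the corresponding partitions $b, b'$ agree outside $\{b_0, b^i_j\}$ and both lie in $\CG_2 \subseteq \CG$, so the defining property of $\CG_2$ forces $|b_0 \cap b'_0| \ge 2$. Thus $\CF$ is a $2$-intersecting family in $\binom{[5]}{3}$ and Lemma~\ref{lemma:intfam} yields $|\CF| \le 4$. Combined with the split-invariance of $p(j,r,b,\comp{\CD})$, summing first over $b_0 \in \CF$ and then over $(j,r,b^{\mathrm{ext}})$ gives $p(\CG_2,\comp{\CD}) \le \tfrac{4}{10}\, p(\comp{\CD})$. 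Combining with Step~1 and $p(\comp{\CD}) \le 1$,
\[
p(\CG_2, \CD) \le \tfrac{4(1+\beta)}{10(1-\beta)}\, p(\comp{\CD}) = \tfrac{4}{10} p(\comp{\CD}) + \tfrac{8\beta}{10(1-\beta)}\, p(\comp{\CD}) \le \tfrac{4}{10} p(\comp{\CD}) + \beta,
\]
the last inequality holding for $\beta$ sufficiently small (e.g.\ $\beta \le 1/5$).

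The main obstacle is the split-invariance claim in Step~2: one must carefully verify that varying which $3$ of the $5$ vertices of $V_5$ are assigned to $b_0$ does not alter (i) the set of allowable cuts $u$, (ii) the forced restriction of $m$ to $V_5 \cup V_5'$, or (iii) the consistency conditions imposed by $\CE$ on chunks other than $i$. Each of these follows from the split-invariance of $V_5$ together with the fact that $\CA$, the set $V_5'$, and the partition blocks outside $\{b_0, b^i_j\}$ are all determined by $w$ and $b^{\mathrm{ext}}$.
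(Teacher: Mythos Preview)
Your proof is correct and follows essentially the same approach as the paper: a transfer from $\CD$ to $\comp{\CD}$ using the exclusion of $\CS_2$, followed by the symmetry of the $\binom{5}{3}$ splits of $V_5$ together with Lemma~\ref{lemma:intfam}. The only difference is cosmetic: in Step~1 the paper uses the additive bound $p(M^i_j\neq\CA^i_j\mid rb)\le p(M^i_j=\CA^i_j\mid rb)+\beta$ (yielding $p(\CG_2,\CD)\le p(\CG_2,\comp{\CD})+\beta$ directly), whereas you use the multiplicative ratio $(1+\beta)/(1-\beta)$, which forces the extra constraint $\beta\le 1/5$ at the end; the paper's version is slightly cleaner but the two are interchangeable.
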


Next we want to relate the contribution of $\CG_1$ to the probability of the event $|\delta(U) \cap M|=1$ under the distribution $q$. Since, this probability is not zero under the slack matrix $S$, we will need some more notation to bound it more carefully. We call a matching $M$ to be \emph{bad} for $B$ if $M$ includes exactly one edge crossing $B_0$ which is an edge of $\CA_0$ and for other blocks either $\CA_j \subseteq M$ or $M$ matches $B_j$ to itself and matches the neighbors of $B_j$ under $\CA_j$ to themselves. See Figure \ref{fig:badmat}.

\begin{figure*}[!h]
		\centering
        \includegraphics[height=1.2in]{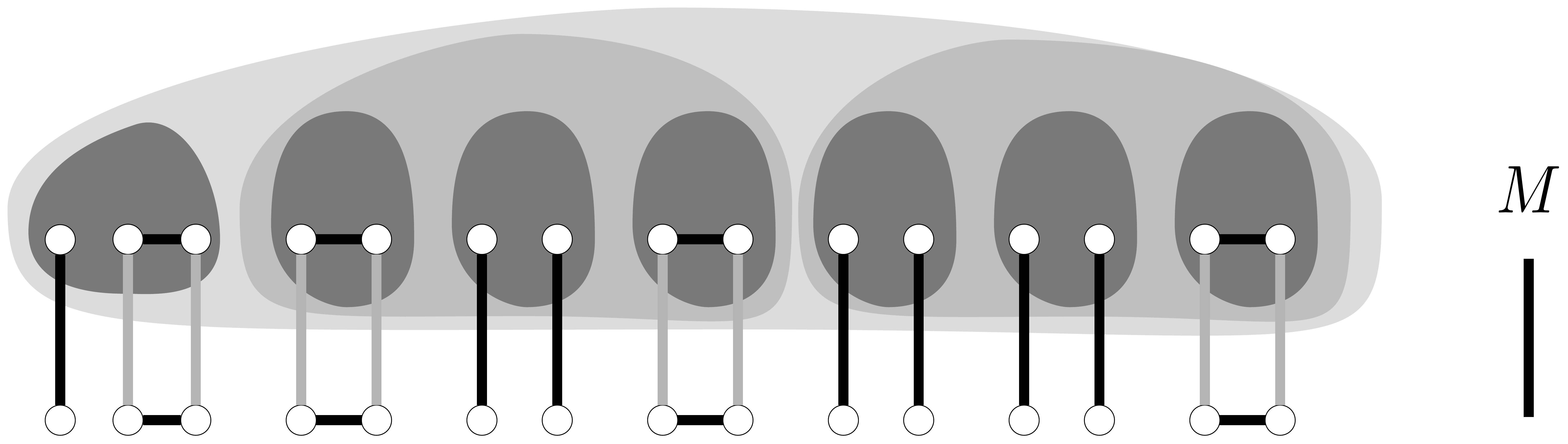}
        \caption{\footnotesize A bad matching $M$}
	\label{fig:badmat}
\end{figure*}

Let $\CF$ denote the event that $U$ is consistent with $B$, $M$ is \emph{bad} for $B$ and for each $j\neq i$, the edges of $M^j$ are not cut by $U^j$. For any partition $b$, the entries of the slack matrix when $(u,m) \in \supp(q(um|b\CF)$ are given by
\begin{equation}
	S_{um} = \begin{cases} \frac{2+\beta}2 \text{ when } |\delta(u) \cap m|=3\\
			  \frac{\beta}2 \text{ when } |\delta(u) \cap m|=1.
\end{cases}
	\label{eqn:slackbad}
\end{equation}
Note that the number of entries in the two cases above is exactly $3t2^{t-1}$ since for every fixing of all the edges of $M$ except $M_0$, there are exactly three matchings that are bad.

With the above, we will be able to show that 
\begin{claim}
	\label{claim:bad}
	\[ p(\CG_1,\CD) \le 48000 q(|\delta(U)\cap M|=1~|~\CE\cup\CF) + \beta \le 2^{14}\beta.\]
\end{claim}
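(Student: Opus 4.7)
The plan is to transfer the mass $p(\CG_1, \CD)$ onto pairs $(U,M)$ with $|\delta(U)\cap M|=1$, whose slack in the matching slack matrix $S$ is only $\beta/2$ rather than $(2+\beta)/2$, so that their measure under $q$ is small. The transfer uses the defining property of $\CG_1$: for every $(j,r,b)\in\CG_1$ there exists a companion partition $b'=b'(j,r,b)$, differing from $b$ only on $b_0$ and $b^i_j$ with $|b_0\cap b'_0|=1$, such that $(j,r,b')\in\CG_1$. I would fix such a $b'$ by a canonical rule so that the map $(j,r,b)\mapsto (j,r,b')$ has bounded multiplicity.

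The key combinatorial step (Figures~\ref{fig:errorex}, \ref{fig:errorex1}) is that a cut $u^*$ consistent with $b'$ with $U^i(u^*)=(B^i_j)'$, paired with a matching $m^*$ consistent with $b$ with $M^i_j(m^*)\neq\CA^i_j$, satisfies $|\delta(u^*)\cap m^*|=1$; moreover, viewed through partition $b'$, the matching $m^*$ is \emph{bad} in the sense of Figure~\ref{fig:badmat}, so the pair $(u^*,m^*)$ lies in $\CF$ under $b'$. Since $B$ is sampled independently of $(U,M,R)$ and $U\arr R \arr M$ under $q$, we have $q(u,m,r,b) = q(rb)\,q(u|r)\,q(m|r)$; after conditioning on $w$ and $\CE$, this factorization gives
\[ p(j,r,b,\CD) \;=\; \frac{q(rb)}{q(w\CE)}\, A_j(r,b)\, N_j(r,b), \]
where $A_j(r,b), N_j(r,b)$ are the $q(\cdot|r)$-masses of cuts, resp.\ matchings, compatible with $(j,b,\CE)$. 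The same factorization applied under partition $b'$ but with the ``mixed'' pair $(u^*,m^*)$ yields the corresponding term $\frac{q(rb')}{q(w\CF)}A_j(r,b')N_j(r,b)$ inside $q(|\delta(U)\cap M|=1,\CF\,|\,w\CF)$.

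To complete the first inequality, I would bound the ratio of the two expressions above using the $\CG_1$ conditions: $(j,r,b),(j,r,b')\notin \CS_1\cup\CS_2$ controls $N_j(r,b')/N_j(r,b)$ from below (since both $p(M^i_j\neq\CA^i_j|r\cdot)\in[1/30,(1+\beta)/2]$); $(j,r,b),(j,r,b')\notin \CS_3\cup\CS_4$ controls $A_j(r,b')/A_j(r,b)$ through the bound on $p(j|r\cdot,\CD)$ and on the number of correlated blocks $|\CJ_{\cdot rb}|$, so that the switch $b^i_j\to(b^i_j)'$ does not inflate the consistency-mass of the cut. Combining these with $q(rb)=q(rb')$, the slack-ratio $(2+\beta)/\beta$ read off from \eqref{eqn:slack} and \eqref{eqn:slackbad}, and the finite multiplicity of the companion map, each $p(j,r,b,\CD)$ is charged to a constant multiple of the transferred term and summing over $\CG_1$ yields at most $48000\,q(|\delta(U)\cap M|=1\,|\,\CE\cup\CF)$; the additive $\beta$ absorbs partitions where the canonical companion choice fails, handled by averaging over all valid $b'$. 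The second inequality is a direct slack computation: each partition contributes $3t2^{t-1}$ pairs of slack $\beta/2$ to $\CF\cap\{|\delta|=1\}$ and contributes $t2^{t-1}(3+\beta) + 3t2^{t-1}(1+\beta)$ of total slack to $\CE\cup\CF$, so $q(|\delta|=1\,|\,\CE\cup\CF) \le \tfrac{3\beta}{12+8\beta} \le \beta/4$, giving $48000\cdot\beta/4 + \beta \le 2^{14}\beta$.

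The main obstacle is controlling $A_j(r,b)/A_j(r,b')$ and $N_j(r,b)/N_j(r,b')$ simultaneously for $b$ and $b'$ that differ on two blocks, because the admissible supports of $U,M$ under $\CE$ shift with $b$. Here the exclusion from $\CS_4$ is essential: it bounds the conditional correlation of $M^i_k$ with $M^i_j$ for $k\neq j$, which is exactly what could otherwise cause switching $b^i_j$ to $(b^i_j)'$ to change the matching consistency-mass by an unbounded factor. Once this ratio bound is established, the rest is bookkeeping, and the constant $48000$ is the product of $30$ (from $\CS_1$), a $(2+\beta)/\beta$-type factor that cancels with the $\beta$ in the second inequality, and the bounded multiplicity of the companion map.
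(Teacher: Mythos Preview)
Your transfer map does not land where you need it. You choose, for each $(j,r,b)\in\CG_1$, a companion $b'$ differing from $b$ only on $b_0$ and $b^i_j$, and then take $u^*=b'_0\cup (b^i_j)'$. But $b'_0\cup (b^i_j)'$ is the \emph{same five-vertex set} as $b_0\cup b^i_j$. Pairing this with any $m^*$ that is consistent with $b$ and has $M^i_j\neq\CA^i_j$ gives $|\delta(u^*)\cap m^*|=3$ (the three edges of $\CA_0$ still cross, and $b^i_j$ is matched internally), not $1$. It is true that such an $m^*$ is bad for $b'$, but the pair $(u^*,m^*)$ you describe lies in the $|\delta|=3$ part of $\CF$, not the $|\delta|=1$ part, so no $\beta$-smallness is gained.

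The paper's mechanism is structurally different and this difference is essential. One fixes a \emph{single} witness $\ell$ per $(r,b)$ with $(\ell,r,b)\in\CG_1$, takes its companion $b'$ (differing on $b_0,b^i_\ell$), and then runs the argument for \emph{all} $j\neq\ell$. The cut stays $u=b_0\cup b^i_j$ (unchanged, since $b=b'$ at block $j$); the matching is modified at block $\ell$, passing to one consistent with $b'$ that satisfies $M^i_j\neq\CA^i_j$ \emph{and} $M^i_\ell\neq\CA^i_\ell$ (in $b'$-coordinates). That matching is bad for $b$, and now $|\delta(u)\cap m|=1$. Concretely one shows
\[
p(J=j,M^i_j\neq\CA^i_j,r,b)\;\le\;120\,p(J=j,M^i_j\neq\CA^i_j,M^i_\ell\neq\CA^i_\ell,r,b)
\;\le\;400\,q(\cdots,\CE,r,b\mid\CE\cup\CF),
\]
and then a symmetry argument at block $\ell$ (not $j$) yields the further factor $120$, giving $48000$. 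The first $120$ uses exactly that $(\ell,r,b)\notin\CS_1$ (so $p(M^i_\ell\neq\CA^i_\ell\mid rb)\ge\tfrac{1}{30}$) and that $j\notin\CJ_{\ell rb}$ (so $p(M^i_j\neq\CA^i_j\mid rb,M^i_\ell\neq\CA^i_\ell)\ge\tfrac14$). This is where $\CS_4$ enters: it bounds $|\CJ_{\ell rb}\cup\CJ_{\ell rb'}\cup\{\ell\}|$, and together with the $\CS_3$ bound on $p(j\mid rb,\CD)$ it shows that the excluded $j$'s contribute at most the additive $\beta$. So $\CS_3,\CS_4$ are not used to control a ratio $A_j(r,b)/A_j(r,b')$ as you suggest; there is no such cut-mass ratio to control, because the cut is never moved. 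Your second inequality is fine; the gap is entirely in the first.
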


Plugging the above claims in \eqref{eqn:upper} and choosing $\beta$ to be a sufficiently small constant, we derive that 
\[ p(\CD) \le \frac{4}{10}\cdot\frac{4+\beta}{6+2\beta} + \frac{1}{30} + 2^7\beta + 2^{14}\beta \le \frac{9}{30} + 2^{15}\beta,\]
which is a contradiction as $p(\CD) = \frac{2+\beta}{6+2\beta} \ge \frac{1}{3}$. This finishes the proof. Next we turn to proving the claims. 
\end{proof}

\begin{proof}[Proof of Claim \ref{claim:good}]

For any $(j,r,b)\in \CG$, we have $p(M^i_j \neq \CA^i_j|rb) \le p(M^i_j = \CA^i_j|rb) + \beta$ since  
\[p(M^i_j \neq \CA^i_j|rb) - p(M^i_j = \CA^i_j|rb) = 2p(M^i_j \neq \CA^i_j|rb) - 1 \le 2\cdot\frac{1+\beta}{2} - 1 = \beta.\]

Recall that $\CD$ is equivalent to the event $M^i_J \neq \CA^i_J$ under $p$. As $\CG_2 \subset \CG$ and $p(jm|rb)$ is product, 
\begin{align*}
	\ p(\CD, \CG_2) &= p(M^i_J \neq \CA^i_J, \CG_2) = \sum_{(j,r,b) \in \CG_2} p(jrb) p(M^i_j \neq \CA^i_j|rb) \\
	\			&\le \sum_{(j,r,b) \in \CG_2} p(jrb) p(M^i_j = \CA^i_j|rb) + \beta = p(M^i_J = \CA^i_J, \CG_2) +\beta.
\end{align*}

Conditioned on the event $J=j, M^i_j = \CA^i_j, R=r$, by symmetry the partition $B^i$ is uniform among all partitions which agree on all the blocks except $B_0$ and $B^i_j$ (recall Figure \ref{fig:goodcaseexample}). For every fixing of the blocks outside $B_0$ and $B^i_j$, there are $\binom53$ such partitions out of which at most $4$ can be in $\CG_2$ by Lemma \ref{lemma:intfam}. In particular, this means that for any $(j,r)$ we can bound 
\[ p(\CG_2|~M^i_j\neq \CA^i_j,J=j,r) = \sum_{b:(j,r,b) \in \CG_2} p(b|M^i_j = \CA^i_j,J=j,r) \le \frac{4}{\binom{5}{3}} = \frac{4}{10} \] 

Averaging over $j$ and $r$, we get that $p(\CG_2|M^i_J = \CA^i_J) \le \frac{4}{10}$. Finally, 
\begin{align*}
\	p(\CG_2,\CD) &\le p(M^i_J = \CA^i_J, \CG_2) + \beta =  p(M^i_J = \CA^i_J) p(\CG_2|M^i_J = \CA^i_J) \\
\				&\le \frac{4}{10} p(M^i_J = \CA^i_J) + \beta = \frac{4}{10}p(\comp\CD) + \beta.
\end{align*}
\end{proof}

\begin{proof}[Proof of Claim \ref{claim:bad}]
	For every $(r,b)$ fix an $\ell$ arbitrarily such that $(\ell,r,b) \in \CG_1$ (if such an $\ell$ does not exist then the contribution of $r,b$ to $p(\CG_1,\CD)$ is zero anyway). Then by definition, there exists a partition $b'$ such that $(\ell,r,b')\in \CG_1$, $b$ and $b'$ agree on all the blocks except $b_0 \cup b^i_\ell$ and $|b_0 \cap b'_0|=1$. Note that $b'$ is determined by $r,b$. Define $\CT_{rb} := \CJ_{\ell rb} \cup \CJ_{\ell rb'} \cup \{\ell\}$ and let $\CT$ denote the event that $J \in \CT_{rb}$. We will show that when $(j,r,b) \in \CG_1$ and $j \notin \CT_{rb}$:
	\begin{equation}
		\ p(J=j,M^i_j \neq \CA^i_j,r,b) \le 48000 q(J=j, M^i_j \neq \CA^i_j, M \text{ is bad},r,b|\CE\cup\CF).
	\label{eqn:error}
	\end{equation}

The above implies that
	\begin{align*}
	\ p(\CG_1, \comp\CT, M^i_J \neq \CA^i_J) &\le 48000 q(\CG_1, \comp\CT, M^i_J \neq \CA^i_J, M \text{ is bad}|\CE\cup\CF)\\
	\								   &\le 48000 q(M^i_J \neq \CA^i_J, M \text{ is bad}|\CE\cup\CF). \stepcounter{equation}\tag{\theequation}\label{eqn:term1}
	\end{align*}

	Furthermore, since both $(\ell,r,b)$ and $(\ell,r,b')$ are in $\CG_1 \subseteq \CG$, $|\CT_{rb}| \le {\beta}\cdot\frac{t^{1-\beta^2}}{2^{1/\beta^2}}-1 \le {\beta}\cdot\frac{t^{1-\beta^2}}{2^{1/\beta^2}}$ and also $p(j|rb, \CD) \le \frac{2^{1/\beta^2}}{t^{1-\beta^2}}$ when $(j,r,b) \in \CG_1$. So, we have $p(\CT,\CG_1|\CD) \le \beta.$

	When $M^i_J \neq \CA^i_J$ and $M$ is bad, then the cut $U^i = B_0 \cup B^i_J$ satisfies $|\delta(U) \cap M|=1$ (see Figure \ref{fig:matb}). So, from \eqref{eqn:term1} and the above, we get 
\begin{align*}
	\ p(\CG_1, \CD) &= p(\CG_1, M^i_J \neq \CA^i_J) \le p(\CG_1, \comp\CT, M^i_J \neq \CA^i_J) + p(\CG_1, \CT|\CD) \\
	\			   &\le 48000 q(M^i_J \neq \CA^i_J, M \text{ is bad}|\CE\cup\CF) + \beta \le 48000q(|\delta(U)\cap M|=1~|~\CE\cup\CF) +\beta.
\end{align*}

Using \eqref{eqn:slack} and \eqref{eqn:slackbad}, the first term on the right hand side is $\frac{3\beta}{(2+\beta)+(4+\beta)+3(2+\beta)+3\beta} \le \frac{\beta}{4}$ and hence $p(\CG_1,\CD)\le {2^{14}\beta}$. 

All that remains is to prove \eqref{eqn:error}. Since $(\ell,r,b) \in \CG$ and $j \notin \CT_{rb}$, $p(M^i_\ell \neq \CA^i_\ell|rb) \ge \frac{1}{30}$ and $p(M^i_j \neq \CA^i_j|rb, M^i_\ell \neq \CA^i_\ell) \ge \frac14$, so $p(M^i_j \neq \CA^i_j, M^i_\ell \neq \CA^i_\ell|rb) \ge \frac1{120}.$ Since probability is always less than one, trivially 
\[ p(M^i_j \neq \CA^i_j|rb) \le 120p(M^i_j \neq \CA^i_j, M^i_\ell \neq \CA^i_\ell|rb).\]

Using the above and the fact that $p(jm|rb)$ is product, we get 
\begin{align}
	\  p(J=j, M^i_j \neq \CA^i_j,r,b) &\le 120p(J=j, M^i_j \neq \CA^i_j, M^i_\ell \neq \CA^i_\ell,r,b) 
	\label{eqn:midstep}
\end{align}

Next we relate the probability on the right hand side above to the probability under $q$ conditioned on the event $\CE \cup \CF$ as follows:
\begin{align*}
	\ p(J=j, M^i_j \neq \CA^i_j, M^i_\ell \neq \CA^i_\ell,r,b) &= \frac{q(J=j, M^i_j \neq \CA^i_j, M^i_\ell \neq \CA^i_\ell,\CE,r,b~|~\CE\cup\CF)}{q(\CE~|~\CE\cup\CF)}\\
	\ &\le \frac{20}{6}{q(J=j, M^i_j \neq \CA^i_j, M^i_\ell \neq \CA^i_\ell,\CE,r,b~|~\CE\cup\CF)} 
\end{align*}
where the last inequality follows since $q(\CE~|~\CE\cup\CF) = \frac{(4+\beta)+(2+\beta)}{(2+\beta)+(4+\beta)+3(2+\beta)+3\beta} = \frac{6+2\beta}{12+8\beta} \ge \frac{6}{20}$ as $\beta < 1$. 

Plugging the above in \eqref{eqn:midstep}, 
\begin{align}
	\  p(J=j, M^i_j \neq \CA^i_j,r,b) &\le 400{q(J=j, M^i_j \neq \CA^i_j, M^i_\ell \neq \CA^i_\ell,\CE,r,b~|~\CE\cup\CF)}. 
	\label{eqn:midstep1}
\end{align}

Note that the distribution $q(jm|rb,~\CE\cup\CF)$ is still product as $J$ and $M$ are independent given $rb$ and given the partition $b$ one can check if the matching $m$ is bad or consistent without knowing the cut. Next, we relate the probability of the right hand side to the probability under the partition $b'$ where $b'$ is the partition guaranteed from $(\ell,r,b)$ being in $\CG_1$. 
We will show that
\begin{align}
\ q(M^i_j \neq \CA^i_j, M^i_\ell \neq \CA^i_\ell,\CE~|rb,~\CE\cup\CF) \le 120{q(M^i_j \neq \CA^i_j, M^i_\ell \neq \CA^i_\ell,\CE~|~rb',\CE\cup\CF)}.
\label{eqn:bad}
\end{align}

Note that the two events above correspond to different matchings since the partitions are different (see Figure \ref{fig:mata}). The probability of any matching that agrees with the event on the right hand side above is zero under $p(m|rb)$, but that is not the case under $q(m|rb,\CE \cup \CF)$. In fact, by symmetry the probability of any such matching is the same under the distribution $q(m|rb,\CE\cup\CF)$ and $q(m|rb',\CE\cup\CF)$. This is because as $q(jm|rb,\CE\cup\CF)$ is product, the probability does not change under conditioning on the event that $J=\ell$ which is the same as conditioning that the cut is $b_0 \cup b^i_\ell$. Then, since the partitions were picked uniformly, both the splits $(b_0,b^i_\ell)$ and $(b'_0, b'^i_{\ell})$ are equally likely even when conditioned on $J=\ell, \CE\cup\CF$ since the matching is either consistent or bad with respect to both of them. 

\begin{figure*}[h]
    \centering
    \begin{subfigure}[h]{0.8\textwidth}
        \centering
        \includegraphics[height=1.4in]{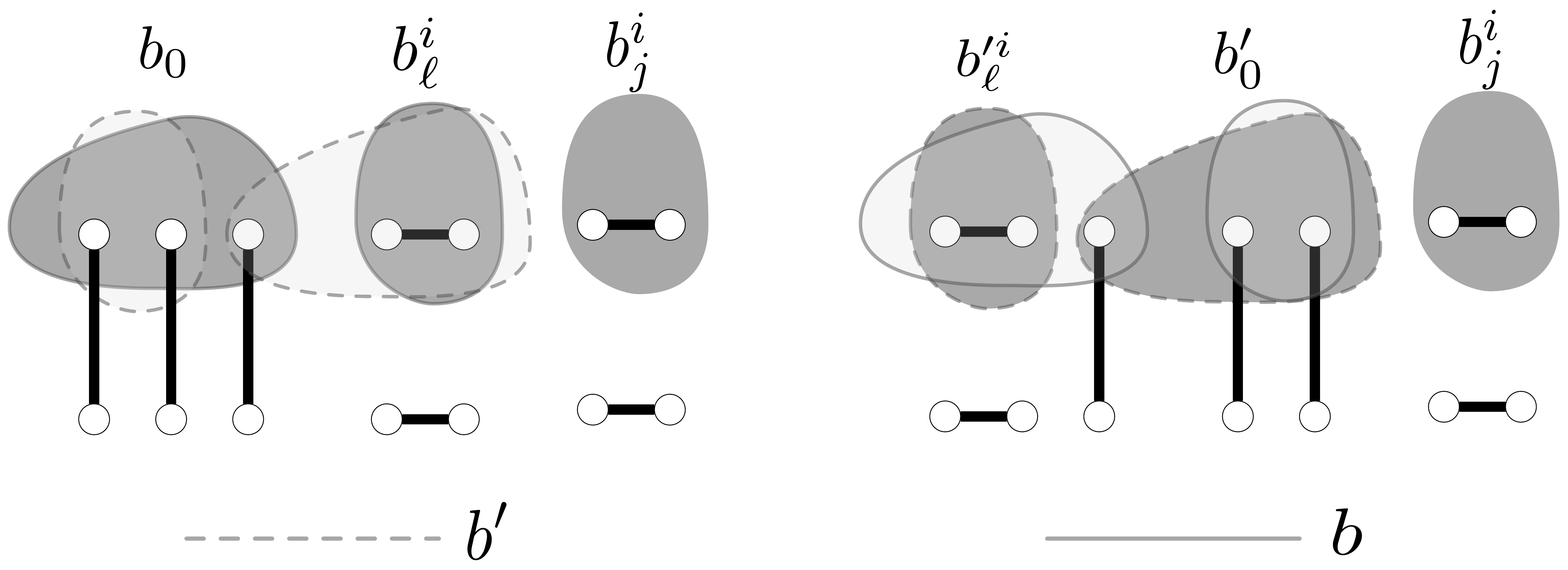}
        \caption{\footnotesize Partial Matchings corresponding to the event $M^i_j \neq \CA^i_j, M^i_\ell \neq \CA^i_\ell$ under partition $b$(left) and under partition $b'$ (right). Note that $b$ and $b'$ agree in all blocks except $b_0$ and $b^i_\ell$.}
		\label{fig:mata}
    \end{subfigure}%
	\vspace*{1cm}	
    \begin{subfigure}[h]{0.8\textwidth}
        \centering
        \includegraphics[height=1.4in]{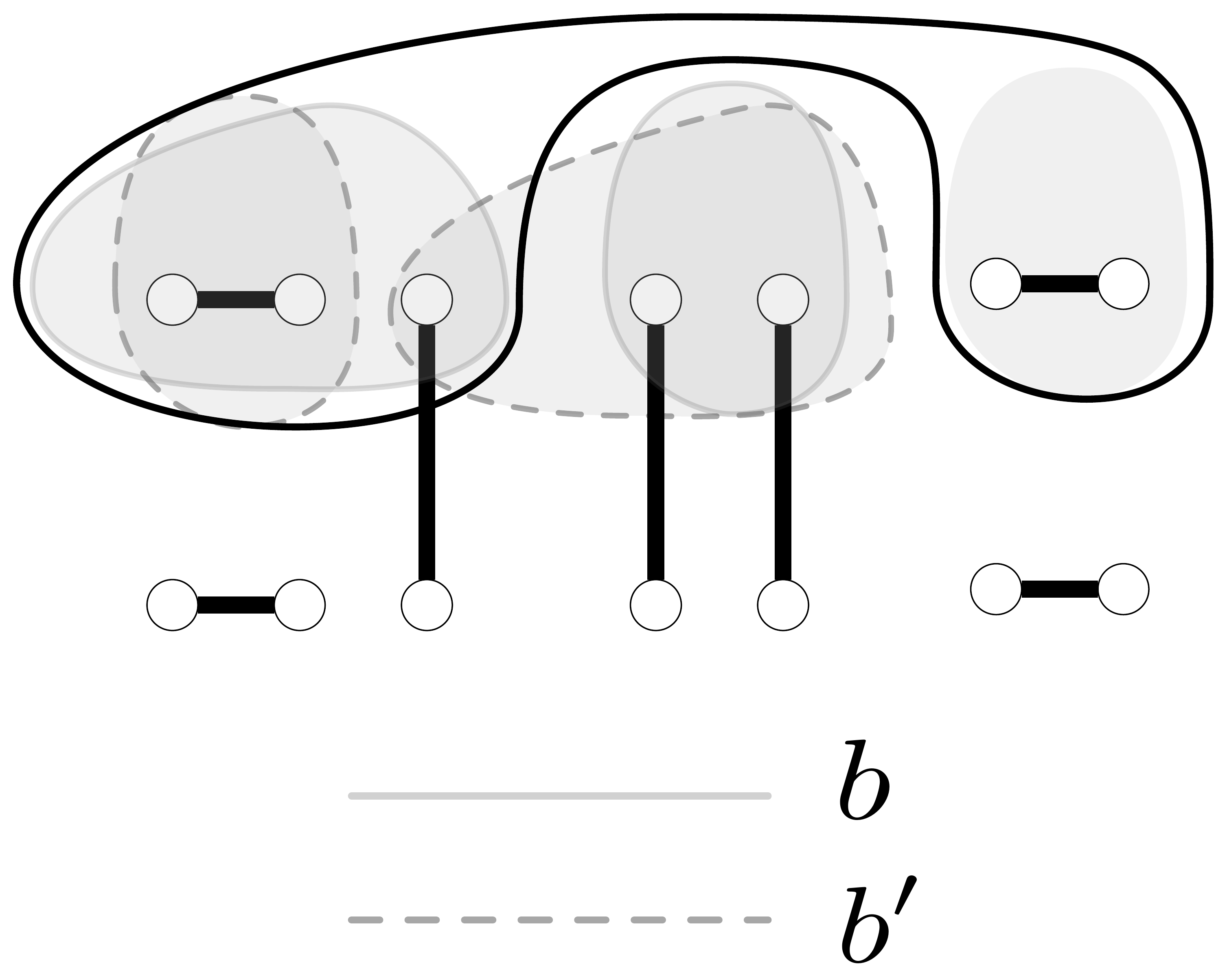}
		\caption{\footnotesize A matching agreeing with the event $M^i_j \neq \CA^i_j, M^i_\ell \neq \CA^i_\ell,\CE$ under partition $b'$ and the cut $U^i = b^i_0 \cup b^i_j$. Note that the matching is bad for the partition $b$.}
		\label{fig:matb}
    \end{subfigure}
	\label{fig:mat}
\end{figure*}

Note that any matching which agrees with the event on the right hand side above is bad for $b$ (see Figure \ref{fig:matb}). From this it follows that
\[ q(M^i_j \neq \CA^i_j, M^i_\ell \neq \CA^i_\ell,\CE~|rb,~\CE\cup\CF) \le {120}{q(M^i_j \neq \CA^i_j, M \text{ is bad }|~rb,\CE\cup\CF)}.\]

Plugging the above in \eqref{eqn:midstep1} and using that the distribution is product, we get \eqref{eqn:error}. To complete the proof, we next show that \eqref{eqn:bad} holds.

For this let us recall that since $(\ell,r,b)$ and $(\ell,r,b')$ are both in $\CG$, when $j \notin \CT_{rb}$, we have $p(M^i_j \neq \CA^i_j, M^i_\ell \neq \CA^i_\ell|rb)\ge\frac{1}{120}$ and a similar statement also holds with respect to $b'$.

Since $p(m|rb) = q(m|rb,\CE)$, we can write 
\[ \frac{p(M^i_j \neq \CA^i_j, M^i_\ell \neq \CA^i_\ell|rb)}{p(M^i_\ell = \CA^i_\ell|rb)} = \frac{q(M^i_j \neq \CA^i_j, M^i_\ell \neq \CA^i_\ell,\CE|~rb,\CE\cup\CF)}{q(M^i_\ell=\CA^i_\ell,\CE|~rb',\CE\cup\CF)}, \]
and a similar statement holds for $b'$ also.

Similar to the argument before, by symmetry $q(M^i_\ell = \CA^i_\ell,\CE|rb) =  q(M^i_\ell = \CA^i_\ell,\CE|rb')$ and also $p(M^i_\ell = \CA^i_\ell|rb) =  p(M^i_\ell = \CA^i_\ell|rb')$. It follows that
\begin{align*}
	\ \frac{q(M^i_j \neq \CA^i_j, M^i_\ell \neq \CA^i_\ell,\CE~|~rb,\CE\cup\CF)}{q(M^i_j\neq\CA^i_j, M^i_\ell\neq\CA^i_\ell,\CE~|~rb',\CE\cup\CF)} &= \frac{p(M^i_j \neq \CA^i_j, M^i_\ell \neq \CA^i_\ell|rb)}{p(M^i_j \neq \CA^i_j, M^i_\ell \neq \CA^i_\ell|rb')} \le 120,
\end{align*}
which proves \eqref{eqn:bad}. This completes the proof. 

\end{proof}

\begin{proof}[Proof of Claim \ref{claim:error}]
	We have $\BHc{p}(M^i|U^iB\CD) \ge t - 1$ since when $U^i=B^i_j$ then $M^i_k$ either equals $\CA^i_k$ or is not equal to $\CA^i_k$ with probability $\frac12$ independently for each $k \neq j$ conditioned on $\CD$. Moreover, conditioned on $\CD$, $U^i$ is uniform among blocks $B^i_j$ such that $M^i_j\neq \CA^i_j$. We may compute from \eqref{eqn:slack} that $p(m^i|b\CD) = \frac{\Delta(m^i)}{t2^{t-1}}$ where $\Delta(m^i)$ is the number of blocks $k$ of $m^i$ such that $m^i_k \neq \CA^i_k$. It follows that the probability of any $m^i$ with $\Delta(m^i) \le \frac{t}4$ is at most $\frac{1}{2^{t+1}}$ and using Proposition \ref{lemma:chernoff} (by considering the indicator vector for $M^i_k\neq \CA^i_k$ for $k \in [t]$) their total measure under the distribution $p(m^i|b\CD)$ can be bounded by $\frac{e^{-t/8}}2$. Hence, $\BHc{p}(U^i|M^iB\CD) \ge (1-\frac{e^{-t/8}}2) \log \left(\frac{t}{4}\right) \ge \log t - 3$.

As $p(umrb) = q(umrb|w\CE)$, if \eqref{eq:matinf} is not true, then  
	\[ \Infc[p]{R}{U^i}{M^iB\CD} = \BHc{p}(U^i|M^iB\CD) - \BHc{p}(U^i|RM^iB\CD) \le \beta^8\log t,\]
and a similar statement is obtained by writing $\Infc[p]{R}{M^i}{U^iB\CD}$ in terms of entropy. It follows that
\begin{align*}
	\ \BHc{p}(U^i|RBM^i\CD) &\ge (1-\beta^8)\log t-3, \text{~and~} \BHc{p}(M^i|RBU^i\CD) \ge t - \beta^8 \log t-1. 
\end{align*}

In terms of the random variable $J$ (recall that $J=j$ if $U^i=B^i_j$), we get 
\begin{align*}
	\ \BHc{p}(J|RBM^i\CD) &\ge (1-\beta^8)\log t-3, \text{~and~} \BHc{p}(M^i|JRB\CD) \ge t - \beta^8 \log t-1. 
\end{align*}

For rest of the proof, it will be helpful to keep in mind that under the distribution $p$, $\CD$ is equivalent to the event that $M^i_J \neq \CA^i_J$.

\begin{enumerate}[label=\emph{(\alph*)}]
	\item Follows from definition of $\CS_1$. 
	
	\item In the probability space $p$, define random variable $Y \in \bits^t$ as follows: for $k \in [t]$, $Y_k=\ind_{M^i_k = \CA^i_k}$. Then, it holds that $\BHc{p}(J|RB,Y_J=0) \ge (1-\beta^8)\log t-3$ and $\BHc{p}(Y|RB,Y_J=0) \ge t-\beta^8\log t-3$ and $J \arr RB \arr Y$. Applying Lemma \ref{lemma:main} gives us $p(\CS_2,\CD) \le p(\CS_2) \le 64\beta$. 

	\item Since removing conditioning only increases entropy, we have that $\BHc{p}(U^i|RB\CD) \ge (1-\beta^8)\log t-3$. Let $\CT = \{(r,b)~|~\BHc{p}(U^i|rb\CD) \ge (1-\beta^4)\log t\}$. As $\BHc{p}(U^i|rb\CD) \le \log t$, Lemma \ref{lemma:avg} then says that $p((r,b)\notin \CT|\CD) \le \beta^4 + \frac{3}{\beta^4\log t} \le 2\beta^4$ where the last inequality follows as $\beta^8\log t \ge 3$ when $\beta$ is a constant and $t \ge c=c(\beta)$ for a sufficiently large constant $c$.
		
		Lemma \ref{lemma:entsupp} implies that for any $(r,b)\in \CT$, $p(\CS_3|rb\CD) \le \beta^2$. By union bound, \[p(\CS_3,\CD) \le p(\CS_3|(r,b)\in \CT, \CD) + p((r,b) \notin \CT|\CD) \le \beta^2 + 2\beta^4 \le 3\beta^2.\]

	\item Set $a = \beta \frac{t^{1-\beta^2}}{2^{1/\beta^2+1}}  - 1$. Note that for any $k \in \CJ_{jrb}$, $\BHc{p}(M^i_k|rb,M^i_j\neq\CA^i_j) \le \sh\left(\frac14\right) \le \frac{9}{10}$. When $(j,r,b) \in \CS_4$, $|\CJ_{jrb}| \ge a$ and hence $\BHc{p}(M^i|rb,M^i_j\neq\CA^i_j) \le t - \frac{a}{10}$. Since, $p(jm|rb)$ is product, $\BHc{p}(M^i|rb,J=j,M^i_j\neq\CA^i_j) = \BHc{p}(M^i|rb,M^i_j\neq\CA^i_j)$ and we can say that 
		\begin{align*}
			\ \BHc{p}(M^i|JRB\CD) &= \BE_{p(jrb|\CD)} [\BHc{p}(M^i|rb,M^i_j\neq\CA^i_j)] \ge t - \beta^8\log t-1. 
		\end{align*}
		Using Lemma \ref{lemma:avg}, we get that $p(\CS_4,\CD) \le p(\CS_4|\CD) \le \frac{\beta^8\log t +1}{{a}/{10}} \le \beta$ as $t \ge c = c(\beta)$ for a large enough $c$.
\end{enumerate}

\end{proof}

\section{Proof of Main Technical Lemma (Lemma \ref{lemma:main})}
\label{sec:mainlemmaproof}

For the sake of contradiction, we assume that $p((x,r)\in \CB) \ge 64\gamma$. The proof will proceed by first fixing a value of $r$, such that the entropies $\BHc{p}(X|r, Y_X=0)$ and $\BHc{p}(Y|r, Y_X=0)$ will remain large but the distribution of $X$ and $Y$ conditioned on $r$ will be quite biased (conditioned on $r$, the probability of the event $Y_X=0$ will be significantly larger than $\frac12$). Then, we will argue that if this was the case, then in fact, the entropy $\BHc{p}(Y|r, Y_X=0)$ must be much smaller than our assumption.

Let $\CG$ denote the set of $r$ such that  
	\begin{enumerate}[label=\emph{(\alph*)}]
		\label{defn:goodr}
		\item $\BHc{p}(X|r, Y_X=0) \ge (1-\gamma^4)\log m$ and $\BHc{p}(Y|r, Y_X=0) \ge m - \gamma^4 \log m.$
		\item $p(x\in \CS_r|r) \ge 32\gamma$ where $\CS_r = \{x|(x,r) \in \CB\}.$
	\end{enumerate}

We will be able to argue that 
\begin{claim}
	\label{claim:goodr}
	$p(r \in \CG|Y_X=0) \ge 256\gamma^2$.
\end{claim}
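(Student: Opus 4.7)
The plan is to establish conditions (a) and (b) from the definition of $\CG$ separately with high probability under $p(r|Y_X=0)$, and then union bound. The key observation that lets us move between the unconditioned marginal $p(r)$ and the conditional $p(r|Y_X=0)$ is that on the set where (b) holds, the event $Y_X=0$ automatically has probability $\Omega(\gamma)$, which pays for the Bayes reweighting.

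For (a), I would use the trivial pointwise upper bounds $\BH(X|r,Y_X=0) \le \log m$ (since $X \in [m]$) and $\BH(Y|r,Y_X=0) \le m$ (since $Y \in \bits^m$). These make the two deficits nonnegative, and the hypotheses of Lemma \ref{lemma:main} force their $p(r|Y_X=0)$-expectations to be at most $\gamma^8 \log m + 3$. Applying Markov's inequality at threshold $\gamma^4 \log m$, together with the standing assumption $\gamma^8 \log m \ge 3$ to absorb the additive constant, each deficit exceeds the threshold with probability at most $2\gamma^4$; so condition (a) fails with probability at most $4\gamma^4$ under $p(r|Y_X=0)$.

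For (b), I would first work under the marginal $p(r)$ and then transfer. The hypothesis $p((x,r) \in \CB) \ge 64\gamma$ rewrites as $\BE_{p(r)}[p(x \in \CS_r|r)] \ge 64\gamma$, so the Averaging Lemma \ref{lemma:avg} (with max value $1$, threshold $32\gamma$) gives $p_r(\CG_b) \ge 32\gamma$, where $\CG_b$ denotes the set of $r$ at which (b) holds. By $X \arr R \arr Y$ and the definition of $\CB$, any $r \in \CG_b$ satisfies
\[
  p(Y_X=0 \mid r) \;=\; \sum_x p(x|r)\, p(Y_x=0|r) \;\ge\; \sum_{x \in \CS_r} p(x|r) \cdot \frac{1+\gamma}{2} \;\ge\; 16\gamma.
\]
Hence $p(r \in \CG_b,\, Y_X=0) = \sum_{r \in \CG_b} p(r)\, p(Y_X=0|r) \ge 16\gamma \cdot p_r(\CG_b) \ge 512\gamma^2$, and since $p(Y_X=0) \le 1$ we conclude $p(r \in \CG_b \mid Y_X=0) \ge 512\gamma^2$.

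A final union bound yields $p(r \in \CG \mid Y_X=0) \ge 512\gamma^2 - 4\gamma^4 \ge 256\gamma^2$, where the last step uses $\gamma^8 \le 2^{-64}$ to ensure $\gamma^2 \le 1/2$. I don't expect a real obstacle: the argument is essentially routine averaging. The only conceptual point is noticing that the $\CB$-hypothesis already forces $p(Y_X=0|r) \gtrsim \gamma$ throughout $\CG_b$, and this $\gamma$-factor is precisely what absorbs the cost of passing from $p(r)$ to $p(r|Y_X=0)$ without extra loss.
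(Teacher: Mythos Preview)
Your proposal is correct and follows essentially the same route as the paper: the paper likewise splits into the entropy condition and the bias condition, uses the Averaging Lemma (i.e., Markov) together with $\gamma^8\log m\ge 3$ to bound $p(r\notin\CG_1\mid Y_X=0)\le 4\gamma^4$, shows $p(r\in\CG_2)\ge 32\gamma$ and $p(Y_X=0\mid r)\ge 16\gamma$ on $\CG_2$, and then combines to get $512\gamma^2-4\gamma^4\ge 256\gamma^2$. The only cosmetic difference is that the paper packages the Bayes transfer as Proposition~\ref{prop:cond} rather than writing it out directly.
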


In particular, this means that the set $\CG$ is not empty. For the rest of the proof, we will fix some $r \in \CG$ and work with the distribution $q(xy) = p(xy|r)$ which is product. Consider the random variable $\Bias = \ind_{Y_{X}=0} - \ind_{Y_{X} = 1}$. Note that when $x \in \CS_r$, 
\[ \BE_{q(xy|X=x)}[\Bias]  = q(Y_x=0) - q(Y_x=1) \ge \gamma. \]

We will prove that there exists a \emph{rectangle} with the following properties. 

\begin{claim}
	\label{claim:rec}
	There exists events $\CS \subseteq \supp(X)$, $\CT \subseteq \supp(Y)$ such that
	\begin{enumerate}[label={(\alph*)}]
		\item $\BE_{q(xy|(X,Y) \in \CS \times \CT)}[\Bias] \ge \gamma/2$.
		\item $q(x|X\in \CS) \le \frac{2^{2/{\gamma^2}}}{m^{1-\gamma^2}}$ for every $x$.
		\item $\BHc{q}(Y|Y \in \CT) \ge m - \gamma^2\log m - \frac{2}{\gamma^2}$.
	\end{enumerate}
\end{claim}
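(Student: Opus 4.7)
I will construct the rectangle using the typical sets of $X$ and $Y$ under the auxiliary distribution $q'(x,y) := q(x,y\mid Y_X=0)$. By the definition of $r \in \CG$, $\BH_{q'}(X) \ge (1-\gamma^4)\log m$ and $\BH_{q'}(Y) \ge m - \gamma^4 \log m$, so Lemma~\ref{lemma:entsupp} applied to each marginal of $q'$ with slack parameter $\gamma^2$ yields typical sets
\[
  \CS^* = \{x : q'(x) \le 2^{1/\gamma^2}/m^{1-\gamma^2}\}, \qquad \CT^* = \{y : q'(y) \le 2^{1/\gamma^2+\gamma^2\log m}/2^m\},
\]
each of $q'$-measure at least $1-\gamma^2$. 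I propose $\CS := \CS_r \cap \CS^*$ and $\CT$ a suitable subset of $\CT^*$ described below.

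The cornerstone for transferring $q'$-bounds back to $q$ is the identity $q'(x) = q(x)\,q(Y_x=0)/q(Y_X=0)$, which uses that $q$ is product. Since $q(Y_x=0) \ge 1/2$ on $\CS_r$, it gives $q(x) \le 2q'(x)$ for $x \in \CS_r$, and the parallel estimate $q(\CS_r\setminus\CS^*)/2 \le q(Y_X=0)\,q'(\CS_r\setminus\CS^*) \le \gamma^2$ yields $q(\CS) \ge 32\gamma - 2\gamma^2 \ge 31\gamma$ (using $\gamma \le 2^{-8}$). These together give property~(b): $q(x\mid\CS) \le 2q'(x)/q(\CS) \le 2^{2/\gamma^2}/m^{1-\gamma^2}$. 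The analogous $Y$-identity $q(y) = q'(y)\,q(Y_X=0)/q(Y_X=0\mid Y=y)$ shows $q(y) \le 2q'(y)$ on the ``$Y_X=0$-heavy'' set $\CH := \{y : q(Y_X=0\mid Y=y) \ge q(Y_X=0)/2\}$, which by reverse Markov applied to the $[0,1]$-valued random variable $q(Y_X=0\mid Y=\cdot)$ (of mean $q(Y_X=0) \ge 16\gamma$) has $q$-measure at least $8\gamma$. Taking $\CT := \CT^*\cap\CH$ gives the pointwise bound $q(y\mid\CT) \le 2q'(y)/q(\CT) \le 2^{1/\gamma^2+\gamma^2\log m+1}/(q(\CT)\cdot 2^m)$ with $q(\CT) \ge 8\gamma - \gamma^2 \ge 7\gamma$, so by $\BH \ge \log(1/\max)$ we obtain $\BH_q(Y\mid\CT) \ge m-\gamma^2\log m - 1/\gamma^2 - 1 - \log(1/(7\gamma)) \ge m - \gamma^2\log m - 2/\gamma^2$, verifying~(c) (the small $\log(1/\gamma)$ loss is comfortably absorbed by the $2/\gamma^2$ slack since $\gamma^8 \le 2^{-64}$).

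Property~(a) is the most delicate. Writing the bias via the product structure as $\BE_{x \sim q(\cdot\mid\CS)}[2q(Y_x = 0 \mid \CT) - 1]$, I will lower-bound $q(Y_x = 0\mid \CT)$ not by the naive total-variation estimate $|q(Y_x=0\mid\CT) - q(Y_x=0)| \le q(\CT^c)/q(\CT)$ (which is vacuous when $q(\CT) = \Theta(\gamma)$), but by exploiting the defining inequality of $\CT$ directly. Specifically, the $Y_X=0$-heaviness of $\CT$ gives $q(Y_X=0, \CT) \ge (q(Y_X=0)/2)\,q(\CT)$, which, combined with the coordinate bound $q(Y_x=0) \ge (1+\gamma)/2$ on $\CS_r$ and the fact that $\CS \subseteq \CS^*$ consists of ``light'' $x$'s (so that the contribution of any single $x \in \CS$ to $q(Y_X=0)$ is negligible), is expected to produce $q(Y_x=0\mid\CT) \ge (1+\gamma/2)/2$ on average over $x \sim q(\cdot\mid\CS)$, giving the desired bias $\ge \gamma/2$.

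The main obstacle is precisely the bias computation in~(a) for this smaller $\CT$: the $q$-mass of $\CT$ is only $\Theta(\gamma)$, so any direct perturbation argument fails and the bias must be extracted from the fine correlation between the event ``$x \in \CS$'' (selected to be biased and light) and the event ``$Y_X = 0$'' (which governs the definition of $\CT$). Carrying through this balance is where the product structure of $q$, the light-marginal bound $q(x) \le 2^{1/\gamma^2+1}/m^{1-\gamma^2}$ on $\CS$, and the explicit bias $q(Y_x = 0) \ge (1+\gamma)/2$ on $\CS_r$ must all be combined, and is what forces the particular exponents $2/\gamma^2$ and $\gamma^2\log m$ that appear in the statement of the claim.
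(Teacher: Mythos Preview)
Your construction of $\CS$ coincides with the paper's, and your arguments for (b) and (c) are essentially correct (modulo the harmless slip that $q(\mathcal{H}\setminus\CT^*)\le 2\gamma^2$ rather than $\gamma^2$). The genuine gap is in (a).

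The set $\mathcal{H}=\{y:q(Y_X=0\mid y)\ge q(Y_X=0)/2\}$ imposes only a \emph{global} constraint on $y$: the $q(x)$-weighted fraction of zero coordinates of $y$ is at least $q(Y_X=0)/2$. This says nothing about the coordinates indexed by $\CS$. A string $y$ can lie in $\mathcal{H}$ with $y_x=1$ for every $x\in\CS$, provided enough other coordinates are zero; for such $y$ the $\CS$-restricted bias is $-1$. Since you only establish $q(\CT)\ge 7\gamma$, conditioning on $\CT$ can shift each $q(Y_x=0)$ by $\Theta(1)$, so the unconditional bound $q(Y_x=0)\ge(1+\gamma)/2$ on $\CS_r$ need not survive. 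The ``lightness'' of $x\in\CS^*$ does not help: it bounds the contribution of a single $x$ to the aggregate $q(Y_X=0)$, whereas (a) asks for a lower bound on $q(Y_x=0\mid\CT)$ averaged over $x\in\CS$, and no single-coordinate deletion argument bridges the two.

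The paper avoids this by reversing which of (a) and (c) is automatic. It sets $\CT_b=\{y:\BE_{q(\cdot\mid y,\,X\in\CS)}[\Bias]\ge\gamma/2\}$ and takes $\CT=\CT_b\cap\CT_u$ (where $\CT_u$ is your $\CT^*$), so (a) holds by construction. The work shifts to (c): since $\CS\subseteq\CS_r$ gives $\BE_{q(\cdot\mid X\in\CS)}[\Bias]\ge\gamma$, the averaging lemma yields $q(\CT_b)\ge\gamma/2$; and for $y\in\CT_b$ the bias condition itself, combined with the already-established $q(X\in\CS\mid Y_X=0)\ge 8\gamma$ and Bayes' rule, gives $q(y)\le\tfrac{1}{4\gamma}\,q(y\mid Y_X=0)$. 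That is exactly the $q\lesssim q'/\gamma$ transfer you engineered $\mathcal{H}$ to deliver---but now obtained for the set that actually makes (a) true.
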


Lets finish the proof of Lemma \ref{lemma:main} first. For this, we define
\[\beta_{x} : = q(Y_{x}=0|Y \in \CT) - q(Y_{x}=1|Y \in \CT).\]

Since $q(xy|(X,Y) \in \CS \times \CT)$ is a product distribution, Claim \ref{claim:rec}(b) implies that 
\begin{align*}
	\ \BE_{q(xy|(X,Y) \in \CS \times \CT)}[\Bias] = \sum_{x} q(x|X\in \CS) \beta_{x} \le \frac{2^{2/{\gamma^2}}}{m^{1-\gamma^2}} \sum_{x \in \CS} \beta_{x}.
\end{align*}
So, using Claim \ref{claim:rec}(a), we can say
\[ \sum_{x \in \CS} \beta_{x} \ge \frac{\gamma}{2^{2/{\gamma^2}+1}} m^{1-\gamma^2}, \text{ or equivalently, } \BE_{x \in \CS} [\beta_{x}] \ge \frac{\gamma}{2^{2/{\gamma^2}+1}} \frac{m^{1-\gamma^2}}{|\CS|}, \]
From Lemma \ref{lemma:bias}, it then follows that
\[ \BHc{q}(Y|Y \in \CT) \le m - \left(\frac{\gamma}{2^{2/{\gamma^2}+2}} \frac{m^{1-\gamma^2}}{|\CS|}\right)^2 |\CS| = m - \frac{\gamma^2}{2^{4/{\gamma^2}+4}} \frac{m^{2-2\gamma^2}}{|\CS|}.\]
As Claim \ref{claim:rec}(b) also implies that $|\CS| \ge \frac{1}{2^{2/{\gamma^2}}} m^{1-\gamma^2}$, we have
\[ \BHc{q}(Y|Y \in \CT) \le m - \frac{\gamma^2}{2^{2/{\gamma^2}+4}} m^{1-\gamma^2} ,\]
which contradicts Claim \ref{claim:rec}(c) if 
\[ \frac{\gamma^2}{2^{2/{\gamma^2}+4}} m^{1-\gamma^2} > \gamma^2\log m + \frac{2}{\gamma^2}.\]

One can check that if $\frac{3}{\log m} \le \gamma^8 \le \frac{1}{2^{64}}$, then the left hand side above is always at least $\Omega(\sqrt{m})$ while the right hand side is $O(\log m)$. This proves Lemma \ref{lemma:main}.

Next we turn to proving Claims \ref{claim:goodr} and \ref{claim:rec}. We will need the following proposition.

\begin{prop} 
	\label{prop:cond}
	For any events $\CA$ and $\CB$, $p(\CA|\CB,Y_X=0) \ge p(\CA|\CB) p(Y_X=0|\CA\CB)$.
\end{prop}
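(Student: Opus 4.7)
The plan is to apply Bayes' rule. Writing conditional probabilities in terms of joint probabilities, I would start from the identity
\[ p(\CA|\CB,Y_X=0) \;=\; \frac{p(\CA,Y_X=0\mid\CB)}{p(Y_X=0\mid\CB)}, \]
and then expand the numerator using the chain rule for conditional probabilities as
\[ p(\CA,Y_X=0\mid\CB) \;=\; p(\CA\mid\CB)\cdot p(Y_X=0\mid\CA,\CB). \]

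Substituting back yields
\[ p(\CA|\CB,Y_X=0) \;=\; \frac{p(\CA\mid\CB)\cdot p(Y_X=0\mid\CA,\CB)}{p(Y_X=0\mid\CB)}. \]
Since $p(Y_X=0\mid\CB) \le 1$, the denominator can only help the inequality, so we obtain
\[ p(\CA|\CB,Y_X=0) \;\ge\; p(\CA\mid\CB)\cdot p(Y_X=0\mid\CA,\CB), \]
which is the claimed bound.

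The only ``obstacle'' is the edge case where $p(\CB) = 0$ or $p(\CB, Y_X=0) = 0$, but in those regimes the conditional probabilities on either side are either undefined or trivially satisfy the inequality (both sides being zero under the standard convention), so there is nothing substantive to verify. No entropy, direct-sum, or combinatorial arguments are needed here — this is a one-line Bayes' rule computation used as a convenient technical bookkeeping tool in the proofs of Claims~\ref{claim:goodr} and \ref{claim:rec}.
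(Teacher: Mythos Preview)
Your proof is correct and is precisely the intended argument; in fact the paper states this proposition without proof, treating it as the immediate Bayes' rule computation you wrote out.
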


\begin{proof}[Proof of Claim \ref{claim:goodr}]
	Let 
	\begin{align*}
		\ \CG_1 &= \{ r ~|~ \BHc{p}(X|r,Y_X=0) \ge (1-\gamma^4)\log m \text{ and } \BHc{p}(Y|r,Y_X=0) \ge m - \gamma^4 \log m\}, \text{ and}\\
		\ \CG_2  &= \{r~|~ p(x \in \CS_r|r) \ge 32\gamma\}.
	\end{align*}

	As $\BHc{p}(X|r,Y_X=0) \le \log m$ and $\BHc{p}(Y|r,Y_X=0) \le m$ for every $r$, Lemma \ref{lemma:avg} and union bound imply that $p(r \notin \CG_1|Y_X=0) \le 2\left(\gamma^4+\frac{3}{\gamma^4\log m}\right)\le 4\gamma^4$ as $\gamma^8 \ge \frac{3}{\log m}$.
	
	Lemma \ref{lemma:avg} implies that $p(r \in \CG_2) \ge 32\gamma$. Furthermore, for any $r \in \CG_2$, we have 
	\[ p(Y_X=0|r) \ge p(x \in \CS_r|r) p(Y_X=0|r,X\in \CS_r) \ge 16\gamma.\]

	Since $\gamma \le \frac1{2^8}$, using Proposition \ref{prop:cond}, 
	\begin{align*} 
		\ p(r\in \CG|Y_X=0) &\ge p(r \in \CG_2|Y_X=0) - p(r \notin \CG_1|Y_X=0) \\
		\				  &\ge p(r\in\CG_2) \cdot 16\gamma - p(r \notin \CG_1|Y_X=0) \ge 512\gamma^2 - 4 \gamma^4 \ge 256\gamma^2.
	\end{align*}
\end{proof}

\begin{proof}[Proof of Claim \ref{claim:rec}]
	By our choice of $r \in \CG$, we have $q(x\in \CS_r) \ge 32\gamma$ where $\CS_r = \{x|q(Y_x=0) \ge \frac{1+\gamma}2\}$. This also implies: 
	\begin{equation}
	 \ q(Y_X=0) \ge q(X \in \CS_r) q(Y_X=0|X\in \CS_r) \ge 16\gamma.
		\label{eqn:prob}
	\end{equation}
	
	We define
	\begin{align*}
		\ \CS_u &= \{x ~|~ q(x|Y_X=0) \le \frac{m^{\gamma^2} 2^{1/{\gamma^2}}}{m}\} \text{ and } \CS = \CS_r \cap \CS_u,\\
		\ \CT_b &= \{y ~|~ \BE_{q(xy|Y=y,X\in \CS)}[\Bias] \ge \frac{\gamma}2\}, \CT_u = \{y ~|~ q(y|Y_X=0) \le \frac{m^{\gamma^2} 2^{1/{\gamma^2}}}{2^m}\} \text{ and } \CT = \CT_b \cap \CT_u.
	\end{align*}

	By definition, we have $\BE_{q(xy|(X,Y) \in \CS \times \CT)}[\Bias] \ge \frac{\gamma}{2}$ which establishes (a).
	
	We shall prove the following propositions. 
	\begin{prop}
		\label{eq:measurex}
		For $x\in \CS_r$, we have $q(x) \le 2q(x|Y_X=0)$. Also, $q(x\in \CS|Y_X=0) \ge 8\gamma.$
	\end{prop}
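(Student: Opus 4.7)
The plan is to exploit two facts: the distribution $q(xy) = q(x)q(y)$ is product (since we conditioned on a fixed $r$), and for $r \in \CG$ we have strong entropy lower bounds together with a lower bound on $q(\CS_r)$.

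For the first assertion, I would simply write $q(x \mid Y_X=0) = q(x)\,q(Y_x=0)/q(Y_X=0)$ using independence of $X$ and $Y$ under $q$. Since $x \in \CS_r$ gives $q(Y_x=0) \ge (1+\gamma)/2 \ge 1/2$ and $q(Y_X=0) \le 1$, the bound $q(x) \le 2q(x \mid Y_X=0)$ follows immediately.

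For the second assertion, the strategy is to first transfer the $\CS_r$-mass to the conditioned distribution, and then to intersect with $\CS_u$ using the entropy hypothesis. Summing the first assertion over $x \in \CS_r$ gives
\[ q(x \in \CS_r \mid Y_X=0) \;\ge\; \tfrac{1}{2}\,q(x \in \CS_r) \;\ge\; 16\gamma, \]
using $q(\CS_r) = p(\CS_r \mid r) \ge 32\gamma$ from clause (b) of the definition of $\CG$. For the $\CS_u$ part, I would apply Lemma \ref{lemma:entsupp} to the distribution $q(\cdot \mid Y_X=0)$ on the support of $X$ (of size at most $m$), using the entropy lower bound $\BHc{q}(X \mid Y_X=0) \ge (1-\gamma^4)\log m$ from clause (a) of $\CG$. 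Choosing the parameter $\gamma' = \gamma^2$ in the lemma yields the threshold $2^{(\gamma^4\log m + 1)/\gamma^2}/m = m^{\gamma^2}2^{1/\gamma^2}/m$, which matches the definition of $\CS_u$ exactly, so $q(\CS_u^c \mid Y_X=0) \le \gamma^2$.

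Combining the two by a union bound, $q(\CS \mid Y_X=0) \ge 16\gamma - \gamma^2 \ge 8\gamma$, where the last inequality uses $\gamma \le 1/256$ which is implied by $\gamma^8 \le 1/2^{64}$. There is no significant obstacle here; the only point requiring some care is choosing $\gamma'$ in Lemma \ref{lemma:entsupp} so that its threshold aligns exactly with the definition of $\CS_u$ chosen earlier in the proof.
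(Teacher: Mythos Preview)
Your proof is correct and essentially identical to the paper's: the paper invokes Proposition~\ref{prop:cond} (which is exactly the Bayes computation you write out explicitly) to get $q(x)\le 2q(x\mid Y_X=0)$ for $x\in\CS_r$, then sums to obtain $q(\CS_r\mid Y_X=0)\ge 16\gamma$, applies Lemma~\ref{lemma:entsupp} with the same parameters to get $q(\comp\CS_u\mid Y_X=0)\le\gamma^2$, and combines via the same union bound. The only cosmetic difference is that you unpack Proposition~\ref{prop:cond} directly rather than citing it.
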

		
	\begin{prop}
		\label{eq:measurey}
		For $y \in \CT_b$, we have $q(y) \le \frac{1}{4\gamma}q(y|Y_X=0)$. Also, $q(y\in\CT|Y_X=0) \ge \gamma^2.$
	\end{prop}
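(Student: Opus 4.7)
My plan is to handle the two assertions in Proposition \ref{eq:measurey} separately, using Bayes' rule, the product structure of $q$, and the previously established Proposition \ref{eq:measurex}.

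For the first assertion, $q(y) \le \tfrac{1}{4\gamma}\, q(y|Y_X=0)$ for $y \in \CT_b$, I rewrite it in Bayes form as
\[ \frac{q(y|Y_X=0)}{q(y)} \;=\; \frac{q(Y_X=0|y)}{q(Y_X=0)},\]
so it suffices to prove $q(Y_X=0|y) \ge 4\gamma\, q(Y_X=0)$. Exploiting the product structure $q(xy) = q(x)q(y)$,
\[ q(Y_X=0|y) \;\ge\; q(X \in \CS)\cdot q(Y_X=0|y, X\in \CS).\]
The definition of $\CT_b$, together with the identity $\BE[\Bias] = 2 q(Y_X=0) - 1$, gives $q(Y_X=0|y, X\in \CS) \ge \tfrac12+\tfrac{\gamma}{4} \ge \tfrac12$. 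For the factor $q(X \in \CS)$, I invoke Proposition \ref{eq:measurex}: since $q(X \in \CS|Y_X=0) \ge 8\gamma$, we get $q(X\in \CS) \ge q(Y_X=0)\cdot q(X \in \CS|Y_X=0) \ge 8\gamma\, q(Y_X=0)$. Multiplying the two lower bounds yields the first assertion.

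For the second assertion, $q(y\in \CT|Y_X=0) \ge \gamma^2$, I decompose $\CT = \CT_b \cap \CT_u$ and use a union bound. Controlling the $\CT_u$ part is a direct application of Lemma \ref{lemma:entsupp} to $Y$ under $q(\cdot|Y_X=0)$: this distribution has entropy at least $m - \gamma^4 \log m$ (since $r \in \CG$), the support has size at most $2^m$, and taking the lemma's parameter to be $\gamma^2$ makes the threshold match the definition of $\CT_u$, giving $q(\comp{\CT_u}|Y_X=0) \le \gamma^2$. The heart of the argument is the $\CT_b$ part.

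For this I will first establish the unconditional bound $q(\CT_b) \ge \gamma/2$. The computation
\[\BE_{q(y)}\bigl[q(Y_X=0|y, X\in \CS)\bigr] \;=\; q(Y_X=0|X\in \CS) \;=\; \sum_{x\in \CS} q(x|X\in\CS)\, q(Y_x=0) \;\ge\; \tfrac12+\tfrac{\gamma}{2}\]
holds because every $x \in \CS \subseteq \CS_r$ satisfies $q(Y_x=0) \ge (1+\gamma)/2$. Applying Lemma \ref{lemma:avg} to this $[0,1]$-bounded random variable with $\alpha = \tfrac12+\tfrac{\gamma}{2}$ and $\beta = \tfrac12+\tfrac{\gamma}{4}$ then yields $q(\CT_b) \ge (\gamma/4)/(1 - \tfrac12 - \tfrac{\gamma}{4}) \ge \gamma/2$. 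To upgrade this to a bound under the conditional distribution, I reuse the first assertion: for each $y \in \CT_b$ we have $q(y|Y_X=0) \ge 4\gamma\, q(y)$, so summing over $y \in \CT_b$ gives $q(\CT_b|Y_X=0) \ge 4\gamma\, q(\CT_b) \ge 2\gamma^2$, and combining with the $\CT_u$ bound finishes the proof.

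The step I expect to require the most care is bootstrapping the linear-in-$\gamma$ lower bound $q(\CT_b) \ge \gamma/2$ up to the needed $\Omega(\gamma^2)$ lower bound after conditioning on $Y_X=0$: the very same tilt factor $4\gamma$ established in the first assertion is what amplifies $q(\CT_b)$ to $q(\CT_b|Y_X=0)$, so the two linear-in-$\gamma$ estimates combine multiplicatively rather than suffering an additive loss that would break the argument.
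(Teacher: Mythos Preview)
Your proof is correct and follows essentially the same route as the paper's: both arguments establish the first assertion by combining $q(Y_X=0\mid y,X\in\CS)\ge\tfrac12$ (from the definition of $\CT_b$) with the bound $q(X\in\CS\mid Y_X=0)\ge 8\gamma$ from Proposition~\ref{eq:measurex}, and both prove the second assertion by using Lemma~\ref{lemma:avg} to get $q(\CT_b)\ge\gamma/2$, amplifying via the first assertion to $q(\CT_b\mid Y_X=0)\ge 2\gamma^2$, and subtracting the $\gamma^2$ loss from Lemma~\ref{lemma:entsupp}. The only cosmetic difference is that the paper routes the first assertion through Proposition~\ref{prop:cond} and the intermediate identity $q(y)=q(y\mid X\in\CS)$, whereas you compute the Bayes ratio $q(Y_X=0\mid y)/q(Y_X=0)$ directly.
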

	
	Proposition \ref{eq:measurex} and \eqref{eqn:prob} then implies, $q(x\in\CS) \ge q(Y_X=0)q(x\in\CS|Y_X=0)\ge 128\gamma^2$, and since $\gamma \le \frac{1}{2^8}$, we get that
	\[q(x|X\in \CS) = \frac{q(x)}{q(x\in \CS)} \le \frac{2q(x|Y_X=0)}{q(x\in \CS)} \le \frac{m^{\gamma^2} 2^{1/{\gamma^2}}}{64\gamma^2 m} \le \frac{2^{2/{\gamma^2}}}{m^{1-\gamma^2}}.\]
	holds for every $x\in \CS$ which proves (b). 

	With a similar argument, using Proposition \ref{eq:measurey} we get that for $y\in\CT$, the following holds
	\[q(y|Y\in \CT) \le \frac{m^{\gamma^2}2^{1/{\gamma^2}}}{64\gamma^4 2^m} \le \frac{m^{\gamma^2}2^{2/{\gamma^2}}}{2^m}.\]
	which implies that $\BHc{q}(Y|Y\in\CT) \ge m - \gamma^2\log m - \frac{2}{\gamma^2}$ which gives (c).
\end{proof}

	Now we turn to proving Propositions \ref{eq:measurex} and \ref{eq:measurey}. 
	
	\begin{proof}[Proof of Proposition \ref{eq:measurex}]
		Recall that $q(Y_x=0) \ge \frac{1+\gamma}2$ for $x \in \CS_r$ and $q(x\in \CS_r) \ge 32\gamma$. Since $q(xy) = p(xy|r)$, using Proposition \ref{prop:cond}, it follows that $q(x) \le 2q(x|Y_X=0)$ for $x\in \CS_r$ and hence $q(x\in \CS_r|Y_X=0) \ge 16\gamma$. 

	The entropy bound $\BHc{q}(X|Y_X=0) \ge (1-\gamma^4)\log m$ implies that $q(x \in \CS_u|Y_X=0) \ge 1 - \gamma^2$ using Lemma \ref{lemma:entsupp}. 
	
	Since $\gamma \le \frac{1}{2^8}$, we get 
	\[q(x \in \CS|Y_X=0) \ge q(x\in \CS_r|Y_X=0) - q(x\in\comp\CS_u|Y_X=0) \ge 16\gamma - \gamma^2 \ge 8\gamma.\]
	\end{proof}

	\begin{proof}[Proof of Proposition \ref{eq:measurey}]
		Note that $\BE_{q(xy|Y=y,X\in \CS)}[\Bias] \ge \gamma/2$ is equivalent to saying that $q(Y_X=0|y,X\in\CS) \ge \frac12 + \frac{\gamma}{4}$. As $q(xy) = p(xy|r)$ using Proposition \ref{prop:cond}, it follows that $q(y|X\in\CS) \le 2q(y|X\in\CS,Y_X=0)$ for $y\in\CT_b$. Since $q(xy)$ is product, Bayes rule and Proposition \ref{eq:measurex} imply that for $y\in\CT_b$, the following holds
	\begin{align*}
		\ q(y)  = q(y|X\in\CS) & \le 2q(y|X\in\CS,Y_X=0) \\
		\     				 &=\frac{2q(y|Y_X=0)q(x\in\CS|y,Y_X=0)}{q(x\in\CS|Y_X=0)} \le \frac{1}{4\gamma}q(y|Y_X=0).
	\end{align*}
	
	By definition of $\CS$, $p(Y_X=0|X\in \CS) \ge \frac12 + \frac{\gamma}{2}$, or equivalently, $\BE_{q(xy|X\in \CS)}[\Bias] \ge \gamma$. Lemma \ref{lemma:avg} then implies that $q(y\in\CT_b) \ge \gamma/2$ and hence $q(y\in\CT_b|Y_X=0) \ge 4\gamma q(y\in\CT_b) \ge 2\gamma^2$. 
	
	 Also, since $\BHc{q}(Y|Y_X=0) \ge m-\gamma^4 \log m$, using Lemma \ref{lemma:entsupp} we have $q(y \in \CT_u|Y_X=0) \ge 1 - \gamma^2$.  Hence, we can conclude 
	\[q(y \in \CT|Y_X=0) \ge q(y\in\CT_b|Y_X=0) - q(x\in\comp\CT_u|Y_X=0) \ge \gamma^2.\]  
	\end{proof}

\section{Acknowledgments}

Thanks to Paul Beame, Siva Ramamoorthy, Anup Rao and Thomas Rothvo{\ss} for valuable discussions and feedback on the writing. Further thanks to Anup and Siva for help with the figures, and to anonymous referees for helpful comments.

\bibliographystyle{alpha}
{{\bibliography{matching}}}
\appendix

\section{Approximating the Matching Polytope}
\label{sec:upperboundmat}
We show that the polytope $Q_\eps(n)$ is a $(1+\eps)$ approximation for the matching polytope.

\begin{claim} 
	$P_M(n) \subseteq Q_\eps(n) \subseteq (1+\eps)P_M(n)$.
\end{claim}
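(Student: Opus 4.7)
The first inclusion $P_{M}(n) \subseteq Q_\eps(n)$ is immediate: $Q_\eps(n)$ is obtained from the complete facet description of $P_M(n)$ given by Edmonds by dropping some odd-set inequalities (those with $|U| > (1+\eps)/\eps$), so $P_M(n)$ satisfies all remaining constraints and is contained in $Q_\eps(n)$.

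For the second inclusion, I would show that if $x \in Q_\eps(n)$, then $y := x/(1+\eps)$ lies in $P_M(n)$, i.e., satisfies every constraint in Edmonds' description. Non-negativity of $y$ is clear, and the degree constraints $\sum_{e \text{ incident on } i} y_e \le 1/(1+\eps) \le 1$ follow from the corresponding constraints on $x$. Moreover, for any odd set $U$ with $|U| \le (1+\eps)/\eps$, the small odd-set constraints of $Q_\eps(n)$ yield
\[
\sum_{e \in E(U)} y_e \;=\; \frac{1}{1+\eps}\sum_{e \in E(U)} x_e \;\le\; \frac{|U|-1}{2(1+\eps)} \;\le\; \frac{|U|-1}{2},
\]
so the only remaining step is to handle odd sets $U$ with $|U| > (1+\eps)/\eps$.

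For such a large odd set $U$, I would use the degree constraints to derive a direct bound on $\sum_{e \in E(U)} x_e$ without any odd-set inequality: summing the degree constraints over $i \in U$, each edge in $E(U)$ is counted twice and each edge in $\delta(U)$ is counted once, giving
\[
2 \sum_{e \in E(U)} x_e + \sum_{e \in \delta(U)} x_e \;\le\; |U|,
\]
and hence $\sum_{e \in E(U)} x_e \le |U|/2$. Dividing by $1+\eps$, it remains to check the elementary inequality $|U|/(2(1+\eps)) \le (|U|-1)/2$, which rearranges to $|U| \ge (1+\eps)/\eps$ — precisely the regime we are in. This finishes the verification of all odd-set constraints for $y$, so $y \in P_M(n)$ and $x \in (1+\eps) P_M(n)$.

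No step here is particularly delicate; the only mildly tricky ingredient is realising that for large odd sets one should not try to approximate the dropped constraint by a retained one, but instead recover a $|U|/2$ bound directly from the degree constraints, since this bound becomes sharp enough once $|U|$ is comparable to $1/\eps$.
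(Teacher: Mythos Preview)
Your proof is correct and follows essentially the same route as the paper: both use the degree constraints to obtain $\sum_{e\in E(U)} x_e \le |U|/2$ for large odd sets and then observe that $|U|/2 \le (1+\eps)\,(|U|-1)/2$ precisely when $|U|\ge (1+\eps)/\eps$. The only cosmetic difference is that you first pass to $y=x/(1+\eps)$ and verify membership in $P_M(n)$, whereas the paper verifies directly that $x$ satisfies the scaled constraints of $(1+\eps)P_M(n)$.
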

\begin{proof} 
	First inclusion is trivial since any vector $x \in P_M(n)$ is also in $Q_{\eps}(n)$. To see the second inclusion, let $x \in Q_{\eps}(n)$ be an arbitrary vector. We only need to show that for any odd set $U \subseteq [n]$, $|U|> \frac{1+\eps}{\eps}$, $x(E(U)) \le (1+\eps)\frac{|U|-1}{2}$ since the other inequalities are already satisfied. Note that $x(E(U)) \le \frac{1}{2}\sum_{v \in U} x(\delta(v)) \le |U|/2 \le (1+\eps)\frac{|U|-1}{2}$ where the last inequality holds when $|U|\le \frac{1+\eps}{\eps}$.
\end{proof}

\section{Proofs of Preliminary Propositions and Lemmas}
\label{sec:prelimproofs}

\begin{proof}[Proof of Proposition \ref{prop:binent}]
	Define the function $\sg(x) = 1 - 2\log e\left(x - \frac12\right)^2 - \sh(x)$ for $x\in[0,1]$. Then, the first and second derivatives of $\sg(x)$ for $x \in [0,1]$ are 
	\[ \sg'(x) = -4 \log e\left(x-\frac12\right) - \log(1-x) - \log x \:\text{ and }\: \sg''(x) = -4\log e + \frac{\log e}{x(1-x)}.\]
As $x(1-x)\le \frac14$ for $x \in [0,1]$, $\sg''(x) \ge 0$ for $x\in [0,1]$. It follows that the function $\sg$ is convex on the interval $[0,1]$ and hence has a unique minima. Furthermore, the derivative $\sg'(\frac12) = 0$, so the minimum value of $\sg(x)$ is attained at $x=\frac12$. Hence, $\sg(x) \ge \sg(\frac12) = 0$.
\end{proof}

\begin{proof}[Proof of Proposition \ref{prop:nrank}]
	If $\nrank(A)=r$, then $A = \sum_{i=1}^r u_i^Tv_i$ where $u_i \in \BR_{\ge 0}^\CX$  and $v_i \in \BR_{\ge 0}^\CY$ are non-negative (column) vectors. Then, $p(xy)$ can be expressed as a convex combination of $r$ product distributions by setting
	\[ p(xy|r) = \frac{u_r(x)v_r(y)}{\sum_{x',y'} u_r(x')v_r(y')} \text{ and } p(r) = \frac{\sum_{x',y'} u_r(x')v_r(y')}{\sum_{x',y'}M_{x'y'}}.\]
\end{proof}

\begin{proof}[Proof of Lemma \ref{lemma:directsum}]
	Using the chain rule,
	\begin{align*}
		\ \sum_{i=1}^n \Infc{X_i}{R}{X_{<i}Y_{\ge i}} \le \sum_{i=1}^n \Infc{X_iY_{<i}}{R}{X_{<i}Y_{\ge i}} &= \sum_{i=1}^n \Infc{X_i}{Y_{<i}}{X_{<i}Y_{\ge i}} + \sum_{i=1}^n \Infc{X_i}{R}{X_{<i}Y}\\
		\								&= \sum_{i=1}^n \Infc{X_i}{R}{X_{<i}Y} = \Infc{X}{R}{Y}.
	\end{align*}
	where the second last equality follows since $\Infc{X_i}{Y_{<i}}{X_{<i}Y_{\ge i}} = 0$. The second bound follows similarly.
\end{proof}

\begin{proof}[Proof of Lemma \ref{lemma:intfam}]
	In case $|\cup_{\CF \in \FF} \CF|\le 4$, the size of the family $\FF$ is bounded by $\binom{4}{3}=4$. We will prove that if $|\cup_{\CF \in \FF} \CF|=5$, then size of $\FF$ can at most be $3$. Take sets $\CF_1, \CF_2, \CF_3$ such that $|\CF_1 \cup \CF_2|=4$ while $|\CF_1 \cup \CF_2 \cup \CF_3|=5$. Then, $\CF_1 \cap \CF_2 \subseteq \CF_3$. But now there can be no more sets in the family because such a set must intersect $\CF_1 \cap \CF_2$ in just one element but in this case it will intersect with one of $\CF_1, \CF_2$ or $\CF_3$ in exactly one element.     
\end{proof}

\end{document}